\newtheorem{theorem}{Theorem}[section]
\newtheorem{corollary}{Corollary}[section]
\newtheorem{lemma}[theorem]{Lemma}
\newtheorem{proposition}[theorem]{Proposition}
\newtheorem{assumption}{Assumption}
\newtheorem{definition}{Definition}
\newtheorem{example}{Example}[section]
\newcommand{\bs}[1]{\boldsymbol{#1}}
\newcommand{\Na}{{N_{\edits{\rm a}}}}
\newcommand{\No}{{N_{\edits{\rm o}}}}
\newcommand{\Zz}{\bs{Z}}
\newcommand{\R}[1]{\mathbb{R}^{#1}}
\newcommand{\edits}[1]{{\color{black} #1}}
\def\BibTeX{{\rm B\kern-.05em{\sc i\kern-.025em b}\kern-.08em
    T\kern-.1667em\lower.7ex\hbox{E}\kern-.125emX}}
\begin{document}

\title{Multi-topic belief formation through bifurcations over signed social networks}
\author{Anastasia Bizyaeva, \IEEEmembership{Member, IEEE}, Alessio Franci, and Naomi E. Leonard, \IEEEmembership{Fellow, IEEE}
\thanks{This work was supported in part by ONR grant N00014-19-1-2556, ARO grant W911NF-18-1-0325, and by  
NSF Graduate Research Fellowship DGE-2039656.}
\thanks{A.B.  with  NSF AI Institute in Dynamic Systems and  Dept. of Mechanical Engineering, University of Washington, Seattle, WA, 98195 USA;  anabiz@uw.edu. 
A.F. with  Dept. of Electrical Engineering and Computer Science at the University of Liège, Belgium, and with the WEL Research Institute, Wavre, Belgium;  afranci@uliege.be.
N.E.L. with  Dept.
of Mechanical and Aerospace Engineering, Princeton University, Princeton,
NJ, 08544 USA;  naomi@princeton.edu.}
}

\maketitle

\begin{abstract}
We propose and analyze a nonlinear dynamic model of continuous-time multi-dimensional belief formation over signed social networks. Our model accounts for the effects of a structured belief system, self-appraisal, internal biases, and various sources of cognitive dissonance posited by recent theories in social psychology. We prove that \edits{agents become opinionated} as a consequence of a bifurcation. We analyze how the balance of social network effects in the model controls the nature of the bifurcation and, therefore, the belief-forming limit-set solutions. Our analysis provides constructive conditions on how multi-stable network belief equilibria and belief oscillations emerging at a belief-forming bifurcation depend on the communication network graph and belief system network graph. Our model and analysis provide new theoretical insights on the dynamics of social systems and a new principled framework for designing decentralized decision-making on engineered networks in the presence of structured relationships among alternatives.
\end{abstract}

\begin{IEEEkeywords}
Multi-agent systems, networked control systems, nonlinear systems, belief dynamics, belief system, opinion dynamics, bifurcation, signed networks, collective decision-making
\end{IEEEkeywords}

\section{Introduction}

The study of belief dynamics has a long history of inquiry across disciplines.
In social science, belief formation models are built to explore belief change in individuals and collective phenomena including the spread of social norms,  outcomes of elections, and  emergence of consensus or polarization in societies. In the study of collective animal behavior, similar models are deployed to understand the outcomes of social decisions in animal groups.
Belief formation models are also prominent in control systems and robotics, where they inspire the design of new protocols for networked consensus formation, multi-agent decision-making, navigation, task scheduling, control of networks, and human-robot interaction \cite{olfati2004consensus,montes2010opinion,montes2011majority,altafini2012consensus,HaiminCDC2023,cathcart2022opinion,leung2023leveraging,paine2023model}. 

Recent work in social psychology points out an important limitation of belief formation models \cite{galesic2021human,dalege2023networks}. Existing models either focus on the internal dynamics of beliefs of individuals without accounting for social network effects or they focus on the dynamics of social networks without incorporating meaningful internal factors, such as the nature of interpersonal relationships and the effects of cognitive dissonance. Emerging evidence suggests instead that belief formation should be viewed through a framework that synthesizes internal and external effects, with individual beliefs that are ``embedded in a multidimensional, self-sustaining system of mental representations and shaped and reinforced continuously in the social interactions people have in their communities'' \cite{vlasceanu2022network}. Brain imaging studies also provide evidence that social decisions about alternatives are made ``through integrating multiple types of inference about oneself, others, and environments, processed in distinct brain modules'' \cite{suzuki2015neural}.

A similar shift in perspective is happening in the collective animal behavior literature. Traditionally, studies have focused on the role of external interaction networks in shaping the behavior of the group \cite{strandburg2013visual,rosenthal2015revealing}. In contrast, recent studies present evidence that internal cognitive processes play a critical role in shaping the outcomes of animal collective decisions, e.g., during spatial movement \cite{sridhar2021geometry,oscar2023simple}. These studies suggest that realistic models of collective animal behavior should account for the internal dynamics of cognitive representations in addition to the external social and environmental factors.

In this paper we take a step forward towards a mathematical understanding of the interplay between internal and external factors in shaping collective belief formation. To do this \edits{we present and analyze a nonlinear dynamic model of collective belief formation on multiple alternatives that generalizes the opinion dynamics model of \cite{bizyaeva2022},\cite{franci2022breaking}}. To account for external interactions our model includes a social network on which agents can have cooperative and antagonistic relationships. To account for internal factors in individual cognition our model incorporates internal biases and networked relationships between internal belief representations. The coupling between social relationships and a structured belief system is the source of various forms of cognitive dissonance in our model.

Formal models of belief formation outside of the statistical physics literature are predominantly linear. Classically, belief formation is modeled as a linear discrete-time DeGroot consensus process in which cooperative agents update scalar beliefs via local averaging \cite{degroot1974reaching}. Common variants of consensus dynamics include its continuous-time extension and the inclusion of time delays, time-varying communication networks, and antagonistic relationships between agents \cite{olfati2004consensus,altafini2012consensus}. To account for multiple simultaneously evolving beliefs, a number of multidimensional consensus models have recently been explored in discrete-time \cite{friedkin2016network,parsegov2016novel,ye2019consensus,he2022opinion} and continuous-time \cite{pan2018bipartite,nedic2019graph,ye2020continuous,ahn2020opinion,Liu2021interplay,wang2022characterizing,he2023opinion}. Unlike their scalar counterparts, these models explicitly consider the simultaneous evolution of multiple beliefs within the agents, and thereby account for some internal interdependence of belief representations. However, most existing  multi-belief models are still linear and grounded in the local averaging assumption of DeGroot. A number of nonlinear models of belief or opinion dynamics have also been considered in the literature, e.g. \cite{lorenz2007continuous,dandekar2013biased,Gray2018,xia2020analysis,Liu2021interplay,Wang2022,mei2022micro,shrinate2022nonlinear}. However, these models, excepting \cite{Liu2021interplay}, only account for scalar beliefs and do not capture rich internal systems of belief representations.

There is compelling evidence that realistic belief formation dynamics are nonlinear, which motivates our use of a multidimensional nonlinear model. At the individual level, nonlinear processing of perceptual evidence in the brain 
is important to the formation of beliefs and decision-making  
\cite{UM2001,bogacz2007extending}. At the group level, a recent analysis \edits{of} online social network \edits{data} suggests that beliefs cluster around several isolated, simultaneously stable equilibria, a dynamical behavior that cannot be reproduced by linear models \cite{introne2023measuring}. The formation of beliefs and decisions in isolated individuals and in social networks is  often linked to phase transitions in computational statistical physics (i.e. stochastic) models \cite{rodriguez2016collective,baumann2020modeling,siegenfeld2020negative,baumann2021emergence,sridhar2021geometry,brandt2021evaluating,gajewski2022transitions,li2022modeling,oscar2023simple,ojer2023modeling}. A deterministic analogue of a phase transition is a \textit{bifurcation}, i.e. a parameterized change in the qualitative behavior of a dynamical system. Bifurcations are nonlinear phenomena \edits{and are 
central to} our analysis of belief formation dynamics.

The following are the primary contributions of this work. 1) We \edits{adapt} a nonlinear multi-dimensional model of belief dynamics \edits{recently introduced in \cite{bizyaeva2022} to} account for structured relationships between options \edits{and heteoreneous signed social relationships}. We connect this model and its analysis to \edits{Networks of Beliefs Theory} recently proposed in social psychology \cite{dalege2023networks}. 2) We prove that agents \edits{become opinionated} when the model exhibits a bifurcation. The emergent beliefs on the network post-bifurcation are shaped by properties of the communication graph between agents, their shared belief system, and the balance of various social effects in the model. 3)  We prove sufficient conditions for the emergence of multi-stable belief equilibria and of belief oscillations, and describe the agents' beliefs in terms of spectral properties of communication and belief system graphs. 

\edits{Our contributions are complementary to 
\cite{bizyaeva2022},\cite{franci2022breaking}, 
which considered all-to-all coupling between belief dimensions and a homogeneously-signed communication graph. We introduce a graph structure between belief dimensions and allow heterogeneously-signed communication and belief graphs.  
}

In Section \ref{sec:background} we present definitions and  background on signed graphs. In Section \ref{sec:model} we define the belief formation model and connect the model to recent work in social psychology. In Section \ref{sec:lin} we establish conditions for the onset of an indecision-breaking bifurcation in the model. In Section \ref{sec:params} we analyze the effect of various model parameters on shaping and controlling solutions at the bifurcation. In Section \ref{sec:pitchfork} we derive necessary conditions for the onset of multiple equilibria in a pitchfork bifurcation, characterize the resulting steady-state solutions, and provide sufficient conditions for graph structures that support a pitchfork bifurcation. In Section \ref{sec:Hopf} we derive necessary conditions for the onset of belief oscillations in a Hopf bifurcation. We conclude in Section \ref{sec:final}. \edits{Unless stated otherwise, proofs are  in the Appendix.}

\section{Background \label{sec:background}}

\subsection{Notation and mathematical preliminaries}

For  $x = a + i b = r e^{i \phi} \in \mathbb{C}$, $\overline{x} = a - i b = r e^{-i \phi}$ is its complex conjugate, $|x| = \sqrt{x \overline{x}} = r$  its modulus, and $\operatorname{arg}(x)$  its argument $\phi$. For vectors $\mathbf{x},\mathbf{y} \in \mathbb{C}^n$, the inner product is $\langle \mathbf{x}, \mathbf{y} \rangle = \mathbf{x} \overline{\mathbf{y}}^T$.  The norm of $\mathbf{x} \in \mathbb{C}^{n}$ is $\| \mathbf{x} \| = \sqrt{\langle \mathbf{x}, \mathbf{x} \rangle}$. For $\mathbf{x} = (x_1, \dots, x_n)\edits{^T} \in \mathbb{R}^n$, the matrix $\operatorname{diag}\{\mathbf{x} \} \in \R{n \times n}$ is a diagonal matrix with $x_i$ in row $i$, column $i$. Define $\mathcal{I}_n \in \R{n \times n}$ as the identity matrix, $\mathbf{0}_n\in \R{n}$ as the zero vector,  and $\mathbf{1}_n \in \R{n}$ as the vector of ones.  We utilize two common matrix products: the Kronecker product, denoted by the symbol $\otimes$, and the element-wise Hadamard product, denoted by $\odot$. 
Given  vectors $\mathbf{v},\mathbf{w}$ or  matrices $M,N$, we say $\mathbf{v} \succ \mathbf{w}$ if $v_i > w_i$ for all $i$ and $M \succ N$ if $M_{ij} > N_{ij}$ for all $i,j$.

For a square matrix $A \in \mathds{R}^{n\times n}$, its spectrum is the collection of eigenvalues $\sigma(A) = \{\lambda_1, \dots, \lambda_{n} \}$ where $\lambda_i A = \lambda_i \mathbf{v}_i$ for some eigenvector $\mathbf{v}_i \in \R{n}$. The spectral radius of $A$ is $\rho(A)= \operatorname{max}\{|\lambda_i|, \ \lambda_i \in \sigma(A)\}$. The kernel of $A$ is $\mathcal{N}(A) = \{\mathbf{v} \in \mathds{R}^n \ \rm{s.t.} \ A \mathbf{v} = \mathbf{0} \}$. Two square matrices $A,B \in \R{n}$ are \textit{co-spectral} if $\sigma(A) = \sigma(B)$. An eigenvalue $\lambda \in \sigma(A)$ is a \textit{leading eigenvalue} of $A$ if $\operatorname{Re}(\lambda) \geq \operatorname{Re}(\mu)$ for all $\mu \in \sigma(A)$. A leading eigenvalue $\lambda$ is a \textit{dominant eigenvalue} if $\lambda = \rho(A)$.  $A$ has the \textit{strong Perron-Frobenius property} if its dominant eigenvalue $\lambda$ is unique, positive, $|\lambda_i| < \lambda$ for all $\lambda_i \neq \lambda$ in $\sigma(A)$, and it has a corresponding eigenvector $\mathbf{v} \succ \mathbf{0}$. $A$  is \textit{irreducible} if it cannot be transformed into an upper triangular matrix through similarity transformations. $A$ is \textit{eventually positive (eventually nonnegative)} if there exists a positive integer $k_0$ such that $A^k \succ 0_{N \times N}$ ($A^k \succeq 0_{N \times N}$) for all integers $k > k_0$. A \textit{permutation matrix} is a square matrix $P_n \in \{0, 1\}^{n \times n}$ formed by permuting the rows of the identity matrix $\mathcal{I}_n$. It has exactly one entry of 1 in each row and each column, with all other entries being zero, and it is an orthonormal matrix that satisfies $P_n P^T_n = P^T_n P_n = \mathcal{I}_n$.

\begin{proposition}[\textit{matrices with Perron-Frobenius property}]
 i) \cite[Theorem 2.2]{noutsos2006perron} The following statements are equivalent for a real square $n \times n$ matrix $A$: (1) $A$ and $A^T$ have the strong Perron-Frobenius property; (2) $A$ is eventually positive; (3) $A^T$ is eventually positive.
 ii) \cite[Theorem 8.4.4]{horn2012matrix} Suppose $A$ is a real, square, irreducible matrix with nonnegative entries, $A \succeq 0_{n \times n}$. Then $A$ has a unique dominant eigenvalue $\lambda = \rho(A)$ with a corresponding eigenvector $\mathbf{v} \succ \mathbf{0}$.
    \label{prop:PerFr} 
\end{proposition}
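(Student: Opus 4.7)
The plan is to treat the two parts separately, since they rely on different classical machinery. For part (ii), I would reduce to the strictly positive case via $B := (\mathcal{I}_n + A)^{n-1}$. Irreducibility and nonnegativity of $A$ imply that for every pair $(i,j)$ there exists $k \leq n-1$ with $(A^k)_{ij} > 0$, so that $B = \sum_{k=0}^{n-1}\binom{n-1}{k} A^k$ has strictly positive entries. Applying the classical Perron theorem for positive matrices to $B$ yields a simple, strictly dominant eigenvalue $\mu > 0$ with positive eigenvector $\mathbf{v} \succ \mathbf{0}$. Since $A$ and $B$ are polynomial in each other, they share this eigenvector, giving $A\mathbf{v} = \lambda \mathbf{v}$ with $(1+\lambda)^{n-1} = \mu$. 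A Collatz--Wielandt style argument, exploiting positivity of $\mathbf{v}$, then identifies $\lambda$ with $\rho(A)$ and establishes its algebraic simplicity.

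For part (i), I would prove the chain $(2) \Leftrightarrow (3)$ and $(2) \Leftrightarrow (1)$. The first equivalence is immediate from $(A^T)^k = (A^k)^T$. For $(2) \Rightarrow (1)$, choose $k_0$ with $A^{k_0} \succ 0$ and apply classical Perron--Frobenius to obtain a simple, strictly dominant eigenvalue $\mu > 0$ of $A^{k_0}$ with positive eigenvector $\mathbf{v}$. Because $\sigma(A^{k_0})$ consists of the $k_0$-th powers of elements of $\sigma(A)$ with shared generalized eigenspaces, one identifies a real $\lambda > 0$ with $\lambda^{k_0} = \mu$ and $A\mathbf{v} = \lambda \mathbf{v}$. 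Any competing $\mu' \in \sigma(A)$ with $|\mu'| \geq \lambda$ would yield $|\mu'|^{k_0} \geq \mu$, contradicting strict dominance of $\mu$ in $\sigma(A^{k_0})$. Applying the same argument to $A^T$, which is also eventually positive by (3), yields the strong Perron--Frobenius property for $A^T$.

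For $(1) \Rightarrow (2)$, I would use the spectral projection onto the Perron line. Let $\mathbf{v} \succ \mathbf{0}$ and $\mathbf{w} \succ \mathbf{0}$ be right and left Perron eigenvectors of $A$, the latter existing by the hypothesis on $A^T$. Since $\lambda = \rho(A)$ is simple and strictly dominant, the Jordan form gives
\[
\lambda^{-k} A^k \longrightarrow \frac{\mathbf{v}\mathbf{w}^T}{\mathbf{w}^T \mathbf{v}} \quad \text{as } k \to \infty,
\]
whose entries are all strictly positive, so $A^k \succ 0$ for all sufficiently large $k$. The main technical obstacle is the direction $(2) \Rightarrow (1)$: one must verify that the particular $\lambda$ solving $\lambda^{k_0} = \mu$ is real, positive, algebraically simple, and strictly dominant in $\sigma(A)$, even though the $k_0$-th-power map collapses many eigenvalues (including all $k_0$-th roots of unity multiples of $\lambda$) onto $\mu$. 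This pinpointing relies jointly on the positivity of the Perron eigenvector of $A^{k_0}$ and on the strict dominance of $\mu$ for the powered matrix.
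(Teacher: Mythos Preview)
The paper does not prove this proposition at all: it is stated as a background result, with part (i) attributed to \cite[Theorem 2.2]{noutsos2006perron} and part (ii) to \cite[Theorem 8.4.4]{horn2012matrix}, and no argument is supplied in the manuscript. So there is no ``paper's own proof'' to compare your proposal against.

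Your sketch is a reasonable reconstruction of the standard arguments behind those cited theorems. For part (ii), the reduction via $B=(\mathcal{I}_n+A)^{n-1}$ to the strictly positive case, followed by transfer of the Perron eigenvector back to $A$ using simplicity of $\mu$ and commutativity, is exactly the classical route in Horn--Johnson. For part (i), the equivalence $(2)\Leftrightarrow(3)$ is trivial, the direction $(1)\Rightarrow(2)$ via the limit $\lambda^{-k}A^k\to \mathbf{v}\mathbf{w}^T/(\mathbf{w}^T\mathbf{v})$ is clean, and you correctly flag the delicate step in $(2)\Rightarrow(1)$, namely pinning down that the relevant $k_0$-th root of $\mu$ is real, positive, simple, and strictly dominant. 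One small refinement you should make explicit: to force $\lambda>0$ (rather than merely $\lambda^{k_0}>0$, which allows $\lambda<0$ when $k_0$ is even), use that $A^{k_0+1}\succ 0$ as well, so $\lambda^{k_0+1}>0$ too; combining the two gives $\lambda>0$. With that addition the sketch is complete at the level of a proof outline.
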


\subsection{Signed graphs} The discussion of signed graphs presented here adapts the conventions from \cite{zaslavsky2013matrices,altafini2012consensus,belardo2019open}. A \textit{signed graph} $\mathcal{G} = (\mathcal{V}, \mathcal{E}, s)$ consists of a set of \textit{vertices} $\mathcal{V} = (1, \dots, n)$, a set of \textit{edges} $\mathcal{E} \subseteq \mathcal{V}\times \mathcal{V}$, and a \textit{signature function} $s: \mathcal{E} \to \{-1,1\}$. An edge $(i,j) \in \mathcal{E}$ begins at vertex $i$ and ends at vertex $j$. The entries of the \textit{signed adjacency matrix} $A \in \R{n\times n}$ corresponding to $\mathcal{G}$ are $a_{ij} = s(i,j)$ if $(i,j) \in \mathcal{E}$ and $a_{ij} = 0$ otherwise. All graphs in this paper are \textit{simple}, i.e. contain no self-loops $(i,i) \not\in \mathcal{E}$ for all $i \in \mathcal{V}$ and contain at most one edge $(i,j)$ that begins at $i$ and ends at $j$ for all $i,j \in \mathcal{V}$. We say a signed graph is \textit{undirected} if $(i,j) \in \mathcal
{E} \iff (j,i) \in \mathcal{E}$ and $s(i,j) = s(j,i)$ for all $(i,j) \in \mathcal{E}$, and it is \textit{directed} otherwise.  For a directed signed graph, a pair of edges sharing the same vertices $(i,j), (j,i) \in \mathcal{E}$ is a \textit{digon}. When $s(i,j) s(j,i) = 1$ for any digon in $\mathcal{E}$ the graph is \textit{digon sign-symmetric}, which means that edges between any two nodes share sign. An undirected signed graph is therefore a digon sign-symmetric graph with the property $A = A^T$.

Given a signed graph $\mathcal{G}$, its underlying unsigned graph $|\mathcal{G}|$ with adjacency matrix $|A|$ is obtained by setting $s(i,j) = 1$ for all $(i,j) \in \mathcal{E}$. A \textit{simple path} from $i$ to $j$ is a sequence of edges starting at vertex $i$ and ending at vertex $j$ that does not visit the same vertex twice, e.g. $(i,l), (l,m), \dots , (k,n), (n,j)$. The \textit{signature of a path} is the product of the signatures of all edges along the path, e.g. $s(i,l)s(l,m)\dots s(k,n) s(n,j)$. A \textit{cycle} is a simple path except that it starts and ends at the same vertex. We say $\mathcal{G}$ is \textit{strongly connected} if there exists a simple directed path in $\mathcal{E}$ from any vertex to any other vertex in $\mathcal{V}$. Equivalently, it is strongly connected if $|A|$ is irreducible. A signed graph $\mathcal{G}$ is \textit{structurally balanced} if it admits a bipartition of nodes $\mathcal{V}_1, \mathcal{V}_2$ such that $\mathcal{V}_1 \cup \mathcal{V}_2 = \mathcal{V}$ and $\mathcal{V}_1 \cap \mathcal{V}_2 = \emptyset$, with $s(i,j) = 1$ for all $(i,j) \in \mathcal{E}$ satisfying $i,j \in \mathcal{V}_q$, $q \in \{1,2\}$, and with $s(i,j) = -1$ for all $(i,j) \in \mathcal{E}$ satisfying $i \in \mathcal{V}_q$, $j \in \mathcal{V}_r$, $q,r \in \{1,2\}$, $q \neq r$. 

We can define a \textit{switching function} $\theta: \mathcal{V} \to \{-1,1\}$ for a given signed graph $\mathcal{G} = (\mathcal{V},\mathcal{E},s)$. This function partitions the vertex set $\mathcal{V}$ into the sets $\mathcal{V}^+$ and $\mathcal{V}^-$ such that $\theta(i) = 1 (-1)$ for all $i \in \mathcal{V}^+ (\mathcal{V}^-)$. We will say that all vertices in $\mathcal{V}^-$ are in the \textit{switching set}, or are being \textit{switched}. The switching function defines a \textit{switching transformation} that maps $\mathcal{G}$ to a new graph $\mathcal{G}' = (\mathcal{V},\mathcal{E}, s')$ that differs only in its signature from the original graph. The switched signature function $s':\mathcal{E} \to \{-1,1\}$ is generated through the relationship $s'(i,j) = \theta(i) s(i,j) \theta(j)$.
Switching a single vertex flips the signature of all edges in $\mathcal{E}$ that point into and out of that vertex. When $\mathcal{V}^-$ contains more than one vertex, a switching transformation flips the signature of all edges that connect vertices in $\mathcal{V}^-$ with those in $\mathcal{V}^+$, while the signature of edges between two vertices in $\mathcal{V}^+$ or in $\mathcal{V}^-$ remains unaltered. Whenever $\mathcal{G}$ and $\mathcal{G}'$ are related by a switching transformation, they are called \textit{switching equivalent}~\cite{zaslavsky1982signed,zaslavsky2013matrices}, or {\it gauge equivalent}~\cite{altafini2012consensus}.

A \textit{switching matrix} associated with a switching transformation is a diagonal matrix $\Theta  =\Theta^{-1} = \operatorname{diag}(\theta(1), \theta(2), \dots, \theta(n))$. The adjacency matrices of $\mathcal{G}$ and $\mathcal{G}'$ are similar, with $ A' = \Theta A \Theta$.
Two signed graphs $\mathcal{G}$, $\mathcal{G}'$ on $n$ vertices are \textit{isomorphic} if there exists an isomorphism between $| \mathcal{G}|$ and $|\mathcal{G}'|$ that preserves the signature of edges in $\mathcal{G}$ and $\mathcal{G}'$.  Formally, $\mathcal{G}$ and $\mathcal{G}'$ are isomorphic if there exists a permutation matrix $P_n$ such that $A' = P_n A P_n^T$. 
If $\mathcal{G}$ is switching equivalent to a graph that is isomorphic to $\mathcal{G}'$, we say the two graphs are \textit{switching isomorphic}. For two switching isomorphic graphs, the adjacency matrices are related by a similarity transformation generated by a switching matrix $\Theta$ and a permutation matrix $P_n$, with
\begin{equation} \label{eq:sim_isomorphism}
A' = P_n \Theta A \Theta P_n^T = \Theta P_n A P_n^T \Theta.
\end{equation}

A graph $\mathcal{G}$ is called \textit{bipartite} if it admits a bipartition of nodes $\mathcal{V}_1, \mathcal{V}_2$ such that $\mathcal{V}_1 \cup \mathcal{V}_2 = \mathcal{V}$ and $\mathcal{V}_1 \edits{\cap} \mathcal{V}_2 = \emptyset$, and for all $(i,j) \in \mathcal{E}$, $i \in \mathcal{V}_q$, $j \in \mathcal{V}_p$, $q,p \in \{1,2\}$ and $q \neq p$. A signed graph $\mathcal{G} = (\mathcal{V}, \mathcal{E}, s)$ is said to be \textit{sign-symmetric} if it is switching isomorphic to its negation $- \mathcal{G} = (\mathcal{V}, \mathcal{E}, -s)$. A bipartite signed graph is trivially sign-symmetric. \edits{See \cite{ghorbani2020sign} for} non-bipartite 
sign-symmetric undirected graphs. 

{

Next, we define two important classes of signed graphs: 

\begin{definition} \label{def:graph-classes}
    Consider a signed graph $\mathcal{G}$ on $n$ vertices. 
    
    \textit{Class I graph.}  $\mathcal{G}$ is switching isomorphic to $\mathcal{G}'$ with an eventually positive adjacency matrix $A'$. 
    
    \textit{Class II graph.} $\mathcal{G}$ is digon-symmetric, strongly connected, and structurally balanced, i.e. switching isomorphic to a $\mathcal{G}'$ with adjacency matrix $A' \succeq 0_{n \times n}$
\end{definition}

\edits{Graph classes I and II are not mutually exclusive.}  We illustrate the distinction between the two in the next example.

\begin{example}
    Consider graphs $\mathcal{G}_1$, $\mathcal{G}_2$ on 4 vertices with adjacency matrices 
    \begin{equation*} \small
        A_1 = \begin{pmatrix}
            0 &  -1 &  1 &  1 \\ 
        1 &  0 & -1 &  1 \\ 
        1 &  1 &  0 & 1\\
        1 & 1 &  1 &  0
        \end{pmatrix}, \ \ A_2= \begin{pmatrix}
        0 & 1 & 0 & 1 \\
        1 & 0 & 1 & 0 \\
        0 & 1 & 0 & 1 \\
        1 & 0 & 1 & 0
        \end{pmatrix}.
    \end{equation*}
    $A_1$ is eventually positive, as $(A_1)^6$ has all positive entries. However, it is not digon symmetric since $(A_1)_{12} = -1$ and $(A_1)_{21} = 1$. Thus, $\mathcal{G}_1$ and any graph $\mathcal{G}_1'$ that is switching isomorphic to $\mathcal{G}_1$ belongs to Class I of Definition \ref{def:graph-classes} but not to Class II. Meanwhile $A_2$ is not eventually positive, but it is digon-symmetric, irreducible, and structurally balanced. Thus, $\mathcal{G}_2$ and any graph $\mathcal{G}_2'$ that is switching isomorphic to $\mathcal{G}_2$ belongs to Class II but not to Class I.
    
\end{example}

In later sections of the \edits{paper}, signed graphs whose adjacency matrices have a simple dominant eigenvalue will be of particular interest in the belief formation problem. In the following Lemma we show that belonging to Class I or Class II of Definition \ref{def:graph-classes} is a sufficient condition for this.

\begin{lemma}[Graphs with simple dominant eigenvalue] \label{lem:dominant-eig}
Consider a signed graph $\mathcal{G}$ on $n$ vertices that is  Class I or Class II or both. 
Then the following statements are true: 

i) The adjacency matrix $A$ has a unique dominant eigenvalue satisfying $\lambda = \rho(A) > \operatorname{Re}(\lambda_i)$ for all $\lambda_i \neq \lambda$ in $\sigma(A)$. If $\mathcal{G}$ is Class I, then additionally $|\lambda| > |\lambda_i|$ for all $\lambda_i \neq \lambda$ in $\sigma(A)$; 

ii) 
There exist a positive vector $\mathbf{v}' \succ \mathbf{0}$, a switching matrix $\Theta$, and a permutation matrix $P_n$ such that $\mathbf{v} = \Theta P_n \mathbf{v}'$ is an eigenvector of $A$ corresponding to its dominant eigenvalue $\lambda$.
\end{lemma}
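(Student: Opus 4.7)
The plan is to reduce both parts of the lemma directly to Proposition~\ref{prop:PerFr} via the switching-isomorphism relation \eqref{eq:sim_isomorphism}. By Definition~\ref{def:graph-classes}, if $\mathcal{G}$ is Class I or Class II there is a switching matrix $\Theta$ and a permutation matrix $P_n$ so that $A' = P_n \Theta A \Theta P_n^T$ is, respectively, eventually positive or entrywise nonnegative. Since this is a similarity transformation, $A$ and $A'$ are cospectral, and an eigenvector $\mathbf{v}'$ of $A'$ at eigenvalue $\lambda$ pulls back to an eigenvector $\Theta P_n^T \mathbf{v}'$ of $A$ at the same eigenvalue; since the transpose of a permutation matrix is again a permutation matrix, relabeling gives the shape $\mathbf{v} = \Theta P_n \mathbf{v}'$ asserted in part~(ii).

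For Class I, Proposition~\ref{prop:PerFr}(i) tells us that $A'$ has the strong Perron--Frobenius property, so its dominant eigenvalue $\lambda = \rho(A') > 0$ is simple, strictly dominates every other $\lambda_i \in \sigma(A')$ in modulus (so $\operatorname{Re}(\lambda_i) \leq |\lambda_i| < \lambda$), and admits a Perron eigenvector $\mathbf{v}' \succ \mathbf{0}$. Both the strict real-part bound and the positive eigenvector claim of the lemma then follow at once. For Class II, I would first transfer strong connectedness of $\mathcal{G}$ to $\mathcal{G}'$: switching preserves the edge support and permutation preserves strong connectedness, so $|A'|$ is irreducible, and since structural balance gives $A' \succeq 0_{n\times n}$ we conclude $A' = |A'|$ is an irreducible nonnegative matrix. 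Proposition~\ref{prop:PerFr}(ii) then yields a simple dominant eigenvalue $\lambda = \rho(A')$ with positive Perron eigenvector $\mathbf{v}' \succ \mathbf{0}$.

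The one spectral statement not handed over directly in the Class II case is the strict real-part inequality $\lambda > \operatorname{Re}(\lambda_i)$ for all other $\lambda_i \in \sigma(A')$, since an imprimitive nonnegative matrix may carry peripheral eigenvalues of the same modulus. I would close this gap by a short contradiction: if $\operatorname{Re}(\lambda_i) \geq \lambda$ for some $\lambda_i \neq \lambda$, then $|\lambda_i| \geq \lambda = \rho(A')$ would force $|\lambda_i| = \operatorname{Re}(\lambda_i) = \lambda$, making $\lambda_i$ real and nonnegative and hence equal to $\lambda$, contradicting simplicity of the Perron root. I expect the only real obstacle to be notational bookkeeping, namely keeping the orientation of the switching/permutation similarity in \eqref{eq:sim_isomorphism} consistent so that the stated eigenvector shape $\Theta P_n \mathbf{v}'$ in part~(ii) emerges with the correct permutation factor; the spectral content itself is essentially an immediate corollary of Proposition~\ref{prop:PerFr}.
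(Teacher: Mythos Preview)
Your proposal is correct and follows essentially the same route as the paper: reduce both parts to Proposition~\ref{prop:PerFr} via the switching-isomorphism similarity \eqref{eq:sim_isomorphism}, using part~(i) for Class~I and part~(ii) for Class~II, and pull back the Perron eigenvector through the similarity for part~(ii). In fact you are more careful than the paper's terse argument, since you explicitly justify irreducibility of $A'$ in the Class~II case and supply the short contradiction showing $\lambda > \operatorname{Re}(\lambda_i)$ for peripheral eigenvalues, a step the paper leaves implicit.
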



Strongly connected graphs with all-positive edge signatures, strongly connected structurally balanced graphs, and graphs with an eventually positive adjacency matrix are all special cases of graphs satisfying Lemma \ref{lem:dominant-eig}, \edits{which} belong to Class I and/or to Class II. These graph properties are commonly linked to important features of opinion formation, both for linear and nonlinear models \cite{altafini2012consensus,altafini2014predictable,jiang2016sign,fontan2021role,bizyaeva2022switching,bizyaeva2022sustained,wang2020biased,Wang2022}.  

\section{Nonlinear belief formation dynamics \label{sec:model}}

\subsection{Model statement}

Agents may form beliefs about options in order to choose among them, for example, when mobile agents choose among alternative heading directions, resource-collecting agents choose among alternative patches, and voters choose among candidates in an election. Agents may also form beliefs about options that present as topics, for example, how strongly they support or reject each of a set of topics such as policy protocols or political issues. 
In any of these and other scenarios, network relationships may be important: the beliefs among the agents may be interdependent and the beliefs on the different options (or topics) may be logically interdependent.

We study a network of $N_{\edits{\rm a}}$ agents forming beliefs on $N_{\edits{\rm o}}$ options according to a nonlinear update rule introduced in \cite{bizyaeva2022, franci2022breaking,bizyaeva2022sustained}.
Let the belief state of agent $i$ be represented by the vector $\Zz_i = (z_{i1},z_{i2}, \dots, z_{i N_{\edits{\rm o}}})\edits{^T}  \in \R{N_{\edits{\rm o}}}$, where $z_{ij} \in \R{}$ is the \textit{belief} agent $i$ has about option $j$. In~\cite{franci2022breaking}, $z_{ij}$ is also interpreted as the \textit{value} agent $i$ assigns to option $j$.
Let $\Zz^\dag_j=(z_{1j},\ldots,z_{N_{\edits{\rm a}}j})\edits{^T} \in\R{N_{\edits{\rm a}}}$ represent the belief state of the network on option $j$. 
We say agent $i$ is \textit{neutral} when $\Zz_i = \mathbf{0}$ and \textit{opinionated} otherwise. 
\edits{We model} the strength of belief of agent $i$ about option $j$ as proportional to $|z_{ij}|$, with the agent \textit{favoring} option $j$ when $z_{ij} > 0$ and \textit{rejecting} option $j$ when $z_{ij} < 0$. 
The \textit{network state} 
represents the belief states of all agents, $\Zz = (\Zz_1\edits{^T} , \Zz_2\edits{^T} , \dots, \Zz_{N_{\edits{\rm a}}}\edits{^T} )\edits{^T} \in \R{N}$, 
$N = N_{\edits{\rm a}} N_{\edits{\rm o}}$.

There are two signed directed graphs that are fundamental to the belief formation process \edits{in our model}. 
The first is the \textit{communication graph} among agents, $\mathcal{G}_{\edits{\rm a}} = (\mathcal{V}_{a}, \mathcal{E}_{a},s_{\edits{\rm a}})$ where $\mathcal{V}_{\edits{\rm a}} = \{1, \dots, N_{\edits{\rm a}}\}$ is the vertex set corresponding to the $N_{\edits{\rm a}}$ agents, $\mathcal{E}_{\edits{\rm a}}$ is the edge set, and $s_{a}: \mathcal{E}_{\edits{\rm a}} \to \{-1,1\}$ is the signature of the communication graph $\mathcal{G}_{\edits{\rm a}}$. 
When $e_{ik} \in \mathcal{E}_{\edits{\rm a}}$, agent $k$ is a \textit{neighbor} of agent $i$ and the belief state of agent $k$ influences the belief formation of agent $i$. 
When $s_{\edits{\rm a}}(e_{ik}) = 1$, agent $i$ is \textit{cooperative} towards agent $k$, and whenever $s_{\edits{\rm a}}(e_{ik}) = -1$ it is \textit{competitive} or \textit{antagonistic} towards agent $k$. We assume that $\mathcal{G}_{\edits{\rm a}}$ is \textit{simple}, i.e. contains no self-loops $e_{ii} \not\in \mathcal{E}_{\edits{\rm a}}$ for all $i \in \mathcal{V}_{\edits{\rm a}}$, and there is at most one edge $e_{ik}$ in $\mathcal{E}_{\edits{\rm a}}$ that begins at vertex $i$ and ends at vertex $k$ for all $i,k \in \mathcal{V}_{\edits{\rm a}}$. 
The signed \textit{adjacency matrix} of the communication graph is the matrix $A_{\edits{\rm a}} \in \R{N_{\edits{\rm a}} \times N_{\edits{\rm a}}}$ whose entries are defined as $(A_{\edits{\rm a}})_{ik} = 0$ if $e_{ik} \not \in \mathcal{E}_{\edits{\rm a}}$ and $s_{\edits{\rm a}}(e_{ik})$ otherwise.

\begin{figure}
    \centering
    \includegraphics[width=0.9\columnwidth]{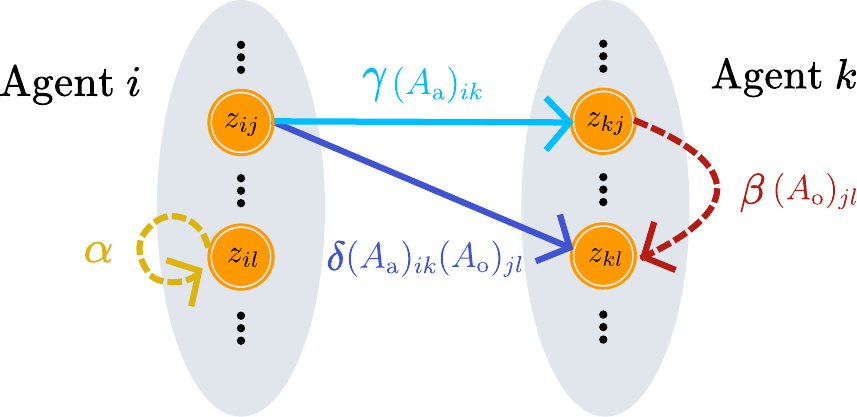}
    \caption{Four types of communication arrows associated to four different cognitive effects on belief formation in model \eqref{eq:dynamics_with_observation}. Arrow direction follows sensing convention.}
    \label{fig:comm-gains}
\end{figure}

The second fundamental graph \edits{in our model} is the \textit{belief system graph} $\mathcal{G}_{\edits{\rm o}} = (\mathcal{V}_{\edits{\rm o}}, \mathcal{E}_{\edits{\rm o}}, s_{\edits{\rm o}})$ that encodes the logical, psychological, or social constraints on the relationships between beliefs on different options \cite{converse2006nature}.  The nodes in the vertex set $\mathcal{V}_{\edits{\rm o}} = \{1, \dots, N_{\edits{\rm o}}\}$ correspond to distinct options (topics).  An edge $e_{jl} \in \mathcal{E}_{\edits{\rm o}}$ thereby signifies \edits{in our model} that formation of beliefs about option $j$ is affected by the beliefs about option $l$. The signature function $\sigma_{\edits{\rm o}}: \mathcal{E}_{\edits{\rm o}} \mapsto \{-1, 1\}$ describes whether two options are positively or negatively aligned according to the belief system \edits{in our model}. We also assume that $\mathcal{G}_{\edits{\rm o}}$ is simple, and we define the signed adjacency matrix $A_{\edits{\rm o}} \in \R{N_{\edits{\rm o}} \times N_{\edits{\rm o}}}$ whose entries are defined as $(A_{\edits{\rm o}})_{jl} = 0$ if $e_{jl} \not \in \mathcal{E}_{\edits{\rm o}}$ and $s_{\edits{\rm a}}(e_{jl})$ otherwise. We assume that all agents share a single global belief system. \edits{This assumption is meaningful from several perspectives. 
A shared belief system may arise when there are logic constraints between different dimensions of belief~\cite{friedkin2016network} or from the existence of shared social axioms, defined in \cite{leung2008psychological} to be ``generalized beliefs...encoded in the form of an assertion about two entities or concepts''. Such axioms may come from a shared political, cultural, organizational, or religious background of individuals in a group and shape collective behaviors in social systems~\cite{leung2008psychological}. Additionally, imposing a shared belief system is convenient to design collective behaviors in groups of artificial agents by reducing the number of design parameters.}

\edits{Let each agent $i$ form a vector $\mathbf{y}_i = (\mathbf{y}_{i k_1}, \dots, \mathbf{y}_{i k_{m_i}} )\edits{^T} \in \R{m_i N_{\edits{\rm o}}}$ of local observations of its $m_i$ neighbors' belief states, where 
$\mathbf{y}_{ik_l} 
= f_i(\Zz_{k_l})$, $l=1,\ldots,m_i$, with $f_i: \R{N_{\edits{\rm o}}} \to \R{N_{\edits{\rm o}}}$.} This formulation allows for noisy, partial, or otherwise imperfect information over the communication network. Agents' beliefs \edits{change} in continuous time according to
\begin{gather} 
   \dot{z}_{ij}\!=\!-d_i \, z_{ij} \!+\!b_{ij} \!+\! 
	u_i\left(  S_{1}\left( \alpha_i \, z_{ij} \!+\! \gamma_i \,  \textstyle\sum_{\substack{k=1 \\ k \neq i}}^\Na(A_{\edits{\rm a}})_{ik} \edits{(\mathbf{y}_{ik})_j}\right) \right.  \nonumber  \\
	\left. + \textstyle\sum_{\substack{l\neq j\\l=1}}^\No S_{2}\!\left( \beta_i \, (A_{\edits{\rm o}})_{jl} z_{il} + \delta_i \, (A_{\edits{\rm o}})_{jl}\textstyle\sum_{\substack{k = 1\\k \neq i}}^\Na (A_{\edits{\rm a}})_{ik} \edits{(\mathbf{y}_{ik})_l}\right)\right). \label{eq:dynamics_with_observation}  
\end{gather}
where $d_i>0$ is a damping coefficient that quantifies the agents' resistance towards \edits{becoming opinionated}, $b_{ij} \in \R{}$ is a $\textit{bias}$ of agent $i$ towards option $j$ or an \textit{input} to agent $i$ about the quality of option $j$, and $\alpha_i, \gamma_i, \beta_i, \delta_i \geq 0$ are the \edits{constant} gains that regulate the relative strengths of four cognitive effects within individual $i$, each corresponding to a distinct arrow type in Fig. \ref{fig:comm-gains}. The gain $\alpha_i$ reflects self-reinforcement or self-appraisal of beliefs within individual $i$.  The gains $\gamma_i, \beta_i, \delta_i$  reflect the relative balance of three distinct sources of cognitive dissonance within agent $i$. We explain these terms in more detail in the following section.

\edits{In our model} $\mathbf{b}_i = (b_{i1}, \dots, b_{i N_{\edits{\rm o}}})\edits{^T} $ is agent $i$'s bias vector and $\mathbf{b} = (\mathbf{b}_1\edits{^T} , \dots, \mathbf{b}_{N_{\edits{\rm a}}}\edits{^T} )\edits{^T} $ the network bias vector. The parameter $u_i \geq 0$ captures \edits{in our model} the \textit{attention} agent $i$ allocates to social interactions
or the \textit{urgency} it feels towards \edits{becoming opinionated}. 
The functions $S_m:\R{}\to[-k_{m1},k_{m2}]$, with $k_{m1},k_{m2} > 0$,  ${S}_m'(y) >0$, $m = 1,2$, saturate social information on each of the options, bounding the net effect of social influence of each topic on the formation of beliefs. We assume that  $S_m(y) = \hat{S}_m(y) + g_m(y)$ with $S_m(0) = 0$, $\hat{S}_m(-y) = - \hat{S}_m(y)$, $\hat{S}_m'(y) >0$, $\hat{S}_m'(0) = 1$,  $g_m(0) = g_m'(0) = 0$, and $g_m(-y) \neq -g_m(y)$, $m = 1,2$.  When $g_1(y) = g_2(y) = 0$, for all $y$, the functions $S_1,S_2$ are odd symmetric and agents are internally equally likely to favor or reject each option. When $g_m(y)\neq 0$, for some $y$, for $m=1$ and/or 2, it perturbs the odd symmetry and captures internal asymmetries in the formation of beliefs. The two saturations $S_1$,$S_2$ are distinct in \eqref{EQ:value_dynamics} to allow for potentially different bounds on the effects of social imitation (influence of beliefs along the same option dimension, $S_1$) and of the ideology (influence of beliefs across option dimensions captured, $S_2$). Alternative formulations are presented in Appendix.

For our analysis, we assume perfect observation by letting $\mathbf{y}_{ik} = f_i(\Zz_k) = \Zz_k$ for all neighboring pairs $i,k \in \mathcal{V}_{\edits{\rm a}}$. We also assume agents to be homogeneous by letting $d_i = d > 0$, $\alpha_i = \alpha \geq 0$, $\gamma_i = \gamma \geq 0$, $\beta_i = \beta \geq 0$, and $\delta_i = \delta \geq 0$ for all $i \in \mathcal{V}_{\edits{\rm a}}$. Note, however, that agents can be affected by heterogeneous biases or inputs $b_{ij}$ \edits{and their network interactions are heterogeneous}. 
Under these assumptions, model \eqref{eq:dynamics_with_observation} becomes
\begin{gather}
   \dot{z}_{ij}=-d \, z_{ij} + 
	u \left(  S_{1}\left( \alpha \, z_{ij} + \gamma \, \textstyle\sum_{\substack{k=1 \\ k \neq i}}^\Na(A_{\edits{\rm a}})_{ik} z_{kj}\right) \right.  \label{EQ:value_dynamics} \\
	\left. + \textstyle\sum_{\substack{l\neq j\\l=1}}^\No S_{2}\left( \beta \, (A_{\edits{\rm o}})_{jl} z_{il} + \delta \, (A_{\edits{\rm o}})_{jl}\textstyle\sum_{\substack{k = 1\\k \neq i}}^\Na (A_{\edits{\rm a}})_{ik} z_{kl}\right)\right)+b_{ij} \nonumber  
\end{gather}
or, equivalently in vector-matrix form,
\begin{multline}
    \dot{\Zz} = - d \Zz + u \mathbf{S}_1\big( ((\alpha \mathcal{I}_{N_{\edits{\rm a}}}  + \gamma A_{\edits{\rm a}} )\otimes \mathcal{I}_{N_{\edits{\rm o}}} )\Zz \big) \\
    + \textstyle\sum_{l = 1}^{N_{\edits{\rm o}}} \mathbf{S}_2\big( ((\beta \mathcal{I}_{N_{\edits{\rm a}}}+ \edits{\delta} A_{\edits{\rm a}}) \otimes M_{l}) \Zz \big) + \mathbf{b} \label{EQ:value_dynamics_vector}
\end{multline}
where $\mathbf{S}_{m}(\mathbf{y)} = (S_m(y_1),\dots, S_m(y_n))\edits{^T}$ for $\mathbf{y} \in \R{n}$, and $M_l \in \R{N_{\edits{\rm o}} \times N_{\edits{\rm o}}}$ is the matrix \edits{with all zeros entries except for column $l$ which is equal to} column $l$ of $A_{\edits{\rm o}}$. 

\edits{Models \eqref{eq:dynamics_with_observation} and \eqref{EQ:value_dynamics}} generalize the opinion dynamics and value formation models of \cite{bizyaeva2022} and \cite{franci2022breaking}. First, in both of the preceding works, the explicit notion of a belief system is absent and there is all-to-all, same-sign coupling between the options. In contrast, a structured belief system is central to the analysis we present here. Second, motivated by an interpretation of opinions on the simplex, \cite{bizyaeva2022} enforces the constraint, not enforced here, that agents' internal beliefs must sum to zero. Lastly, the social networks analyzed in \cite{bizyaeva2022} and \cite{franci2022breaking} were constrained to purely cooperative or purely antagonistic interactions between pairs of agents, whereas here we consider networks with potentially mixed social relationships. 

\edits{Our analysis focuses on the  model \eqref{EQ:value_dynamics}. We summarize the three simplifying assumptions underlying this model:\\ 
1) Agents subscribe to a global belief system, captured by $\mathcal{G}_{\rm o}$;\\
2) Agents have perfect observation of neighbors' beliefs over a social network, captured by $\mathcal{G}_{\rm a}$;\\
3) Agents are homogeneous in their experience of cognitive dissonance, i.e. have identical gains $d,\alpha,\beta,\gamma,\delta$.\\
}
\edits{These simplifying assumptions make our bifurcation analyses tractable. However, the bifurcation results  hold for the more general case with heterogeneous parameters in \eqref{eq:dynamics_with_observation}. Moreover, the bifurcation theory is robust in the sense that the results are robust to small uncertainties and noise, including small imperfections in observations as in  \eqref{eq:dynamics_with_observation}.} 

\subsection{Connection with Networks of Beliefs theory}

The belief formation model \eqref{eq:dynamics_with_observation} has a compelling interpretation in the Networks of Beliefs (NB) theory recently proposed in social psychology \cite{dalege2023networks} and built on three premises: 
\begin{enumerate}
    \item Beliefs can be represented as nodes in a network, with an \textit{internal} network that describes interactions of beliefs within an individual, and an \textit{external} network that describes interactions between individuals;
    \item Belief formation seeks to reduce \textit{dissonance} between one's internal beliefs and beliefs on the social network;
    \item \edits{Several distinct types of} dissonance contribute to the formation of beliefs; \edits{in our model we consider} \textit{personal}, \textit{social}, and \textit{external \edits{ideological}} dissonances. 
    Individuals can allocate different amounts of attention to each. 
\end{enumerate}
In our model, the belief system graph $\mathcal{G}_{\edits{\rm o}}$ describes the internal network and the graph product $\mathcal{G}_{\edits{\rm a}} \times \mathcal{G}_{\edits{\rm o}}$ with adjacency matrix $A_{\edits{\rm a}} \otimes A_{\edits{\rm o}}$ describes the external network in the sense of \cite{dalege2023networks}. NB theory also distinguishes between individuals' \textit{personal} beliefs about a topic and individuals' \textit{social} beliefs, i.e. their perceptions about the beliefs of their neighbors. In \eqref{eq:dynamics_with_observation}, the belief state $z_{ij}$ corresponds to the personal belief of agent $i$ on option $j$ and the vector of local observations $\mathbf{y}_{ik}$ is the set of social beliefs of agent $i$ about agent $k$.

The update of an agent's internal belief $z_{ij}$ in \eqref{eq:dynamics_with_observation} is proportional to  $- \big(z_{ij} -  \frac{u_i}{d_i} F_{ij}(\Zz_i, \mathbf{y}_i) \big)$ where $F_{ij}$ is shorthand for the nonlinear terms grouped by parentheses in \eqref{eq:dynamics_with_observation}. In the terminology of NB theory, the difference in the parentheses represents the magnitude of dissonance individual $i$ experiences due to inconsistencies between its personal belief on option $j$ and its other personal and social beliefs. For unbiased agent $i$, a belief $z_{ij}$  decreases whenever the dissonance term is positive, and increases whenever it is negative, thus minimizing dissonance until equilibrium is reached. 
For an unbiased agent, dissonance is exactly zero at any equilibrium of the model \eqref{eq:dynamics_with_observation}. 

The parameters $\beta_i, \gamma_i, \delta_i$ in \eqref{eq:dynamics_with_observation} are identified with effects of three distinct sources of dissonance within an individual:
\begin{itemize}
    \item $\beta_i$ is identified with  \textit{personal dissonance}, i.e., the inconsistency between an individual's personal belief $z_{ij}$ and all of its other personal beliefs, evaluated according to some logical, moral, or ideological compass encoded in the belief system graph $\mathcal{G}_{\edits{\rm o}}$;
    \item $\gamma_i$ is identified with \textit{social dissonance}, i.e., the inconsistency between an agent's personal belief $z_{ij}$ and its social beliefs $({\bf y}_{ik})_j$ (its perceptions about the beliefs of its neighbors on option $j$); 
    \item $\delta_i$ is identified with \textit{external \edits{ideological } dissonance}, i.e., the inconsistency between an agent's personal belief on option $z_{ij}$ and its social beliefs $({\bf y}_{ik})_l$ (its perceptions about the beliefs of its neighbors on all other options), evaluated according to some logical, moral, or ideological compass encoded in the belief system graph $\mathcal{G}_{\edits{\rm o}}$.
\end{itemize}
The magnitudes of $\beta_i, \gamma_i, \delta_i$ modulate the relative amount of attention allocated by the agents towards each of these effects. 

For an example of these effects, consider forming beliefs on two political topics: increasing the taxation rate for the wealthy and raising immigration rates. According to the conventional left-right political ideology spectrum, a set of beliefs that either supports or opposes both of these issues simultaneously is considered logically consistent. With this in mind, an individual who supports increased taxes but opposes raising immigration will experience \textit{personal} dissonance. Someone who supports increased taxes but perceives that their friends oppose increased taxation will experience \textit{social} dissonance. Finally, an individual who supports increased taxes but perceives that their friends oppose raising immigration will experience \textit{external \edits{ideological}} dissonance when updating their belief on taxation. 
  
NB theory is presented in \cite{dalege2023networks} alongside a computational model of social belief formation, which we refer to as the NB model. Our model is both distinct from and complementary to the NB model of \cite{dalege2023networks}. Firstly, the NB model adopts a statistical physics, i.e., stochastic dynamics, modeling perspective whereas our approach is deterministic. This aspect makes our model amenable to dynamical systems analysis, which we rely on to establish our theoretical results. Secondly, our model incorporates factors not included in the NB model, such as the self-appraisal of personal beliefs, parameterized by $\alpha_i$, and the impact of internal biases $b_{ij}$. Lastly, \edits{external ideological dissonance in our model is defined differently than the \textit{external dissonance} that appears in the NB model}. 
In the NB model, external dissonance arises from inconsistencies between individuals' social beliefs $\mathbf{y}_{ik}$ and the true personal beliefs of their neighbors $\Zz_k$. This definition requires agents to access $\Zz_k$ \textit{in addition} to forming the observation $\mathbf{y}_{ik}$ for each of their neighbors. 
\edits{In the model \eqref{eq:dynamics_with_observation} individuals assess whether their perception of their neighbors' beliefs} appear internally consistent, according to the belief system $\mathcal{G}_{\edits{\rm o}}$. 
Unlike the NB model, in our model agents can experience external dissonance even if their social belief $\mathbf{y}_{ik}$ accurately reflects the true personal belief $\Zz_k$ of their neighbor, \edits{as is the case for the analysis of \eqref{EQ:value_dynamics} we present.} Dissonance minimization as a driving mechanism of belief formation also motivated a recently proposed weighted median model of opinion dynamics \cite{mei2022micro}. Unlike our models \eqref{eq:dynamics_with_observation} \edits{and \eqref{EQ:value_dynamics}}, the weighted median model only considers scalar beliefs and does not account for sources of dissonance beyond social dissonance.

\section{Linear bifurcation analysis \label{sec:lin}}

\subsection{Belief-forming bifurcations}

In this section, we begin our analysis of the homogeneous belief formation model \eqref{EQ:value_dynamics}.The analysis we carry out in this paper draws on standard ideas from the theory of bifurcations in networks and coupled cell systems  \cite{golubitsky2009bifurcations,stewart2011synchrony,stewart2014synchrony,nijholt2019center,golubitsky2023dynamics}. 
We first consider a network of unbiased agents with $b_{ij} = 0$ for all $i \in \mathcal{V}_{\edits{\rm a}}$, $j \in \mathcal{V}_{\edits{\rm o}}$, in which case the neutral state $\Zz = \mathbf{0}$ is an equilibrium of the dynamics for all choices of parameters $d,u,\alpha,\gamma,\beta,\delta$. From~\eqref{EQ:value_dynamics_vector}, it easily follows that the Jacobian matrix at $\Zz = \mathbf{0}$ has the structure
\begin{multline}
    J (\mathbf{0},u) = (- d + u \alpha) \mathcal{I}_{N_{\edits{\rm a}}}\otimes \mathcal{I}_{N_{\edits{\rm o}}} + u \gamma A_{\edits{\rm a}} \otimes \mathcal{I}_{N_{\edits{\rm o}}}\\
    + u \beta \mathcal{I}_{N_{\edits{\rm a}}} \otimes A_{\edits{\rm o}} + u \delta A_{\edits{\rm a}} \otimes A_{\edits{\rm o}}. \label{eq:jac}
\end{multline}
The first term on the left hand side of \eqref{eq:jac} corresponds to the balance of resistance and self-appraisal, the second to social dissonance, the third to personal dissonance, and the fourth to external dissonance.  These terms reflect the four distinct coupling arrows identified in Fig. \ref{fig:comm-gains}. The following lemma shows how spectral properties of $J(\mathbf{0},u)$ depend on the social network and belief system graphs, and model parameters. 

\begin{lemma}[Jacobian spectrum \cite{bizyaeva2022sustained}] \label{prop:eigen}
The following hold for \eqref{eq:jac}. 1) For each $\eta \in \sigma\big(J(\mathbf{0},u)\big)$, there exists $\lambda \in \sigma(A_{\edits{\rm a}})$ and $\mu \in  \sigma(A_{\edits{\rm o}})$ so that 
\begin{equation} \eta = -d + u (\alpha +  \gamma \lambda + \beta \mu + \delta \lambda \mu):= \eta(u,\lambda,\mu); \label{eq:jac_eig}
\end{equation} 2) For each pair $\lambda_i \in \sigma(A_{\edits{\rm a}})$ with an eigenvector $\mathbf{v}_{a,i}$ and $\mu_j \in \sigma(A_{\edits{\rm o}})$ with an eigenvector $\mathbf{v}_{\edits{\rm o},j}$, the vector $\mathbf{v}_{a,i} \otimes \mathbf{v}_{\edits{\rm o},j}$ is an eigenvector of \eqref{eq:jac} with eigenvalue $\eta(u,\lambda_i,\mu_j)$. \label{lem:eigs}
\end{lemma}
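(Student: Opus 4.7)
The plan is to reduce the spectrum of $J(\mathbf{0},u)$ to those of $A_a$ and $A_o$ by repeated application of the mixed-product identity for Kronecker products, $(A\otimes B)(\mathbf{x}\otimes \mathbf{y}) = (A\mathbf{x})\otimes(B\mathbf{y})$. Each of the four summands in \eqref{eq:jac} has the form $(\text{scalar})(M_a\otimes M_o)$ with $M_a\in\{\mathcal{I}_{N_a},A_a\}$ and $M_o\in\{\mathcal{I}_{N_o},A_o\}$, so a single tensor vector $\mathbf{v}_{a,i}\otimes\mathbf{v}_{o,j}$ will diagonalize all four terms simultaneously.

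For statement 2), I would take eigenpairs $(\lambda_i,\mathbf{v}_{a,i})$ of $A_a$ and $(\mu_j,\mathbf{v}_{o,j})$ of $A_o$ and apply the mixed-product identity termwise to $J(\mathbf{0},u)(\mathbf{v}_{a,i}\otimes\mathbf{v}_{o,j})$. The four summands contribute the scalar factors $(-d+u\alpha)$, $u\gamma\lambda_i$, $u\beta\mu_j$, and $u\delta\lambda_i\mu_j$ respectively, whose sum is exactly $\eta(u,\lambda_i,\mu_j)$ as given in \eqref{eq:jac_eig}. This simultaneously establishes that $\mathbf{v}_{a,i}\otimes\mathbf{v}_{o,j}$ is an eigenvector and yields the stated eigenvalue.

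For statement 1), I would argue that the pairwise products $\eta(u,\lambda_i,\mu_j)$ already exhaust $\sigma(J(\mathbf{0},u))$. Since $A_a\otimes\mathcal{I}_{N_o}$ and $\mathcal{I}_{N_a}\otimes A_o$ commute and every summand in \eqref{eq:jac} is a polynomial in these two matrices, they admit a simultaneous Schur triangularization: there exists a unitary $U$ such that $U^*(A_a\otimes\mathcal{I}_{N_o})U$ and $U^*(\mathcal{I}_{N_a}\otimes A_o)U$ are both upper triangular with diagonals listing the $\lambda_i$ and the $\mu_j$ respectively in the matched orderings induced by the Kronecker structure of $\sigma(A_a\otimes\mathcal{I}_{N_o})$ and $\sigma(\mathcal{I}_{N_a}\otimes A_o)$. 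Then $U^*J(\mathbf{0},u)U$ is upper triangular with diagonal entries $\eta(u,\lambda_i,\mu_j)$, and these $N_aN_o$ values account, with algebraic multiplicity, for every eigenvalue of $J(\mathbf{0},u)$.

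The main obstacle is mild: if $A_a$ or $A_o$ were not diagonalizable one could not assemble a full Kronecker basis of eigenvectors, so the simultaneous-triangularization step is what is actually needed to conclude that no eigenvalue of $J(\mathbf{0},u)$ lies outside the list \eqref{eq:jac_eig}. In the diagonalizable case this step collapses to the observation that $\{\mathbf{v}_{a,i}\otimes\mathbf{v}_{o,j}\}_{i,j}$ is a basis of $\R{N_aN_o}$, and the argument is immediate. Overall I expect the proof to be short and to reference only the mixed-product identity and a standard commuting-family triangularization result.
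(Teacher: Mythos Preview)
Your argument is correct. The paper does not actually prove this lemma in-text; it simply cites Proposition~IV.1 of \cite{bizyaeva2022sustained}. Your route---apply the mixed-product identity termwise for part 2), and for part 1) simultaneously upper-triangularize the commuting matrices $A_a\otimes\mathcal{I}_{N_o}$ and $\mathcal{I}_{N_a}\otimes A_o$ (equivalently and most concretely, take separate Schur factorizations $A_a=Q_aT_aQ_a^*$, $A_o=Q_oT_oQ_o^*$ and set $U=Q_a\otimes Q_o$, so that $U^*J(\mathbf{0},u)U$ is upper triangular with diagonal entries $\eta(u,\lambda_i,\mu_j)$)---is the standard one and almost certainly what the cited reference does as well. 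Your observation that triangularization, rather than a tensor eigenbasis, is what is needed to cover non-diagonalizable $A_a$ or $A_o$ is correct and worth retaining.
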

\begin{proof}
See \cite[Proposition IV.1]{bizyaeva2022sustained}
\end{proof}
With Lemma \ref{lem:eigs} we have established a bijective correspondence between the Jacobian spectrum $\sigma(J(\mathbf{0}))$ and the product set of the adjacency spectra $\sigma(A_{\edits{\rm a}}),\sigma(A_{\edits{\rm o}})$. For every pair $\lambda \in \sigma(A_{\edits{\rm a}})$, $\mu \in \sigma(A_{\edits{\rm o}})$ we will say $\lambda, \mu$ \textit{generate} the Jacobian eigenvalue $\eta(u,\lambda, \mu)$ defined in \eqref{eq:jac_eig} for some constant choice of model parameters $d,\alpha,\beta,\gamma, \delta$. Furthermore, eigenspaces of $J(\mathbf{0})$ coincide with those of $A_{\edits{\rm a}} \otimes A_{\edits{\rm o}}$, independently of the choice of model parameters. These properties allow us to derive necessary and sufficient conditions for the emergence of a low-dimensional locally attracting invariant manifold in dynamics \eqref{EQ:value_dynamics} as the neutral equilibrium $\Zz = \mathbf{0}$ loses stability. This result generalizes \cite[Theorem IV.1]{bizyaeva2022} to mixed-sign communication and belief networks.

\begin{theorem}[Belief-forming bifurcation] \label{thm:stab} Consider model \eqref{EQ:value_dynamics} with $b_{ij} = 0$ for all $i \in \mathcal{V}_{\edits{\rm a}},j\in\mathcal{V}_{\edits{\rm o}}$ and let 
\begin{align}
\label{eq:Lambda}
    \Lambda = \underset{(\lambda , \mu) \in \sigma(A_{\edits{\rm a}}) \times \sigma(A_{\edits{\rm o}})} {\operatorname{argmax}}\gamma \operatorname{Re}(\lambda)+\beta \operatorname{Re}(\mu) + \delta \operatorname{Re}(\lambda \mu)\,.
\end{align}
Let $(\lambda , \mu)\in\Lambda$ and $\xi_{max} =  \alpha + \gamma \operatorname{Re}(\lambda) + \beta \operatorname{Re}(\mu) + \delta \operatorname{Re}(\lambda \mu)$, and assume $\xi_{max} > 0$.

1) The neutral equilibrium $\Zz = \mathbf{0}$  is locally exponentially stable whenever $0 \leq u < u^*$ and unstable for $u > u^*$, where
\begin{equation}
    u^* = \frac{d}{\alpha + \gamma \operatorname{Re}(\lambda) + \beta \operatorname{Re}(\mu) + \delta \operatorname{Re}(\lambda \mu)} = \frac{d}{\xi_{max}}. \label{eq:ustar}
\end{equation}

2)  Let $|\operatorname{ker}(J(\mathbf{0},u^*))| := k$. There exists a $(k+1)$-dimensional invariant center manifold $W^c \subset \mathbb{R}^{N_{\edits{\rm a}} N_{\edits{\rm o}} + 1}$ passing through $(\Zz,u) = (\mathbf{0},u^*)$, tangent to $\mathcal{N}(J)$ at $u = u^*$.
 All trajectories of \eqref{EQ:value_dynamics} starting at $(\Zz,u)$ near $(\mathbf{0},u^*)$ converge to $W^c$ exponentially as $t \to \infty$. \label{thm:bif}
\end{theorem}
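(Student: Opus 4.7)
I would reduce local exponential stability of $\Zz=\mathbf{0}$ to tracking the spectrum of $J(\mathbf{0},u)$ from~\eqref{eq:jac} as $u$ varies. By Lemma~\ref{lem:eigs}, every $\eta\in\sigma(J(\mathbf{0},u))$ has the form~\eqref{eq:jac_eig}, so
\[
\operatorname{Re}(\eta(u,\lambda,\mu)) = -d + u\,\xi(\lambda,\mu), \qquad \xi(\lambda,\mu) := \alpha + \gamma\operatorname{Re}(\lambda) + \beta\operatorname{Re}(\mu) + \delta\operatorname{Re}(\lambda\mu).
\]
The maximizer set $\Lambda$ in~\eqref{eq:Lambda} is precisely the $\operatorname{argmax}$ of $\xi$, with maximum value $\xi_{max}$. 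Since $u\geq 0$ and $\xi_{max}>0$, the eigenvalues with the largest real part are those generated by pairs $(\lambda,\mu)\in\Lambda$, and they cross the imaginary axis exactly at $u=u^*=d/\xi_{max}$. For $0\leq u<u^*$ every $\operatorname{Re}(\eta)<0$, giving local exponential stability of $\Zz=\mathbf{0}$ by the principle of linearized stability; for $u>u^*$ at least one $\operatorname{Re}(\eta)>0$, giving instability. This establishes part~1.

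\textbf{Plan for Part 2.} I would treat $u$ as a state variable by appending $\dot u=0$, producing an augmented $C^\infty$ vector field on $\mathbb{R}^{N_a N_o+1}$ (smoothness inherited from $S_1,S_2$) with equilibrium $(\Zz,u)=(\mathbf{0},u^*)$. The augmented Jacobian is block-triangular with spectrum $\sigma(J(\mathbf{0},u^*))\cup\{0\}$. By part~1, the non-trivial kernel of $J(\mathbf{0},u^*)$ is spanned by the Kronecker eigenvectors $\mathbf{v}_{a,i}\otimes\mathbf{v}_{o,j}$ provided by Lemma~\ref{lem:eigs}.2 for each $(\lambda_i,\mu_j)\in\Lambda$, giving dimension $k$. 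Together with the extra zero eigenvalue along the $u$ direction, the center eigenspace of the augmented system has dimension $k+1$, and the remaining $N_aN_o-k$ eigenvalues of $J(\mathbf{0},u^*)$ have strictly negative real part by construction of $\Lambda$. The standard parametric center manifold theorem then delivers a $(k{+}1)$-dimensional locally invariant $C^r$ manifold $W^c$ through $(\mathbf{0},u^*)$ tangent to this center eigenspace, with local exponential attraction of all nearby trajectories to $W^c$ provided by the uniform spectral gap to the stable subspace.

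\textbf{Expected obstacle.} The delicate point is matching the stated dimension $k+1$ with the geometric structure of the critical eigenvalues. If some pair $(\lambda,\mu)\in\Lambda$ yields $\eta(u^*,\lambda,\mu)$ with nonzero imaginary part, the corresponding critical mode lies on the imaginary axis but not in $\mathcal{N}(J(\mathbf{0},u^*))$, so ``center eigenspace'' and ``kernel'' no longer coincide. I would resolve this either by reading ``tangent to $\mathcal{N}(J)$'' as tangent to the full center eigenspace of the augmented Jacobian (the natural interpretation given that the Hopf case is taken up in Section~\ref{sec:Hopf}), or by restricting attention to the real case, which is automatic for example under digon sign-symmetry of both $A_a$ and $A_o$ via Lemma~\ref{lem:dominant-eig}. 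Once this identification is fixed, the rest of the proof is eigenvector bookkeeping via Lemma~\ref{lem:eigs} on top of the classical center manifold theorem.
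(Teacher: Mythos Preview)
Your proposal is correct and follows essentially the same approach as the paper: Part~1 via Lemma~\ref{lem:eigs} and linearized stability (the paper cites Lyapunov's indirect method from Khalil), Part~2 via the Center Manifold Theorem (the paper simply cites Guckenheimer--Holmes without spelling out the $\dot u=0$ augmentation). Your noted obstacle about kernel versus full center eigenspace in the complex-eigenvalue case is a genuine subtlety that the paper's terse proof glosses over and effectively defers to Section~\ref{sec:Hopf}.
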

There are several key takeaways from the analysis so far:

1) Unbiased agents that are weakly attentive to social interactions remain neutral and do not form \edits{become opinionated};

2) When network attention exceeds the critical value \eqref{eq:ustar}, non-neutral beliefs emerge as attracting solution sets;

3) The critical attention value $u^*$ is determined by the choice of model parameters and by the eigenvalue pairs of $A_{\edits{\rm a}}$ and $A_{\edits{\rm o}}$ contained in the set $\Lambda$ defined by \eqref{eq:Lambda}. 

A primary aim of this paper is to characterize the properties of solutions of model \eqref{EQ:value_dynamics} that emerge as a result of the bifurcation in which the neutral equilibrium loses stability as $u$ increases above $u^*$. We can already infer the role of the structure of the graphs $\mathcal{G}_{\edits{\rm a}}$, $\mathcal{G}_{\edits{\rm o}}$ in shaping these emerging beliefs. 
When the neutral equilibrium $\Zz$ loses stability, by Theorem \ref{thm:bif} we expect the belief trajectories $\Zz(t)$ to settle on the low-dimensional center manifold $W^c$. Generically, by Lemma \ref{lem:eigs}, the center manifold is given, to first order, by the eigenspace of $A_{\edits{\rm a}} \otimes A_{\edits{\rm o}}$ associated to eigenvalues in the set~\eqref{eq:Lambda}. 
A straightforward eigenvalue problem on $A_{\edits{\rm a}} \otimes A_{\edits{\rm o}}$ thus provides us information both about \textit{where} (i.e., at what parameter values) to expect a critical belief-forming bifurcation in the group, as well as \textit{what pattern} the agents' beliefs will take on post-bifurcation. We expand on these ideas more formally in upcoming sections by studying the onset of belief-forming equilibria and oscillations in model \eqref{EQ:value_dynamics}. 

\subsection{Effect of graph structure and internal and social parameters on belief-forming bifurcations \label{sec:params}} 

In this section we ask the question: how do the communication and belief system graphs jointly with the model parameters select the eigenstructure of $A_{\edits{\rm a}} \otimes A_{\edits{\rm o}}$ (and therefore of $J(\mathbf{0},u)$) that determines the linear bifurcation behavior of \eqref{EQ:value_dynamics}, i.e., which eigenvalues belong to the set $\Lambda$ defined in~\eqref{eq:Lambda}?

Let an eigenvalue pair $(\lambda,\mu)$ be a \textit{principal eigenvalue pair} if it belongs to $\Lambda$. Let an eigenspace of $A_{\edits{\rm a}} \otimes A_{\edits{\rm o}}$ be a \textit{principal eigenspace} if its associated eigenvalue pairs are principal, i.e. it is the leading eigenspace of $J(\mathbf{0},u)$.
Also let  $\lambda_{max} = \operatorname{max}_{\lambda_i \in \sigma(A_{\edits{\rm a}})} \{\operatorname{Re}(\lambda_i)\}$,  $\mu_{max}  = \operatorname{max}_{\mu_j \in \sigma(A_{\edits{\rm o}}) }\{\operatorname{Re}(\mu_j)\}$, and $(\lambda \mu)_{max} = \operatorname{max}_{\lambda_i \in \sigma(A_{\edits{\rm a}}), \mu_j\in \sigma(A_{\edits{\rm o}})}  \{ \operatorname{Re}(\lambda_i \mu_j) \}$. We define two sets of eigenvalues that turn out to be key in determining the principal eigenstructure of $A_{\edits{\rm a}}\otimes A_{\edits{\rm o}}$:

\begin{enumerate}
\item Let $\Lambda_{1}$ be a set of ordered pairs $(\lambda,\mu) \in \sigma(A_{\edits{\rm a}}) \times \sigma(A_{\edits{\rm o}})$ for which $\operatorname{Re}(\lambda) = \lambda_{max}$,  $\operatorname{Re}(\mu)  = \mu_{max}$. 

\item Let $\Lambda_{2}$ be a set of ordered pairs $(\lambda,\mu)\in \sigma(A_{\edits{\rm a}}) \times \sigma(A_{\edits{\rm o}})$ for which $\operatorname{Re}(\lambda \mu) =(\lambda \mu)_{max}$.
\end{enumerate}

An \textit{indecision-breaking bifurcation} in \eqref{EQ:value_dynamics} typically happens either along the product of leading eigenspaces of $A_{\edits{\rm a}},A_{\edits{\rm o}}$, or along the leading eigenspace of the graph product $A_{\edits{\rm a}} \otimes A_{\edits{\rm o}}$. The set $\Lambda_1$ is associated with the product of leading eigenspaces of $A_{\edits{\rm a}},A_{\edits{\rm o}}$ and its cardinality reflects the dimension of this eigenspace,  while the set $\Lambda_2$ is similarly associated with the leading eigenspace of $A_{\edits{\rm a}}\otimes A_{\edits{\rm o}}$. Before we formalize this observation, in the following lemma we characterize graphs for which the structure of the sets $\Lambda_1$ and/or $\Lambda_2$ is easily identified. Observe that $\Lambda_1,\Lambda_2$ are not necessarily disjoint and for many common choices of $\mathcal{G}_{\edits{\rm a}}$ and $\mathcal{G}_{\edits{\rm o}}$, $\Lambda_1 \subseteq \Lambda_2$. Let Class I and Class II graphs be defined as in Definition \ref{def:graph-classes}.

\begin{lemma}[Graph conditions]


     i) Suppose $\mathcal{G}_{\edits{\rm a}}$ and $\mathcal{G}_{\edits{\rm o}}$ are each either Class I or Class II; then $\Lambda_1 = \{(\lambda_{max},\mu_{max})\}$;

     ii) Suppose $\mathcal{G}_{\edits{\rm a}}$ and $\mathcal{G}_{\edits{\rm o}}$ are both Class I; then $\Lambda_2 = \Lambda_1 = \{(\lambda_{max},\mu_{max})\}$;

     iii) Suppose $\mathcal{G}_{\edits{\rm a}}$ and $\mathcal{G}_{\edits{\rm o}}$ are both Class II. If $\mathcal{G}_{\edits{\rm a}}$ and $\mathcal{G}_{\edits{\rm o}}$ are both undirected and sign-symmetric, then
     
     $\Lambda_2 =  \{ (\lambda_{max},\mu_{max}),(-\lambda_{max},-\mu_{max}) \}$;

     iv) Suppose $\mathcal{G}_{\edits{\rm a}}$ and $\mathcal{G}_{\edits{\rm o}}$ 
     are both Class II and both undirected. If the characteristic polynomial $\operatorname{det}(xI - A) = \sum_{i \in \mathcal{V}} c_i x^{n-i}$  for both $A_{\edits{\rm a}}, A_{\edits{\rm o}}$ satisfies $c_i = 0$ for any even $i$, the set $\Lambda_2 =  \{ (\lambda_{max},\mu_{max}),(-\lambda_{max},-\mu_{max}) \}$; otherwise, $\Lambda_2 = \Lambda_1 = \{ (\lambda_{max}, \mu_{max}) \}$.\label{lem:graph-cond}
\end{lemma}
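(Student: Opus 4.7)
The proof splits along the four sub-claims, and the common strategy is to couple the eigenvalue bijection $\sigma(A_a \otimes A_o) = \{\lambda\mu : \lambda \in \sigma(A_a),\, \mu \in \sigma(A_o)\}$ with Lemma~\ref{lem:dominant-eig} applied to $A_a$ and $A_o$ separately. For (i), Lemma~\ref{lem:dominant-eig}.i immediately gives that each of $A_a, A_o$ has a unique eigenvalue of maximum real part, namely $\lambda_{max} = \rho(A_a)$ and $\mu_{max} = \rho(A_o)$, so each is attained by a single eigenvalue and $\Lambda_1 = \{(\lambda_{max},\mu_{max})\}$.

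For (ii), I would lift the eventual positivity of the Class~I switching isomorphs $A_a'$ and $A_o'$ to the tensor product. Writing $A_a = T_a^T A_a' T_a$ and $A_o = T_o^T A_o' T_o$ with $T_\bullet = \Theta_\bullet P_\bullet$ orthogonal, the Kronecker identity $(P \otimes Q)(A \otimes B)(P \otimes Q)^T = (PAP^T) \otimes (QBQ^T)$ shows $A_a \otimes A_o$ is similar to $A_a' \otimes A_o'$. The identity $(A_a' \otimes A_o')^k = (A_a')^k \otimes (A_o')^k$, combined with the fact that a Kronecker product of positive matrices is positive, shows $A_a' \otimes A_o'$ is eventually positive; Proposition~\ref{prop:PerFr}.i then forces a unique eigenvalue of maximum modulus, which by the spectrum bijection must be $\lambda_{max}\mu_{max}$, uniquely attained by $(\lambda_{max},\mu_{max})$. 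This yields $\Lambda_2 = \Lambda_1$.

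For (iii) and (iv), undirectedness makes $A_a, A_o$ symmetric with real spectra, so $\operatorname{Re}(\lambda\mu) = \lambda\mu$ and the elementary bound $\lambda\mu \le |\lambda||\mu| \le \lambda_{max}\mu_{max}$ is saturated only when $|\lambda| = \lambda_{max}$, $|\mu| = \mu_{max}$, and $\lambda, \mu$ share a sign. Lemma~\ref{lem:dominant-eig}.i forces the only real eigenvalues of $A_a$ of modulus $\lambda_{max}$ to be $\pm\lambda_{max}$, and similarly for $A_o$, so the question reduces to which of $-\lambda_{max}, -\mu_{max}$ actually lies in the respective spectrum. Sign-symmetry in (iii) gives $\sigma(A_\bullet) = -\sigma(A_\bullet)$ directly, since $\mathcal{G}_\bullet$ being switching isomorphic to $-\mathcal{G}_\bullet$ makes $A_\bullet$ similar to $-A_\bullet$; the polynomial condition in (iv) achieves the same conclusion by forcing $p(x) = \det(xI - A)$ to be an even polynomial (up to an overall factor of $x$), whose roots come in $\pm$ pairs. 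In either symmetric case both negatives lie in the spectrum, producing the claimed two-element $\Lambda_2$. If the polynomial condition fails for at least one of $A_a, A_o$, the corresponding negative Perron root is absent, so the sign-matching step admits only the pair $(\lambda_{max},\mu_{max})$ and $\Lambda_2 = \Lambda_1$.

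The main obstacle I expect is twofold. First, in (ii), verifying that the switching-isomorphism similarity behaves correctly under tensor products so that eventual positivity genuinely transfers to $A_a \otimes A_o$'s similarity class requires care, though it ultimately reduces to the Kronecker identity above. Second, in (iv), nailing down the exact correspondence between the stated parity condition on the coefficients $c_i$ and the claim that $p(x)$ is even (modulo a factor of $x$) is a standard graph-spectral fact but easy to state with indices off by one; I would double-check using the canonical bipartite example before concluding.
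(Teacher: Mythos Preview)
Your proposal is correct, and parts i), iii), iv) match the paper's approach essentially line for line: Lemma~\ref{lem:dominant-eig} pins down the simple dominant eigenvalues, undirectedness forces real spectra, and sign-symmetry (in iii) or the parity condition on the characteristic polynomial (in iv) yields $\sigma(A_\bullet) = -\sigma(A_\bullet)$, after which the bound $\lambda\mu \le |\lambda||\mu| \le \lambda_{max}\mu_{max}$ finishes the job. Your caveat about the indexing of the $c_i$ is well taken; the paper simply cites the equivalence ``spectrum symmetric $\iff$ no even-degree terms'' and, like you, does not spell out the converse step (polynomial condition fails $\Rightarrow$ $-\rho \notin \sigma$), which for Class~II undirected graphs ultimately rests on the bipartite characterization of when $-\rho$ can appear.

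The one genuine methodological difference is part~ii). You lift eventual positivity to $A_a' \otimes A_o'$ via $(A_a' \otimes A_o')^k = (A_a')^k \otimes (A_o')^k$ and then invoke Proposition~\ref{prop:PerFr}.i on the tensor product to obtain a simple dominant eigenvalue. This is valid, but the paper takes a shorter route: Lemma~\ref{lem:dominant-eig}.i already records that for Class~I graphs $|\lambda_i| < \lambda_{max}$ for every $\lambda_i \neq \lambda_{max}$ (and likewise for $A_o$), so $|\lambda_i\mu_j| = |\lambda_i|\,|\mu_j| < \lambda_{max}\mu_{max}$ unless both factors are maximal, and $\operatorname{Re}(\lambda_i\mu_j) \le |\lambda_i\mu_j|$ gives $\Lambda_2 = \{(\lambda_{max},\mu_{max})\}$ immediately. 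Your tensor argument buys nothing extra here and makes the ``uniquely attained'' step slightly more delicate (you implicitly need that two distinct pairs would contribute algebraic multiplicity $\ge 2$ to the product eigenvalue, contradicting the simplicity furnished by strong Perron--Frobenius), whereas the direct modulus bound makes uniqueness automatic.
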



We now provide conditions for the principal eigenstructure $\Lambda$ to be easily characterized by $\Lambda_1$ or $\Lambda_2$.


\begin{proposition}
    Consider \eqref{EQ:value_dynamics} with communication graph $\mathcal{G}_{\edits{\rm a}}$ and belief system graph $\mathcal{G}_{\edits{\rm o}}$.
    
    1)  Suppose $\beta > 0, \gamma > 0$. There exists small $\varepsilon$ such that whenever $\delta < \varepsilon$,  $\Lambda = \Lambda_{1}$.
    
    2) Suppose $\delta > 0$. There exist small $\varepsilon_1$, $\varepsilon_2$ such that whenever $\gamma < \varepsilon_1$ and $\beta < \varepsilon_2$,  $\Lambda = \Lambda_2$. \label{prop:param-effects-1}
\end{proposition}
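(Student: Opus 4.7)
The plan is a compactness-based perturbation argument exploiting the finiteness of the product spectrum $\sigma(A_a)\times\sigma(A_o)$. Define the objective
\[
h(\lambda,\mu):=\gamma\operatorname{Re}(\lambda)+\beta\operatorname{Re}(\mu)+\delta\operatorname{Re}(\lambda\mu),
\]
so that $\Lambda=\operatorname{argmax} h$ over a finite set. Because this set is finite, any subobjective $h_0$ obtained by freezing some of the parameters at zero has a strictly positive spectral gap $\kappa>0$ between its argmax set $\Lambda_\star$ and every other pair. Using the uniform bounds $|\operatorname{Re}(\lambda)|\leq\rho(A_a)$, $|\operatorname{Re}(\mu)|\leq\rho(A_o)$, $|\operatorname{Re}(\lambda\mu)|\leq\rho(A_a)\rho(A_o)$, the remaining term of $h$ can be driven below $\kappa/2$ in supremum norm by choosing the corresponding parameter small enough, which forces $\Lambda\subseteq\Lambda_\star$.

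For Part 1, I take $h_0(\lambda,\mu)=\gamma\operatorname{Re}(\lambda)+\beta\operatorname{Re}(\mu)$. Since $h_0$ is additively separable in $\lambda$ and $\mu$, its maximum over the product spectrum is attained exactly on $\Lambda_1$, with value $M_0=\gamma\lambda_{max}+\beta\mu_{max}$. Let $\kappa:=M_0-\max\{h_0(\lambda,\mu):(\lambda,\mu)\notin\Lambda_1\}>0$ and set $K=\rho(A_a)\rho(A_o)$. Choosing $\varepsilon=\kappa/(2K)$, for every $\delta<\varepsilon$ and $(\lambda,\mu)\notin\Lambda_1$,
\[
h(\lambda,\mu)\leq M_0-\kappa+\delta K < M_0-\delta K\leq h(\lambda',\mu')
\]
for any $(\lambda',\mu')\in\Lambda_1$, so $\Lambda\subseteq\Lambda_1$. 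When $\lambda_{max}$ and $\mu_{max}$ are simple (as under the Class~I/II hypotheses of Lemma~\ref{lem:dominant-eig}), $\Lambda_1$ is a singleton and the equality $\Lambda=\Lambda_1$ is immediate.

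For Part 2, I mirror the argument with $h_0(\lambda,\mu)=\delta\operatorname{Re}(\lambda\mu)$, whose argmax is exactly $\Lambda_2$ with value $\delta(\lambda\mu)_{max}$. Defining the analogous gap $\kappa_2$ and choosing $\varepsilon_1,\varepsilon_2>0$ with $\varepsilon_1\rho(A_a)+\varepsilon_2\rho(A_o)<\kappa_2/2$, the same estimate (now with $\gamma\operatorname{Re}(\lambda)+\beta\operatorname{Re}(\mu)$ playing the role of the perturbation) yields $\Lambda\subseteq\Lambda_2$ whenever $\gamma<\varepsilon_1$ and $\beta<\varepsilon_2$. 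Equality holds whenever the perturbation does not further differentiate pairs inside $\Lambda_2$, e.g., whenever $\Lambda_2$ is a singleton as in Lemma~\ref{lem:graph-cond}.ii,iv.

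The main obstacle is that exact set equality can fail when $\Lambda_1$ or $\Lambda_2$ is non-singleton and the small perturbation breaks the internal tie; for instance, in the sign-symmetric Class~II case of Lemma~\ref{lem:graph-cond}.iii one has $\Lambda_2=\{(\lambda_{max},\mu_{max}),(-\lambda_{max},-\mu_{max})\}$ and any strictly positive $\gamma$ or $\beta$ differentiates the two points. Under the graph classes where Lemma~\ref{lem:graph-cond} delivers a singleton leading eigenstructure, the proposition holds exactly; otherwise the statement is most naturally read as specifying that the principal eigenspace of $J(\mathbf{0},u)$ is selected from within $\Lambda_1$ (respectively $\Lambda_2$), which is precisely what the gap argument above establishes.
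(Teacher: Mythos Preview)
Your argument is essentially the paper's own proof made quantitative: both set $\delta=0$ (resp.\ $\gamma=\beta=0$), observe that the separable objective is maximized exactly on $\Lambda_1$ (resp.\ $\Lambda_2$), and then invoke continuity over the finite product spectrum---the paper in one phrase (``by continuity of eigenvalues''), you via an explicit spectral-gap bound $\kappa$ and choice of $\varepsilon$.

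Your final paragraph is in fact sharper than the paper's treatment. The perturbation argument, whether phrased as continuity or as a gap estimate, only delivers $\Lambda\subseteq\Lambda_1$ (resp.\ $\Lambda\subseteq\Lambda_2$) for small positive parameter; your tie-breaking example drawn from Lemma~\ref{lem:graph-cond}.iii, where $\Lambda_2=\{(\lambda_{max},\mu_{max}),(-\lambda_{max},-\mu_{max})\}$ and any $\gamma,\beta>0$ singles out the first pair, shows that strict inclusion can occur. The paper's proof asserts equality without addressing this, so your reading---that full set equality is guaranteed precisely when $\Lambda_1$ or $\Lambda_2$ is a singleton (as holds under the Class~I/II hypotheses used in all downstream corollaries), and otherwise the result should be read as selecting the principal eigenspace from within $\Lambda_1$ or $\Lambda_2$---is the correct refinement.
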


In Proposition \ref{prop:param-effects-1} we have shown that when external dissonance is weak ($\delta$ is zero or small), then $\Lambda=\Lambda_1$, while when social dissonance and personal dissonance are weak ($\gamma$, $\beta$ are zero or small), then $\Lambda=\Lambda_2$. We now prove a similar result but in the converse limit: if social and/or personal dissonance (external dissonance) are (is) sufficiently strong, then $\Lambda=\Lambda_1$ ($\Lambda=\Lambda_2)$.

\begin{proposition}
\label{prop:param-effects-2}

Consider \eqref{EQ:value_dynamics} with $u > 0$ a communication graph $\mathcal{G}_{\edits{\rm a}}$, and a belief system graph $\mathcal{G}_{\edits{\rm o}}$.
Define the second largest real part of eigenvalues of $A_{\edits{\rm a}}$ and $A_{\edits{\rm o}}$ as $\lambda_2 = \max_{\lambda \in \sigma(A_{\edits{\rm a}}) s.t. \operatorname{Re}(\lambda) \neq \lambda_{max}} \{ \operatorname{Re}(\lambda) \}$, $\mu_2 = \max_{\mu \in \sigma(A_{\edits{\rm o}}) s.t. \operatorname{Re}(\mu) \neq \mu_{max} }\{ \operatorname{Re}(\mu) \}$.



1) 
There exists a critical value $K_c > 0$, with all other parameters held constant, such that if $\gamma (\lambda_{max} - \lambda_2) + \beta (\mu_{max} - \mu_2) > K_c$, then $(\lambda,\mu) \in \Lambda$ if and only if $\operatorname{Re}(\lambda) = \lambda_{max}$ and $\operatorname{Re}(\mu) =\mu_{max}$, i.e. $\Lambda = \Lambda_1$; 

2) 
There exists a critical value $\delta_c$, with all other parameters held constant, such that if $\delta > \delta_c$, then $(\lambda,\mu) \in \Lambda$ if and only if $\operatorname{Re}(\lambda \mu) = (\lambda \mu)_{max}$, i.e. $\Lambda = \Lambda_2$.

\end{proposition}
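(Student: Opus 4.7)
The plan is to recast both parts as perturbation arguments on the scalar objective $Q(\lambda,\mu) := \gamma \operatorname{Re}(\lambda) + \beta \operatorname{Re}(\mu) + \delta \operatorname{Re}(\lambda\mu)$, whose argmax over $\sigma(A_a) \times \sigma(A_o)$ is by definition the set $\Lambda$ in \eqref{eq:Lambda}. In Part 1, one wants the linear piece $\gamma \operatorname{Re}(\lambda) + \beta \operatorname{Re}(\mu)$ to dominate the bilinear piece $\delta \operatorname{Re}(\lambda\mu)$; in Part 2, the reverse. The main quantitative ingredients are the spectral gaps $\lambda_{max}-\lambda_2$, $\mu_{max}-\mu_2$ (for Part 1), the analogous gap in the product spectrum (for Part 2), together with the uniform bounds $|\operatorname{Re}(\lambda)|\le\rho(A_a)$, $|\operatorname{Re}(\mu)|\le\rho(A_o)$, and $|\operatorname{Re}(\lambda\mu)|\le\rho(A_a)\rho(A_o)$.

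For Part 1, I would fix any $(\lambda^*,\mu^*) \in \Lambda_1$ and any competitor $(\lambda,\mu) \notin \Lambda_1$, and write
\begin{equation*}
Q(\lambda^*,\mu^*) - Q(\lambda,\mu) = \gamma(\lambda_{max} - \operatorname{Re}(\lambda)) + \beta(\mu_{max} - \operatorname{Re}(\mu)) + \delta\bigl(\operatorname{Re}(\lambda^*\mu^*) - \operatorname{Re}(\lambda\mu)\bigr).
\end{equation*}
The first two summands are nonnegative; at least one is strictly positive, and each nonzero one is bounded below by $\lambda_{max}-\lambda_2$ or $\mu_{max}-\mu_2$, respectively. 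The third summand has absolute value at most $2\delta\rho(A_a)\rho(A_o)$. Taking $K_c := 2\delta\rho(A_a)\rho(A_o)$ then forces the right-hand side strictly positive under the hypothesis, yielding $\Lambda \subseteq \Lambda_1$; the reverse inclusion follows by running the same comparison with $(\lambda^*,\mu^*)$ and $(\lambda,\mu)$ swapped.

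For Part 2 the structure is dual. Define $(\lambda\mu)_2 := \max\{\operatorname{Re}(\lambda'\mu') : (\lambda',\mu')\in\sigma(A_a)\times\sigma(A_o),\, \operatorname{Re}(\lambda'\mu') < (\lambda\mu)_{max}\}$. For $(\lambda^*,\mu^*) \in \Lambda_2$ and $(\lambda,\mu) \notin \Lambda_2$, the same decomposition yields
\begin{equation*}
Q(\lambda^*,\mu^*) - Q(\lambda,\mu) \ge \delta\bigl((\lambda\mu)_{max} - (\lambda\mu)_2\bigr) - 2\gamma\rho(A_a) - 2\beta\rho(A_o),
\end{equation*}
so choosing $\delta_c := \bigl(2\gamma\rho(A_a) + 2\beta\rho(A_o)\bigr)/\bigl((\lambda\mu)_{max} - (\lambda\mu)_2\bigr)$ makes the right-hand side positive for all $\delta > \delta_c$, establishing $\Lambda = \Lambda_2$.

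The hard part will be the case analysis in Part 1, because the hypothesis is phrased as a single sum. If a competitor satisfies $\operatorname{Re}(\mu) = \mu_{max}$ but $\operatorname{Re}(\lambda) \neq \lambda_{max}$ (or vice versa), only one of the two linear-gain terms is active, and the crude bilinear bound $2\delta\rho(A_a)\rho(A_o)$ is not directly absorbed by the sum hypothesis. The natural remedy is to pick $(\lambda^*,\mu^*) \in \Lambda_1$ with $\mu^* = \mu$, which sharpens the bilinear error to $\mu_{max}(\lambda_{max}-\operatorname{Re}(\lambda))$ plus a controllable imaginary cross-term that $\gamma(\lambda_{max}-\lambda_2)$ absorbs once $\gamma$ is sufficiently large; a symmetric argument handles the dual mixed case. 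For graphs covered by Lemma \ref{lem:graph-cond}, where $\Lambda_1 = \{(\lambda_{max},\mu_{max})\}$ is a singleton with real leading eigenvalues, this subtlety disappears and the clean bound above applies directly.
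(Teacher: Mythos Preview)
Your decomposition of $Q(\lambda^*,\mu^*)-Q(\lambda,\mu)$ into a linear-gap piece and a bilinear-error piece is exactly the paper's approach, and for Part~2 your argument coincides with the paper's: the paper writes $\delta_c$ as the maximum over all competitor pairs of a ratio $h(\lambda,\mu,\tilde\lambda,\tilde\mu)$, which your spectral-radius bound $(2\gamma\rho(A_a)+2\beta\rho(A_o))/((\lambda\mu)_{max}-(\lambda\mu)_2)$ majorizes. For Part~1 the paper also takes $K_c$ to be a uniform bound on the bilinear error, namely $K_c=-\delta\min(\operatorname{Re}(\lambda\mu)-\operatorname{Re}(\tilde\lambda\tilde\mu))$, essentially your $2\delta\rho(A_a)\rho(A_o)$.

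The mixed-case difficulty you flag is precisely the delicate step. The paper handles it by asserting that for every competitor $(\tilde\lambda,\tilde\mu)\notin\Lambda_1$ the linear gain $\gamma(\lambda_{max}-\operatorname{Re}(\tilde\lambda))+\beta(\mu_{max}-\operatorname{Re}(\tilde\mu))$ is bounded below by $\gamma(\lambda_{max}-\lambda_2)+\beta(\mu_{max}-\mu_2)$, i.e.\ that the minimum of this gain over competitors is attained at a pair with both coordinates at the second-largest real part. But a competitor with $\operatorname{Re}(\tilde\lambda)=\lambda_{max}$ and $\operatorname{Re}(\tilde\mu)=\mu_2$ has linear gain only $\beta(\mu_{max}-\mu_2)$, strictly smaller than the claimed lower bound whenever $\gamma>0$, so your hesitation is well-founded: that inequality does not hold in general, and your coordinate-matching remedy does not close the gap in full generality either (take $\beta=0$ and $A_a$ with a complex leading pair; then no value of $\gamma$ forces $\Lambda=\Lambda_1$). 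Your closing observation is the right way out: when the leading eigenvalues of $A_a$ and $A_o$ are real, simple, and positive (the Class~I/II setting of Lemma~\ref{lem:graph-cond}), the matched bilinear term becomes $\delta\lambda_{max}(\mu_{max}-\operatorname{Re}(\tilde\mu))\ge 0$ in the mixed case, and both your argument and the paper's go through cleanly.
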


\edits{Propositions~\ref{prop:param-effects-1},~\ref{prop:param-effects-2} hold under the stated assumptions, thusincluding for graphs that are neither Class I or II.}

\subsection{Network of Beliefs theory interpretation \label{sec:NB_interp}}

This section interprets the results of Propositions~\ref{prop:param-effects-1},~\ref{prop:param-effects-2} in the context of the NB theory. A few key takeaways:\\
\textit{1) Self-appraisal tunes the amount of attention or urgency needed \edits{for agents to become opinionated}}. The parameter $\alpha$ is inversely proportional to the critical attention magnitude \eqref{eq:ustar}. When agents are strongly self-appraising, the belief-forming bifurcation occurs at a lower $u^*$.  Because $\alpha$ does not play any role in selecting the principal eigenspace along which belief formation happens, this is the primary effect of self-appraisal.\\
\textit{2) Social dissonance drives network beliefs towards the leading eigenspaces of $\mathcal{G}_{\edits{\rm a}}$ and $\mathcal{G}_{\edits{\rm o}}$.} By Propositions \ref{prop:param-effects-1}, \ref{prop:param-effects-2} when the social dissonance parameter $\gamma$ dominates, the eigenvalue set $\Lambda_1$ is principal for any choice of $\mathcal{G}_{\edits{\rm a}}$ and $\mathcal{G}_{\edits{\rm o}}$. In this case, the principal eigenspace of $J(\mathbf{0},u)$ is a product of the leading eigenspaces of $A_{\edits{\rm a}}$ and $A_{\edits{\rm o}}$. In light of this observation, we call a belief-forming bifurcation for which $\Lambda_1$ is the principal eigenvalue set \textit{social dissonance-driven}. \\
\textit{3) External dissonance reshapes belief formation but only on specific communication and belief system graphs.} By Propositions \ref{prop:param-effects-1}, \ref{prop:param-effects-2}, when the external dissonance parameter $\delta$ is large enough, the eigenvalue set $\Lambda_2$ is principal and the principal eigenspace of $J(\mathbf{0},u)$ coincides with the leading eigenspace of the graph product $A_{\edits{\rm a}} \otimes A_{\edits{\rm o}}$. Hence, whenever $\mathcal{G}_{\edits{\rm a}}$ and $\mathcal{G}_{\edits{\rm o}}$ are such that $\Lambda_2 \neq \Lambda_1$, a sufficiently large external dissonance weight can drastically change the belief-forming behavior for two reasons. First, the manifold along which the belief-forming bifurcation occurs will be different in the external dissonance-driven and the social dissonance-driven case. Second, the \textit{type} of bifurcation that occurs inside the bifurcation manifold will in general be different, e.g. whenever the two sets have different cardinalities. On the same communication and belief system graphs $\mathcal{G}_{\edits{\rm a}}$, $\mathcal{G}_{\edits{\rm o}}$, social dissonance and external dissonance-driven bifurcations in \eqref{EQ:value_dynamics} can lead to sharply different belief-forming behaviors.\\
\textit{4) Personal dissonance amplifies the effects of social dissonance.} Another important observation from Propositions \ref{prop:param-effects-1}, \ref{prop:param-effects-2} is that the effect of personal dissonance parameter $\beta$ is coupled with that of the social dissonance parameter $\gamma$, rather than with that of the external dissonance parameter $\delta$. Even when the social dissonance coupling $\gamma$ is weak, a strong level of personal dissonance $\beta$ will lead to a social dissonance-driven bifurcation shaped by the eigenvalue set $\Lambda_1$. Agents are more susceptible to social pressure when they are less secure internally about their personal identity with respect to the belief system $\mathcal{G}_{\edits{\rm o}}$. 
\begin{figure}
    \centering
    \includegraphics[width=0.8\linewidth]{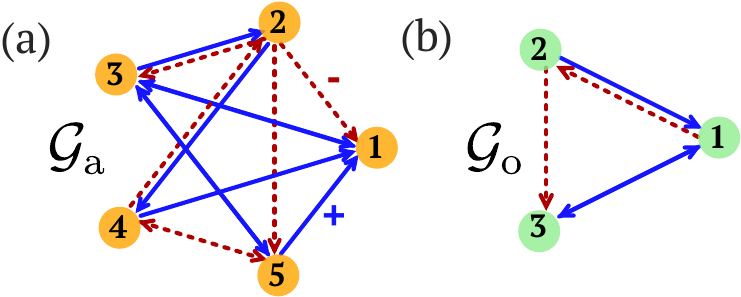}
    \caption{ (a) Communication graph $\mathcal{G}_{\edits{\rm a}}$ for five agents and (b) belief system graph $\mathcal{G}_{\edits{\rm o}}$ for three options with signed adjacency matrices \eqref{eq:adj_mat_ex}. Solid blue edges are positive connections and red dashed edges are negative connections. Arrow direction follows sensing convention.}
    \label{fig:graphs}
\end{figure}
\begin{example} \label{ex:GaGo}
Consider a network of 5 agents evaluating 3 options according to \eqref{EQ:value_dynamics}, with communication graph $\mathcal{G}_{\edits{\rm a}}$ and belief system graph $\mathcal{G}_{\edits{\rm o}}$ pictured in Fig. \ref{fig:graphs}. Then
\begin{equation} \small
    A_{\edits{\rm a}} = \begin{pmatrix}  0 &  0 &  1 &  0 &  0\\
       -1 &  0 & -1&  1& -1 \\
       1&  1&  0&  0&  1\\
       1& -1&  0&  0& -1\\
       1&  0&  1& -1&  0 \end{pmatrix}, \ 
       A_{\edits{\rm o}} = \begin{pmatrix} 0 & -1&  1 \\
        1&  0& -1\\
        1&  0&  0 \end{pmatrix}\! . \label{eq:adj_mat_ex}
\end{equation}
 Rounded to three decimal points, the eigenvalues of $A_{\edits{\rm a}}$ are $ \lambda_1 \approx 0.823, \  \lambda_{2,3} \approx 0.745 \pm 1.106i, \ \lambda_{4,5} \approx -1.157 \pm 0.327i $
and the eigenvalues of $A_{\edits{\rm o}}$ are $ \mu_1 = 1, \ \mu_{2,3} = - \frac{1}{2} \pm \frac{\sqrt{3}}{2}i$.
The set of leading eigenvalues of $A_{\edits{\rm a}},A_{\edits{\rm o}}$ is $\Lambda_1 = \{(\lambda_1,\mu_1)\}$  and the set of eigenvalue pairs that maximize the real part of the eigenvalue products of the two graphs is $\Lambda_2 = \{ (\lambda_{4},\mu_3), (\lambda_5,\mu_2) \}$. For belief formation dynamics \eqref{EQ:value_dynamics} defined on these graphs, this means that a social dissonance-driven bifurcation will happen along a one-dimensional manifold, and an external dissonance-driven bifurcation will happen along a different, two-dimensional manifold.
\end{example}

\section{Multi-stable belief-formation: pitchfork bifurcation \label{sec:pitchfork}} 

In this section we characterize what types of limit set solutions for dynamics \eqref{EQ:value_dynamics} arise at a belief-forming bifurcation. We focus here on the simplest, but also generic, case in which the leading eigenvalue of the Jacobian matrix \eqref{eq:jac} is real and simple. In light of Theorem~\ref{thm:bif}, this amounts to requiring that the set  of principal eigenvalues $\Lambda$ contains exactly one pair of real-valued eigenvalues of $A_{\edits{\rm a}}$ and $A_{\edits{\rm o}}$. 
In the following Theorem we prove that under this assumption, the dynamics \eqref{EQ:value_dynamics} exhibit a pitchfork bifurcation.

\begin{theorem}[Multi-option pitchfork bifurcation] \label{thm:pitchfork-No}
Consider \eqref{EQ:value_dynamics} with communication graph $\mathcal{G}_{\edits{\rm a}}$ and belief system graph $\mathcal{G}_{\edits{\rm o}}$, and suppose $\Lambda = \{ (\lambda_{\edits{\rm a}}, \mu_{\edits{\rm o}}) \}$ with $\lambda_{\edits{\rm a}},\mu_{\edits{\rm o}} \in \R{}$. 
Assume $K_0 := \alpha + \gamma \lambda_{\edits{\rm a}} + \beta \mu_{\edits{\rm o}} + \delta\lambda_{\edits{\rm a}} \mu_{\edits{\rm o}}   > 0$. 
Let $\mathbf{v}_{\edits{\rm a}},\mathbf{w}_{\edits{\rm a}} \in \mathbb{R}^{N_{\edits{\rm a}}}$ and $\mathbf{v}_{\edits{\rm o}},\mathbf{w}_{\edits{\rm o}} \in \mathbb{R}^{N_{\edits{\rm o}}}$ be the right and left eigenvectors of $A_{\edits{\rm a}}$ and $A_{\edits{\rm o}}$ corresponding to $\lambda_{\edits{\rm a}}$ and $\mu_{\edits{\rm o}}$, respectively, normalized to satisfy $\langle \mathbf{w}_{\edits{\rm a}},\mathbf{v}_{\edits{\rm a}}\rangle =1$, $\langle \mathbf{w}_{\edits{\rm o}},\mathbf{v}_{\edits{\rm o}}\rangle =1$. Let
$K_1 = \left(S_1'''(0) (\alpha\! +\!\gamma \lambda_{\edits{\rm a}})^3\! +\! S_2'''(0) (\beta \mu_{\edits{\rm o}} \! +\! \delta \lambda_{\edits{\rm a}}\mu_{\edits{\rm o}}  )^3 \right)\! \langle \mathbf{w}_{\edits{\rm a}}, \mathbf{v}_{\edits{\rm a}}^3 \rangle  \langle \mathbf{w}_{\edits{\rm o}},  \mathbf{v}_{\edits{\rm o}} ^3\rangle$ and $K_2 := \big(g_1''(0) (\alpha +\gamma \lambda_{\edits{\rm a}})^2 +  g_2''(0) (\beta \mu_{\edits{\rm o}}  + \delta \lambda_{\edits{\rm a}}\mu_{\edits{\rm o}})^2 \big)
\langle \mathbf{w}_{\edits{\rm a}}, \mathbf{v}_{\edits{\rm a}}^2 \rangle \langle \mathbf{w}_{\edits{\rm o}},  \mathbf{v}_{\edits{\rm o}} ^2\rangle \neq 0$ where $\mathbf{x}^2 = \mathbf{x} \odot \mathbf{x}$ and $\mathbf{x}^3 = \mathbf{x} \odot \mathbf{x} \odot \mathbf{x}$.

1) \textbf{Unbiased agents with symmetric response.} Suppose $S_1,S_2$ have odd symmetry ($g_1(x) = g_2(x) = 0$). At $(\Zz,u,\mathbf{b})=(\mathbf{0},u^*,\mathbf{0})$, the system undergoes a symmetric pitchfork bifurcation. All bifurcation branches are tangent at $(\Zz,u)=(\mathbf{0},u^*)$ to $\operatorname{span}\{\mathbf{v}_{\edits{\rm a}} \otimes \mathbf{v}_{\edits{\rm o}}\}$. If $ K_1 < 0 (> 0)$, 
the non-trivial equilibria $\Zz^*(u),-\Zz^*(u)\neq \mathbf{0}$ bifurcating from $\mathbf{0}$ exist for $u>u^*$ ($u<u^*$), $\lvert u-u^*\rvert$ sufficiently small, and are locally exponentially stable (unstable).


2) \textbf{Unbiased agents with asymmetric response.} 
Suppose $S_1,S_2$ are asymmetric ($g_1(x) \neq 0$ and/or $g_2(x) \neq 0$), and $|g_1''(0)|$, $|g_2''(0)|$ are sufficiently small. At $(\Zz, u, \mathbf{b}) = (\mathbf{0},u^*,\mathbf{0})$, the system undergoes a transcritical bifurcation. All bifurcation branches are tangent at $(\Zz,u)=(\mathbf{0},u^*)$ to $\operatorname{span}\{\mathbf{v}_{\edits{\rm a}} \otimes \mathbf{v}_{\edits{\rm o}}\}$. The non-trivial equilibria $\Zz_1^*(u)$,$\Zz_2^*(u) \neq \mathbf{0}$ exist for $|u - u^*|$ sufficiently small and satisfy  $\langle \mathbf{v}_{\edits{\rm a}} \otimes \mathbf{v}_{\edits{\rm o}} , \Zz_1^*(u) \rangle > 0$, $\langle \mathbf{v}_{\edits{\rm a}} \otimes \mathbf{v}_{\edits{\rm o}} , \Zz_2^*(u) \rangle < 0$. If $K_2 > 0 \ (< 0)$, $\Zz_1^*(u)$ ($\Zz_2^*(u)$) appears for $u < u^*$ and is unstable, while  $\Zz_2^*(u)$ ($\Zz_1^*(u))$ appears for $u > u^*$ and is locally exponentially stable.



3) \textbf{Biased agents.} Suppose $\mathbf{b} \neq \mathbf{0}$ and $\|\mathbf{b}\|$ is small. Whenever $\langle \mathbf{w}_{\edits{\rm a}} \otimes \mathbf{w}_{\edits{\rm o}},  \mathbf{b} \rangle > 0 \ (< 0)$, for sufficiently small $|u -u^*|$, the local bifurcation diagram of \eqref{EQ:value_dynamics} has a unique, locally exponentially stable equilibrium point $\Zz^*(u)$. This equilibrium satisfies $\langle \mathbf{v}_{\edits{\rm a}} \otimes \mathbf{v}_{\edits{\rm o}},  \Zz^*(u) \rangle > 0 \ (< 0)$.
\end{theorem} 

\edits{Note that the assumption $\Lambda = \{(\lambda_a,\mu_o)\}$ of Theorem \ref{thm:pitchfork-No} implies that $\delta>0$ and/or $\beta,\gamma>0$. Additionally, note that Theorem \ref{thm:pitchfork-No} holds for general signed graphs satisfying the stated assumptions, which includes graphs that are neither Class I or II, and some that are not strongly connected.}

\begin{figure}
    \centering
    \includegraphics[width=0.8\columnwidth]{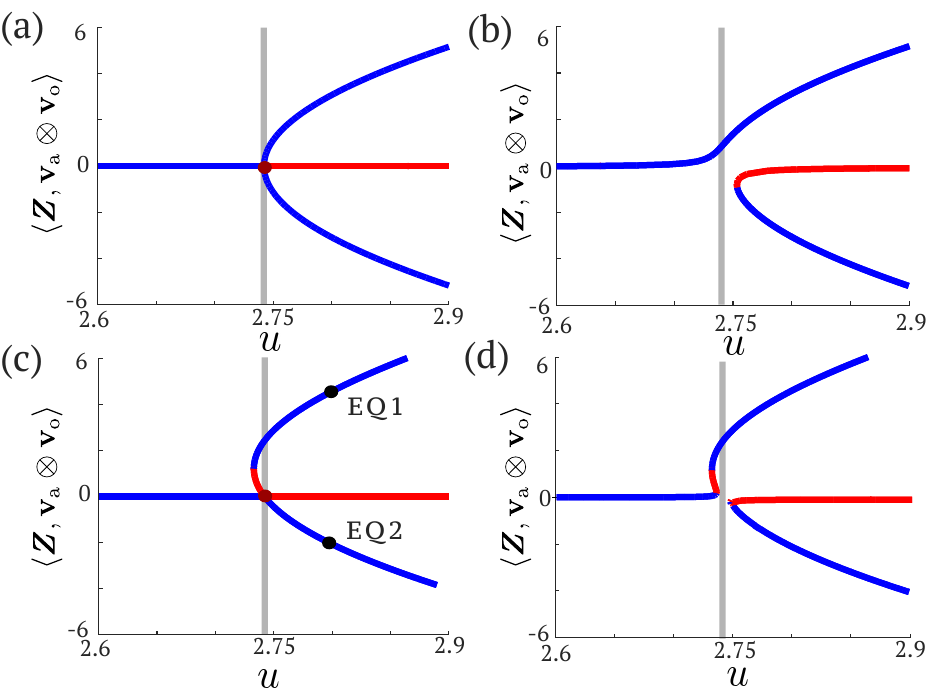}
    \caption{ Bifurcation diagrams for Example \ref{ex:pitchfork}, generated with MatCont numerical continuation package \cite{matcont}. Red (blue) lines represent unstable (stable) equilibria. $ S_1(y) = \tanh(y + \varepsilon_1 \tanh(y^2))$,  $S_2(y) = \frac{1}{2}\tanh(2y + 2\varepsilon_2 \tanh(y^2))
$.   (a) \edits{Symmetric pitchfork bifurcation at $u = u^*$ with} $\varepsilon_1 = \varepsilon_2 = 0$, $\langle \mathbf{w}_{\edits{\rm a}}\otimes \mathbf{w}_{\edits{\rm o}}, \mathbf{b}\rangle = 0$; (b) \edits{unfolding of symmetric pitchfork bifurcation with} $\varepsilon_1 = \varepsilon_2 = 0$, $\langle \mathbf{w}_{\edits{\rm a}}\otimes \mathbf{w}_{\edits{\rm o}}, \mathbf{b}\rangle > 0$; (c)  \edits{unfolding of pitchfork at $u = u^*$ with} $\varepsilon_1 = \varepsilon_2 = 0.1$, $\langle \mathbf{w}_{ \edits{\edits{\rm a}}}\otimes \mathbf{w}_{\edits{\rm o}}, \mathbf{b}\rangle = 0$; (d) \edits{unfolding of pithfork with} $\varepsilon_1 = \varepsilon_2 = 0.1 $, $\langle \mathbf{w}_{\edits{\rm a}}\otimes \mathbf{w}_{\edits{\rm o}}, \mathbf{b}\rangle > 0$. For (b) and (d), $b_{11} = 0.001$, $b_{22} = 0.003$, $b_{53} = -0.002$ and $b_{ij} = 0$ for all other $i\in \mathcal{V}_{\edits{\rm a}}, j\in \mathcal{V}_{\edits{\rm o}}$. Vertical gray line indicates the bifurcation point $u = u^* \approx 2.742$. Other parameters: $d = 1$, $\alpha = \gamma = \beta = \delta = 0.1$.}
    \label{fig:pitchfork}
\end{figure}

\begin{example} \label{ex:pitchfork}
To illustrate Theorem \ref{thm:pitchfork-No} we consider model \eqref{EQ:value_dynamics} with the communication and belief system graphs of Fig. \ref{fig:graphs} and Example \ref{ex:GaGo}. Given model parametrization described in Fig. \ref{fig:pitchfork}, it can easily be established that $\Lambda = \Lambda_1 = \{(\lambda_1, \mu_1)\}$, i.e., the indecision-breaking bifurcation is social dissonance-driven, and the assumptions of Theorem \ref{thm:pitchfork-No} are satisfied. In Fig. \ref{fig:pitchfork} we show four bifurcation diagrams that summarize the predictions of Theorem \ref{thm:pitchfork-No}: a symmetric pitchfork bifurcation for unbiased agents with odd symmetric $S_1$,$S_2$ in (a), an unfolding of a symmetric pitchfork for agents with small biases in (b), a transcritical bifurcation for unbiased agents with asymmetric response in (c), and an unfolding of the asymmetric case for agents with small biases in (d). Observe that in the asymmetric diagram (c), the unstable branch of nontrivial equilibria regains stability in a saddle-node bifurcation and for $u > u^*$ there is a bistability between the two nontrivial branches, similarly to the odd symmetric case. This is expected in general as long as $S_1''(0)$ and $S_2''(0)$ are sufficiently small, as a consequence of unfolding theory for a pitchfork bifurcation - see \cite[Chapter I \S 1]{Golubitsky1985} for a discussion. In this sense, the bistability of belief equilibria is a feature of dynamics \eqref{EQ:value_dynamics} that is robust to asymmetries in the agents' internal and social responses. Fig. \ref{fig:EQ_options} shows simulated belief trajectories settling on EQ1 and EQ2 of Fig. \ref{fig:pitchfork}. At steady state, the vectors $\Zz^\dag_j$, $j=1,\ldots,N_{\edits{\rm o}}$, of agents' beliefs about a given option in (a) and (b) are roughly (modulo higher-order in the pitchfork center manifold expansion) aligned with communication graph eigenvectors \edits{$\mathbf{v}_{\edits{\rm a}} \approx (0.463,  -0.272, 0.381, 0.743,0.123)^T$. }
Similarly, at steady state, the vectors $\Zz_i$, $i=1,\ldots,N_{\edits{\rm o}}$, of an agent's beliefs about the set of option in (c) are roughly aligned with the belief system eigenvector \edits{$\mathbf{v}_{\edits{\rm o}} = (1,0,1)^T$}. 
In particular, note that all beliefs about option 2 remain close to neutral because $(\mathbf{v}_{\edits{\rm o}})_2 = 0$.

\begin{figure}
    \centering
    \includegraphics[width=\columnwidth]{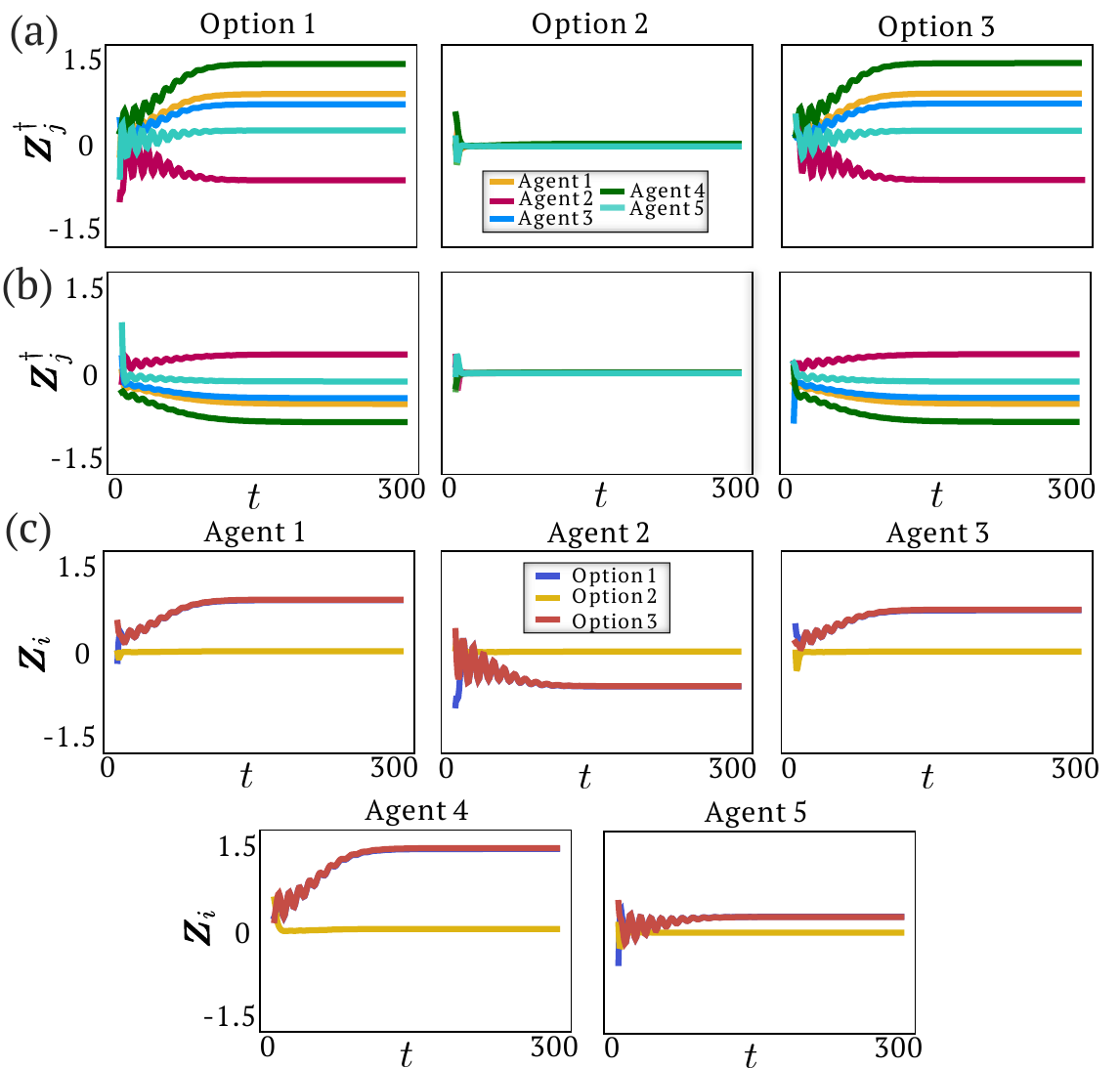}
    \caption{Trajectories $z_{ij}(t)$ for belief formation dynamics \eqref{EQ:value_dynamics} for 5 agents evaluating 3 options on $\mathcal{G}_{\edits{\rm a}}$, $\mathcal{G}_{\edits{\rm o}}$ of Fig. \ref{fig:graphs} from random initial conditions.    The figure illustrates the distribution of agents' states $z_{ij}$ along each option dimension as the network settles to (a) EQ1 from Figure \ref{fig:pitchfork}(c); (b) EQ2 from Figure \ref{fig:pitchfork}(c); (c) shows same trajectories as (a), grouped by agent instead of option. Belief trajectories on options 1 and 3 overlap for most of the simulation for all agents. Parameters: $u = u^* + 0.05 \approx 2.792$, all other parameters as in Fig. \ref{fig:pitchfork}(c).}
    \label{fig:EQ_options}
\end{figure}


\end{example}

Next, we establish sufficient conditions on the structure of communication and belief system graphs for the existence of a pitchfork bifurcation in \eqref{EQ:value_dynamics}. Recall that a social dissonance-driven bifurcation happens whenever the set of bifurcating dominant eigenvalues \eqref{eq:Lambda} consists of the leading eigenvalues of signed adjacency matrices $A_{\edits{\rm a}}$ and $A_{\edits{\rm o}}$. 
Given a graph $\mathcal{G}$ with adjacency matrix $A$ that is either Class I or II (by Definition~\ref{def:graph-classes}), let $\theta: \mathcal{V} \to \{1,-1\}$ be the switching function with switching matrix $\Theta$, and let $P_n$ be a permutation matrix for which $\Theta P_n A P_n^T\Theta$ is eventually positive if $\mathcal{G}$ is Class I or nonnegative if $\mathcal{G}$ is Class II. \edits{Recall that \textit{social dissonance-driven} bifurcations and \textit{external dissonance-driven} bifurcations happen when eigenvalue sets $\Lambda_1$ and $\Lambda_2$, respectively, are dominant (see Sections \ref{sec:params},\ref{sec:NB_interp}).} 

\begin{corollary} \label{cor:pitchfork_conds}

Suppose $A_{\edits{\rm a}}$,$A_{\edits{\rm o}}$ each has a simple leading eigenvalue $\lambda_{max},\mu_{max}$, respectively. Let $\mathbf{v}_{\edits{\rm a}}$,$\mathbf{v}_{\edits{\rm o}}$ be the corresponding eigenvectors. Define the sets $\mathcal{V}_{\edits{\rm a}}^0 = \{i \in \mathcal{V}_{\edits{\rm a}} \ \ s.t. \ \ (\mathbf{v}_{\edits{\rm a}})_i = 0 \} \subset \mathcal{V}_{\edits{\rm a}}$ and $\mathcal{V}_{\edits{\rm o}}^0 = \{j \in \mathcal{V}_{\edits{\rm o}} \ \ s.t. \ \ (\mathbf{v}_{\edits{\rm o}})_j = 0 \} \subset \mathcal{V}_{\edits{\rm o}}$. The following statements hold:

    i) Suppose $\mathcal{G}_{\edits{\rm a}}$ and $\mathcal{G}_{\edits{\rm o}}$ are each either Class I or Class II; then any social dissonance-driven indecision-breaking bifurcation in \eqref{EQ:value_dynamics} must be a pitchfork bifurcation. Let $\Zz^* = (z_{11}^*, z_{12}^*, \dots, z_{N_{\edits{\rm a}} N_{\edits{\rm o}}}^*)$ be an equilibrium on one of the new solution branches of the pitchfork bifurcation with sufficiently small $\lvert u-u^* \rvert $. Then $z_{ij}^* \neq 0$ for all $i\in \mathcal{V}_{\edits{\rm a}}$, $j \in \mathcal{V}_{\edits{\rm o}}$.

    ii) Suppose $\mathcal{G}_{\edits{\rm a}}$ and $\mathcal{G}_{\edits{\rm o}}$ are both Class I; then for any choice of model parameters, the indecision-breaking bifurcation in \eqref{EQ:value_dynamics} is a social dissonance-driven pitchfork bifurcation.

     iii) Suppose $\mathcal{G}_{\edits{\rm a}}$ is either Class I or Class II, $b_{ij} = 0$ for all $i\in \mathcal{V}_{\edits{\rm a}}$, $j \in \mathcal{V}_{\edits{\rm o}}$, and the indecision-breaking bifurcation in \eqref{EQ:value_dynamics} is a social dissonance-driven pitchfork bifurcation. Let $\Zz^* = (z_{11}^*,  \dots, z_{N_{\edits{\rm a}} N_{\edits{\rm o}}}^*)$ be an equilibrium on one of the new solution branches with sufficiently small $\lvert u-u^* \rvert$. The switching function $\theta_{\edits{\rm a}}$ induces a bipartition  $\mathcal{V}_{\edits{\rm a 1}},\mathcal{V}_{\edits{\rm a 2}}$, $\mathcal{V}_{\edits{\rm a 1}} \cup \mathcal{V}_{\edits{\rm a 2}}=\mathcal{V}_{\edits{\rm a}}$, $\mathcal{V}_{\edits{\rm a 1}} \cap \mathcal{V}_{\edits{\rm a 2}} = \emptyset$, of the agent set $\mathcal{V}_{\edits{\rm a}}$ such that $i\in \mathcal{V}_{\edits{\rm a 1}}$ whenever $\theta_{\edits{\rm a}}(i) = 1$ and $i\in \mathcal{V}_{\edits{\rm a 2}}$ whenever $\theta_{\edits{\rm a}}(i) = -1$. Whenever $i,k\in \mathcal{V}_p$, $p \in \{ \edits{\rm a}1,a2\}$, and $j \in \mathcal{V}_{\edits{\rm o}} \setminus \mathcal{V}_{\edits{\rm o}}^0$, $\operatorname{sign}(z_{ij}^*) = \operatorname{sign}(z_{kj}^*)$ and $\operatorname{sign}(z_{ij}^*) \neq \operatorname{sign}(z_{kj}^*)$ otherwise. If $|(\mathbf{v}_{\edits{\rm a}})_i| > |(\mathbf{v}_{\edits{\rm a}})_k|$, then $|z_{ij}^*| > |z_{kj}^*|$ for all $j \in \mathcal{V}_{\edits{\rm o}}$.

    iv) Suppose $\mathcal{G}_{\edits{\rm o}}$ is either Class I or Class II, $b_{ij} = 0$ for all $i\in \mathcal{V}_{\edits{\rm a}}$, $j \in \mathcal{V}_{\edits{\rm o}}$, and the indecision-breaking bifurcation in \eqref{EQ:value_dynamics} is a social dissonance-driven pitchfork bifurcation. Let $\Zz^* = (z_{11}^*,  \dots, z_{N_{\edits{\rm a}} N_{\edits{\rm o}}}^*)$ be an equilibrium on one of the new solution branches with sufficiently small $\lvert u-u^* \rvert$. The switching function $\theta_{\edits{\rm o}}$ induces a bipartition $\mathcal{V}_{\edits{\rm o}1},\mathcal{V}_{\edits{\rm o}2}$, $\mathcal{V}_{\edits{\rm o}1} \cup \mathcal{V}_{\edits{\rm o}2}=\mathcal{V}_{\edits{\rm o}}$, $\mathcal{V}_{\edits{\rm o}1} \cap \mathcal{V}_{\edits{\rm o}2} = \emptyset$, of options set  $\mathcal V_{\edits{\rm o}}$ such that $j\in \mathcal{V}_{\edits{\rm o}1}$ whenever $\theta_{\edits{\rm o}}(j) = 1$ and $j\in \mathcal{V}_{\edits{\rm o}2}$ whenever $\theta_{\edits{\rm o}}(j) = -1$. Whenever $j,l\in \mathcal{V}_p$, $p \in \{\edits{\rm o}1,o2\}$, and $i \in \mathcal{V}_{\edits{\rm a}} \setminus \mathcal{V}_{\edits{\rm a}}^0$, $\operatorname{sign}(z_{ij}^*) = \operatorname{sign}(z_{il}^*)$ and $\operatorname{sign}(z_{ij}^*) \neq \operatorname{sign}(z_{il}^*)$ otherwise. If $|(\mathbf{v}_{\edits{\rm o}})_j| > |(\mathbf{v}_{\edits{\rm o}})_l|$, then $|z_{ij}^*| > |z_{il}^*|$ for all $i \in \mathcal{V}_{\edits{\rm a}}$.
    
\end{corollary}

 For a social dissonance-driven bifurcation with 
 $\mathcal{G}_{\edits{\rm a}}$ and $\mathcal{G}_{\edits{\rm o}}$ in Class I or II, Corollary~\ref{cor:pitchfork_conds} 
 predicts relative signs and ordering of strength of equilibrium opinions, the latter according to the ordering of components in the graph eigenvector $\mathbf{v}_{\edits{\rm a}}$ or $\mathbf{v}_{\edits{\rm o}}$.  According to the Corollary, each graph belonging to Class I and/or Class II of Definition \ref{def:graph-classes} is a sufficient condition for a pitchfork bifurcation. Importantly, these are not \textit{necessary} conditions as we already observed with Example \ref{ex:pitchfork} (Figs.~\ref{fig:pitchfork}-\ref{fig:EQ_options}). The belief system graph $\mathcal{G}_{\edits{\rm o}}$ of Fig.~\ref{fig:graphs}(b) used in Example~\ref{ex:pitchfork} is not digon symmetric. 
$\mathcal{G}_{\edits{\rm o}}$ is also not eventually structurally balanced: $A_{\edits{\rm o}}^2$ and $A_{\edits{\rm o}}^3 = I$ have zero entries, and $A_{\edits{\rm o}}^4=A_{\edits{\rm o}}$. Yet the bifurcation observed in Example \ref{ex:pitchfork} is a social dissonance-driven pitchfork bifurcation. In general, when the structure of social network graph $\mathcal{G}_{\edits{\rm a}}$ and belief system graph $\mathcal{G}_{\edits{\rm o}}$ are known, the conditions of Theorem \ref{thm:pitchfork-No} are easily verified by computing eigenvalues of adjacency matrices $A_{\edits{\rm a}},A_{\edits{\rm o}}$. 

Nonetheless, the sufficient conditions of Corollary \ref{cor:pitchfork_conds} are particularly meaningful in the context of belief formation. Structural balance in a graph corresponds to a bipartition of the nodes, with strictly positive edges within each subgroup and strictly negative edges between subgroups. For a communication graph this bipartition corresponds to the split of a group into two mutually antagonistic subgroups, as may occur in a polarized social network. For a belief system graph, structural balance may correspond, for example, to a partition of political issues into two subgroups based on their alignment with left-leaning or right-leaning ideological views. Corollary \ref{cor:pitchfork_conds}.iv is additionally related to the notion of \textit{coherence} of beliefs \cite{converse2006nature,introne2023measuring}. An individual holds a maximally coherent set of beliefs when their beliefs \edits{do not} violate the logical constraints imposed by the belief system $\mathcal{G}_{\edits{\rm o}}$. According to Corollary \ref{cor:pitchfork_conds}.iv, when the graph $\mathcal{G}_{\edits{\rm o}}$ is structurally balanced, all individuals on the network form maximally coherent beliefs, as their belief vector at steady-state directly reflects the bipartition of beliefs encoded in the structure of $\mathcal{G}_{\edits{\rm o}}$.

\begin{figure}
    \centering
    \includegraphics[width = 0.9\columnwidth]{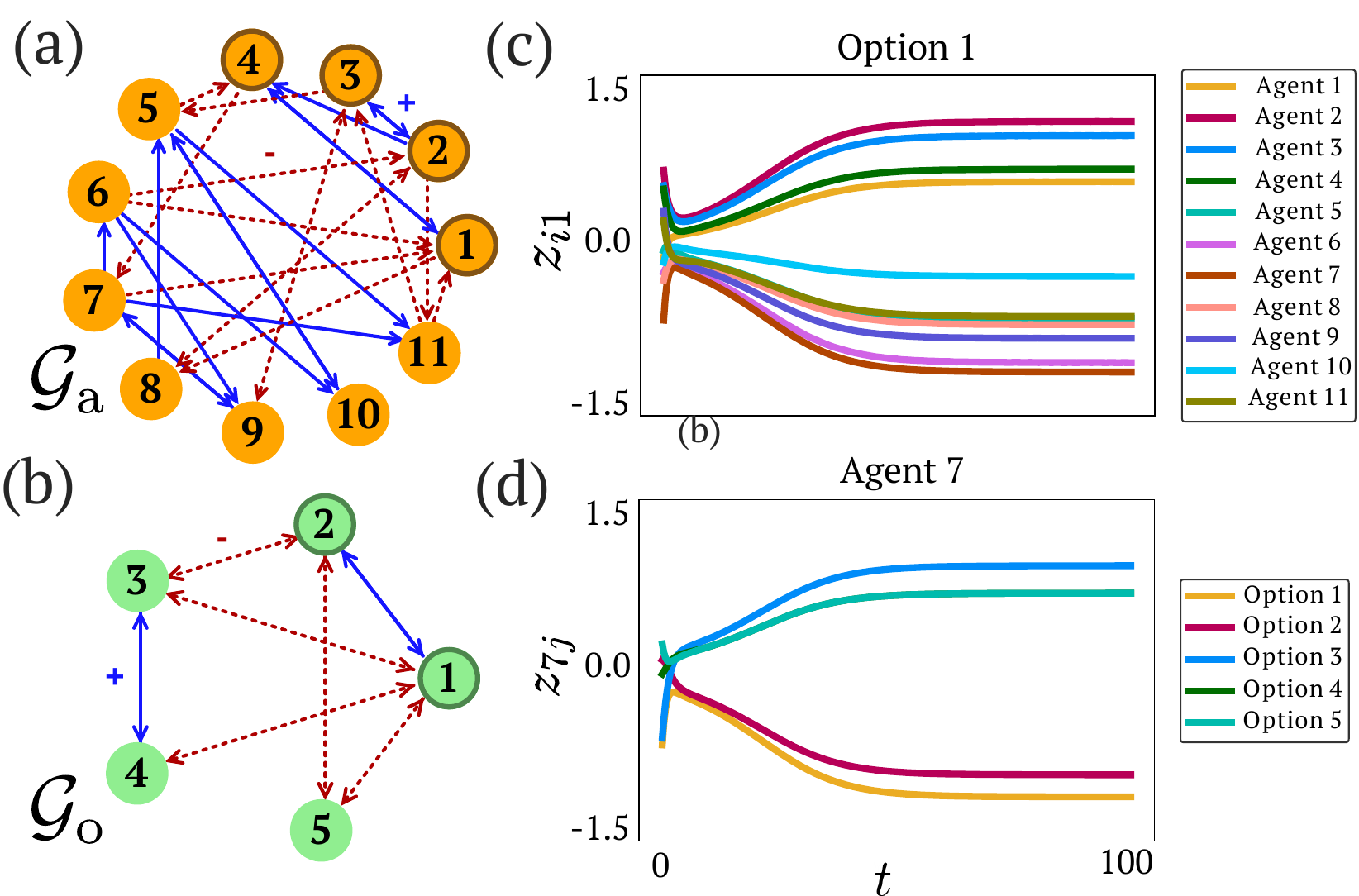}
    \caption{(a) Strongly connected, structurally balanced communication graph with node partitions $\mathcal{V}_{\edits{\rm a 1}} = \{ 1, 2, 3, 4\}$, $\mathcal{V}_{\edits{\rm a 2}} = \{ 5,6,7,8,9,10,11\}$;   (b) Strongly connected, structurally balanced belief system graph with node partitions $\mathcal{V}_{\edits{\rm o}1} = \{1,2\}$, $\mathcal{V}_{\edits{\rm o}2} = \{3,4,5\}$; (c) belief trajectories of all 11 agents on option 1; (d) belief trajectories of agent 7 on all 5 options; belief trajectories on options 4 and 5 overlap for most of the simulation. Parameters: $u = u^* + 0.05 \approx 0.7627$, $\alpha = \gamma = \beta = \delta = 0.1$, $d = 1$, $b_{ij} = 0$ for all $i \in \mathcal{V}_{\edits{\rm a}}$, $j \in \mathcal{V}_{\edits{\rm o}}$, $S_1(\cdot) = \tanh(\cdot)$, $S_2(\cdot) = \frac{1}{2} \tanh(2 \cdot)$.  }
    \label{fig:balanced-pitchfork}
\end{figure}

\begin{example} \label{ex:balanced-pitchfork}
Consider the belief dynamics model \eqref{EQ:value_dynamics} for 11 agents forming beliefs on 5 options with structurally balanced $\mathcal{G}_{\edits{\rm a}}$, $\mathcal{G}_{\edits{\rm o}}$ pictured in Fig. \ref{fig:balanced-pitchfork}(a),(b). With the indicated choice of parameters, the system satisfies the conditions of Theorem \ref{thm:pitchfork-No} and exhibits a social dissonance-driven pitchfork bifurcation. According to Corollary \ref{cor:pitchfork_conds}.iii, due to the structural balance of the communication graph we expect the beliefs of agents  1-4  on each option to share sign at steady-state, while agents 5-11 will form same-sign beliefs that are opposite-sign from agents 1-4 following the bipartition in $\mathcal{G}_{\edits{\rm a}}$. Belief trajectories simulated from a random initial condition in Fig. \ref{fig:balanced-pitchfork}(c) confirm this prediction. Analogously, according to Corollary \ref{cor:pitchfork_conds}.iv we expect each agent at steady-state to simultaneously favor options 1 and 2 while rejecting the rest, or simultaneously favor options 3-5 while rejecting 1 and 2 following the bipartition in $\mathcal{G}_{\edits{\rm o}}$. This prediction is confirmed by the simulated belief trajectories in Fig. \ref{fig:balanced-pitchfork}(d). 
\end{example}

\section{Oscillatory belief-formation: Hopf bifurcation \label{sec:Hopf}}

So far we have explored the emergence of multistable belief equilibria through pitchfork bifurcations, a generic outcome of indecision-breaking in the belief formation model \eqref{EQ:value_dynamics}. We now study under which conditions a belief-forming bifurcation can lead to an oscillatory belief state. Oscillations in belief systems have received scant attention to date, both in formal models and in empirical studies of social systems. Within an individual, belief oscillations are thought by some to be an important dynamic feature of cognition that plays a role in decision-making and attitude change, a view that is supported by indirect experimental evidence \cite{fink1993oscillation,fink2002oscillation,chung2012sequential}. At the group level, oscillations in beliefs and preferences are commonly observed in human societies, e.g., in recurring fashion trends \cite{belleau1987cyclical} and in recurring swings between conservative and liberal political attitudes \cite{stimson2018public}. We derive sufficient and necessary conditions for belief oscillations that arise from a Hopf bifurcation in the belief formation model \eqref{EQ:value_dynamics}. \edits{Oscillations are also interesting from the point of view of design of social behavior, as they allow agents to explore different options dynamically. The model \eqref{EQ:value_dynamics} supports both oscillations and equilibria, thus allowing for systematic design of transitions between oscillitary and stationary-belief operating regimes in networks of artificial agents. }The results in this section are partially adapted from the recent conference paper \cite{bizyaeva2022sustained}.

A Hopf bifurcation is a local bifurcation in which a limit cycle appears near a singular equilibrium of a nonlinear system. Consider the belief formation model \eqref{EQ:value_dynamics}. A necessary condition for the indecision-breaking bifurcation point $(\Zz,u) = (\mathbf{0},u^*)$ of Theorem \ref{thm:bif} to be a \textit{Hopf bifurcation point} is for the Jacobian  $J(\mathbf{0},u)$ \eqref{eq:jac} to have a complex conjugate pair of leading eigenvalues $\eta_{+}(u) = \eta_r(u) + i \eta_i(u)$, $\eta_{-}(u) = \overline{\eta}_{+}(u) = \eta_r(u) - i \eta_i(u)$, with $\eta_+(u^*) = i \eta_i(u^*)$ and $\eta_-(u^*) = - i \eta_i(u^*)$ being the only eigenvalues of $J(\mathbf{0},u^*)$ on the imaginary axis \cite[Theorem 3.4.2]{guckenheimer2013nonlinear}. To enforce this necessary condition we introduce the following assumption.

\begin{assumption}[Complex-conjugate leading eigenvalues]\label{ass1}
The set $\Lambda$ of leading eigenvalues of $J(\mathbf{0},u)$, defined in \eqref{eq:Lambda}, satisfies one of the following three conditions:

1) \textit{Complex eigenvalue in belief system graph:} $\Lambda = \{(\lambda,\mu),(\lambda,\overline{\mu})\}$ where $\lambda \in \R{}$, $\mu \in \mathbb{C}$, $\lambda, \operatorname{Re}(\mu),\operatorname{Im}(\mu) \neq 0$;

2) \textit{Complex eigenvalue in social network graph:} $\Lambda = \{(\lambda,\mu),(\overline{\lambda},\mu)\}$ where $\lambda \in \mathbb{C}$, $\mu \in \R{}$, $\mu, \operatorname{Re}(\lambda),\operatorname{Im}(\lambda) \neq 0$;

3) \textit{Interaction of complex eigenvalues:} $\Lambda = \{ (\lambda,\mu),(\overline{\lambda},\overline{\mu})\}$ where $\lambda,\mu \in \mathbb{C}$, $\operatorname{Re}(\lambda)$, $\operatorname{Im}(\lambda)$, $\operatorname{Re}(\mu)$, $\operatorname{Im}(\mu) \neq 0$.
\end{assumption}

\begin{theorem}[Hopf bifurcation] \label{thm:hopf}
\cite[Theorem IV.3]{bizyaeva2022sustained}
Consider \eqref{EQ:value_dynamics} with communication graph $\mathcal{G}_{\edits{\rm a}}$ and belief system graph $\mathcal{G}_{\edits{\rm o}}$. Let Assumption \ref{ass1} hold. Suppose $\lambda^{\dagger}\in\sigma(A_{\edits{\rm a}}), \mu^{\dagger} \in \sigma(A_{\edits{\rm o}})$ generate $\eta_{+}(u)$, i.e. $\eta_+(u) = \eta(u,\lambda^\dagger,\mu^\dagger)$, and assume $\alpha + \gamma \operatorname{Re}(\lambda^\dagger) + \beta \operatorname{Re}(\mu^\dagger) + \delta \operatorname{Re}(\lambda^\dagger \mu^\dagger) > 0$. Let $\mathbf{w}_{\edits{\rm a}},\mathbf{v}_{\edits{\rm a}} \in \mathds{C}^{N_{\edits{\rm a}}}$ be the left and right eigenvectors of $A_{\edits{\rm a}}$ corresponding to $\lambda^{\dagger}$ and $\overline{\lambda^{\dagger}}$, respectively; let $\mathbf{w}_{\edits{\rm o}},\mathbf{v}_{\edits{\rm o}}\in \mathbb{C}^{N_{\edits{\rm o}}}$ be the left and right eigenvectors of $A_{\edits{\rm o}}$ corresponding to $\mu^{\dagger}$ and $\overline{\mu^{\dagger}}$, respectively, satisfying the biorthogonal normalization condition 
\begin{equation*}
    \langle \mathbf{w}_{\edits{\rm a}} \otimes \mathbf{w}_{\edits{\rm o}}, \mathbf{v}_{\edits{\rm a}} \otimes \mathbf{v}_{\edits{\rm o}}, \rangle = 2, \ \ \langle \overline{\mathbf{w}_{\edits{\rm a}} \otimes \mathbf{w}_{\edits{\rm o}}}, \mathbf{v}_{\edits{\rm a}} \otimes \mathbf{v}_{\edits{\rm o}}, \rangle = 0. \label{eq:eigenvector-normalization-hopf}
\end{equation*}
For sufficiently small  $|g_1''(0)|$, $|g_2''(0)|$, the following hold:

1) There is a unique 3-dimensional center manifold $W^c \subset \mathbb{R}^{N_{\edits{\rm a}} N_{\edits{\rm o}}} \times \R{}$ passing through $(\Zz,u) = (\mathbf{0},u^*)$, tangent to $\operatorname{span}\{ \operatorname{Re}(\mathbf{v}_{\edits{\rm a}} \otimes \mathbf{v}_{\edits{\rm o}}), \operatorname{Im}(\mathbf{v}_{\edits{\rm a}} \otimes \mathbf{v}_{\edits{\rm o}}) \} $ at $u = u^* = d/(\alpha + \gamma \operatorname{Re}(\lambda^\dagger) + \beta \operatorname{Re}(\mu^\dagger) + \delta \operatorname{Re}(\lambda^\dagger \mu^\dagger))$. There is a family of periodic orbits of \eqref{EQ:value_dynamics} that bifurcates from the neutral equilibrium $\Zz = \mathbf{0} $ along $W_c$ at $u = u^*$;

2) Let $K = \operatorname{Re}\big( \big( S_1'''(0) \left(\alpha + \gamma \lambda^{\dagger}\right) \left| \alpha + \gamma \lambda^{\dagger}\right|^2 +
+ S_2'''(0) \left(\beta + \delta  \lambda^{\dagger}\right)\mu^{\dagger}\left| \beta +  \delta \lambda^{\dagger}\right|^2 \left| \mu^\dagger \right|^2 \big)  
 \langle \mathbf{w}_{\edits{\rm a}}\otimes \mathbf{w}_{\edits{\rm o}} , \lvert\mathbf{v}_{\edits{\rm a}}\otimes \mathbf{v}_{\edits{\rm o}}\rvert^2 \odot (\mathbf{v}_{\edits{\rm a}}\otimes \mathbf{v}_{\edits{\rm o}}) \rangle \big)$
where $\lvert\mathbf{x}\rvert^2 = \overline{\mathbf{x}} \odot \mathbf{x}$. Whenever $K < 0$ the periodic solutions appear supercritically (for $u > u^*$) and are locally asymptotically stable; whenever $K > 0$, the solutions appear subcritically (for $u < u^*$) and are unstable;

3) When $\lvert u - u^* \rvert$ is small, the period of the solutions $\Zz^*(t)$ is near $T = 2 \pi /(u^* | \gamma \operatorname{Im}(\lambda^\dagger) + \beta \operatorname{Im}(\mu^{\dagger}) + \delta \operatorname{Im}(\lambda^{\dagger}\mu^{\dagger})|)$, the difference in phase between $z_{ij}^*(t)$ and $z_{kl}^*(t)$ is near $ \varphi_{ik}^{jl} = \operatorname{arg}((\mathbf{v}_{\edits{\rm a}})_i (\mathbf{v}_{\edits{\rm o}})_j) - \operatorname{arg}((\mathbf{v}_{\edits{\rm a}})_k (\mathbf{v}_{\edits{\rm o}})_l)$, and the amplitude of $z_{ij}^*(t)$ is greater than the amplitude of $z_{kl}^*(t)$ if and only if $\lvert (\mathbf{v}_{\edits{\rm a}})_i \rvert \lvert (\mathbf{v}_{\edits{\rm o}})_j\rvert > \lvert (\mathbf{v}_{\edits{\rm a}})_k \rvert \lvert (\mathbf{v}_{\edits{\rm o}})_l\rvert $.
\end{theorem}
\edits{Assumption \ref{ass1} implies that $\delta>0$ and/or $\beta,\gamma>0$.  Theorem \ref{thm:hopf} holds for general signed graphs satisfying the stated assumptions, which includes graphs that are neither Class I or II, and ones that are not strongly connected.}

\begin{figure}
    \centering
    \includegraphics[width=\columnwidth]{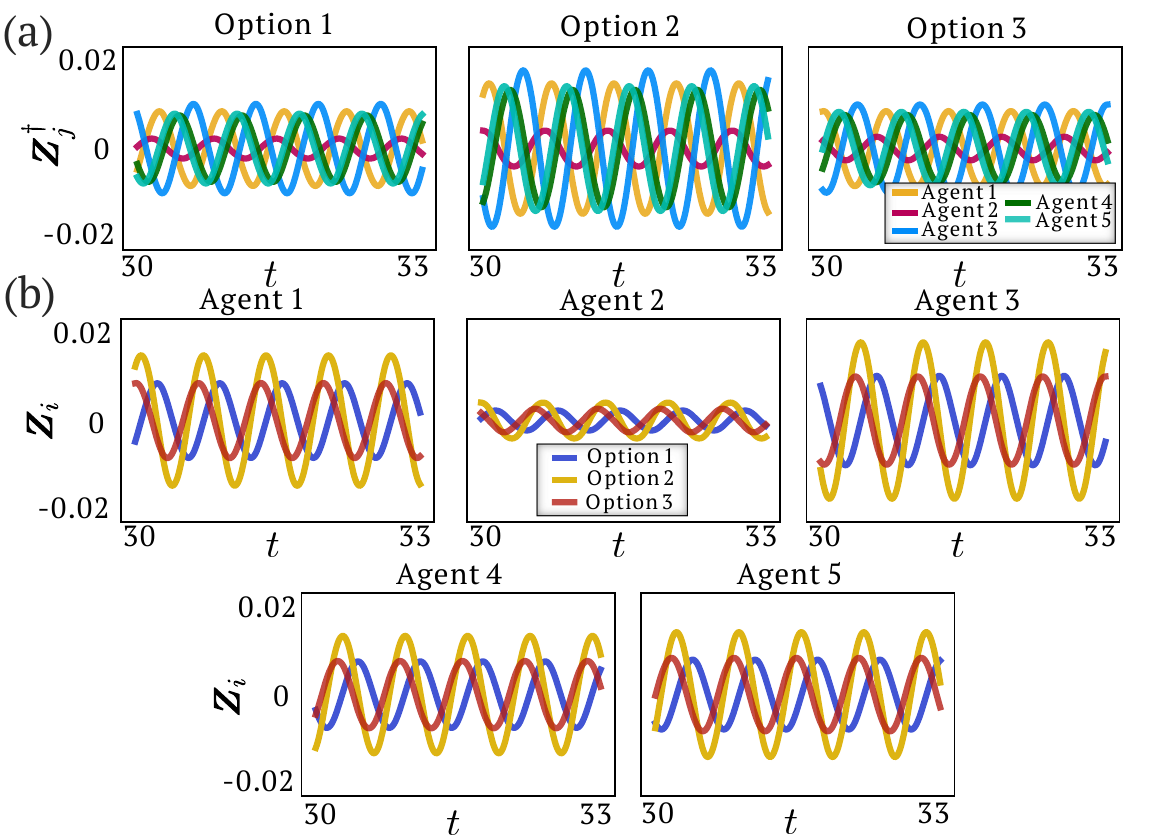}
    \caption{Trajectories $z_{ij}(t)$ of stable oscillation described in Example \ref{ex:hopf} for \eqref{EQ:value_dynamics} on communication and belief system graphs of Fig.~\ref{fig:graphs}; (a) grouped by option; (b) grouped by agent. Parameters: $d = 10$, $u = 1$, $\alpha = \gamma = \beta = 0.1$, $\delta = 12$, $b_{ij} = 0$ for all $i \in \mathcal{V}_{\edits{\rm a}}$, $j \in \mathcal{V}_{\edits{\rm o}}$, $S_1(\cdot) = \tanh(\cdot)$, $S_2(\cdot) = \frac{1}{2}\tanh(2 \ \cdot)$.}
    \label{fig:Hopf}
\end{figure}

\begin{example} \label{ex:hopf} To illustrate the predictions of Theorem \ref{thm:hopf} we use again the communication and belief system graphs $\mathcal{G}_{\edits{\rm a}}$,$\mathcal{G}_{\edits{\rm o}}$ of Fig. \ref{fig:graphs}. Recall from Example \ref{ex:GaGo} that for these graphs the set of graph eigenvalue pairs $(\lambda,\mu) \in \sigma(A_{\edits{\rm a}}) \times \sigma(A_{\edits{\rm o}})$ that maximize $\operatorname{Re}(\lambda \mu)$ has the structure $\Lambda_2 = \{(\lambda^\dagger, \mu^\dagger), (\overline{\lambda}^\dagger,\overline{\mu}^\dagger) \}$ and satisfies the third case in Assumption \ref{ass1}. It follows from Theorem \ref{thm:hopf} and Proposition \ref{prop:param-effects-2} that any external dissonance-dominant bifurcation on these graphs will be a Hopf bifurcation, and this bifurcation is realized for sufficiently large values of the external dissonance parameter $\delta$. Consider \eqref{EQ:value_dynamics} on these graphs with the parameters as in Fig.~\ref{fig:Hopf}. The periodic orbits bifurcate supercritically and are stable. The ordering of phases and amplitudes of these oscillations along each option, as in Fig.~\ref{fig:Hopf}(a), and internally, as in Fig.~\ref{fig:Hopf}(b) are predicted by the complex eigenvectors $\mathbf{v}_{\edits{\rm a}}$ and $\mathbf{v}_{\edits{\rm o}}$.
\end{example}

The three types of eigenvalue combinations in Assumption \ref{ass1} lead to qualitatively different types of oscillation. Case 1 (Case 2) lead to phase-synchronized intra-agent (inter-agent) beliefs, i.e., the components of $\Zz_i(t)$ ($\Zz_j^\dagger(t)$) oscillate with the same phase for each $i$ (each $j$). Case 3 generically leads to more complex oscillatory phase patterns, unless some kind of symmetry is present in one of the graphs.
The following corollary formalizes these observations.  Let Class I and Class II be defined as in Definition \ref{def:graph-classes}. Given graph $\mathcal{G}$ with adjacency matrix $A$ that is either Class I or II, let $\theta: \mathcal{V} \to \{1,-1\}$ be the switching function with switching matrix $\Theta$, and let $P_n$ be a permutation matrix for which $\Theta P_n A P_n^T\Theta$ is eventually positive if $\mathcal{G}$ is Class I or nonnegative if $\mathcal{G}$ is Class II.  \edits{Recall that \textit{social dissonance-driven} and \textit{external dissonance-driven} bifurcations happen when eigenvalue sets $\Lambda_1$ and $\Lambda_2$, respectively, are dominant (see Sections \ref{sec:params},\ref{sec:NB_interp}).} 

\begin{corollary} \label{cor:balanced-oscillations}
    i.1) Suppose that $\mathcal{G}_{\edits{\rm a}}$ is either Class I or Class II with associated dominant eigenvalue $\lambda_{max}$ and corresponding eigenvector $\mathbf{v}_{\edits{\rm a}}$,  and that $A_o$ has a complex conjugate pair of leading eigenvalues $\mu, \overline{\mu}$. Then, a social dissonance-driven indecision-breaking bifurcation in \eqref{EQ:value_dynamics} is a Hopf bifurcation. For sufficiently small $\lvert u-u^* \rvert$ and $\mathbf{b} = \mathbf{0}$, let $\Zz^*(t) = (z_{11}^*(t),  \dots, z_{N_{\edits{\rm a}} N_{\edits{\rm o}}}^*(t))$ be the periodic solution of \eqref{EQ:value_dynamics} emerging at the Hopf bifurcation.
    
    i.2)  Let $\mathcal{V}_{\edits{\rm a}}^0 = \{i \in \mathcal{V}_{\edits{\rm a}} \ \ s.t. \ \ (\mathbf{v}_{\edits{\rm a}})_i = 0 \} \subset \mathcal{V}_{\edits{\rm a}}$. Also let $\mathcal{V}_{\edits{\rm a 1}},\mathcal{V}_{\edits{\rm a 2}}\subset\mathcal{V}_{\edits{\rm a}}$, with $\mathcal{V}_{\edits{\rm a 1}} \cup \mathcal{V}_{\edits{\rm a 2}}=\mathcal{V}_{\edits{\rm a}}$, $\mathcal{V}_{\edits{\rm a 1}} \cap \mathcal{V}_{\edits{\rm a 2}} = \emptyset$ be the bipartition of $\mathcal{V}_{\edits{\rm a}}$ induced by the switching function $\theta_{\edits{\rm a}}$ such that $i\in \mathcal{V}_{\edits{\rm a 1}}$ whenever $\theta_{\edits{\rm a}}(i) = 1$ and $i\in \mathcal{V}_{\edits{\rm a 2}}$ whenever $\theta_{\edits{\rm a}}(i) = -1$.
    Suppose $\|(\Zz^\dagger_j)^*(t) \| > 0$ for some $t\in\R{}$ and $j \in \mathcal{V}_{\edits{\rm o}} \setminus \mathcal{V}_{\edits{\rm o}}^0$. If $i,k\in \mathcal{V}_p$, $p \in \{a1,a2\}$, then $\operatorname{sign}(z_{ij}^*(t)) = \operatorname{sign}(z_{kj}^*(t))$. Conversely, if $i\in \mathcal{V}_{p}$ and $k\in \mathcal{V}_{s}$, $p,s \in \{\edits{\rm a}_1,\edits{\rm a}_1\}$, $p \neq s$, then $\operatorname{sign}(z_{ij}^*(t)) \neq \operatorname{sign}(z_{kj}^*(t))$ .

    ii.1)  Suppose that $\mathcal{G}_{\edits{\rm o}}$ is either Class I or Class II with associated dominant eigenvalue $\mu_{max}$ and corresponding eigenvector $\mathbf{v}_{\edits{\rm o}}$, and that $\mathcal{G}_{\edits{\rm a}}$ has a complex conjugate pair of leading adjacency eigenvalues $\lambda,\overline{\lambda}$. Then a social dissonance-driven indecision-breaking bifurcation in \eqref{EQ:value_dynamics} is a Hopf bifurcation. For sufficiently small $\lvert u-u^* \rvert$ and $\mathbf{b} = \mathbf{0}$, let $\Zz^*(t) = (z_{11}^*(t),  \dots, z_{N_{\edits{\rm a}} N_{\edits{\rm o}}}^*(t))$ be a periodic solution of \eqref{EQ:value_dynamics} emerging at the Hopf bifurcation. 
    
    ii.2) Let $\mathcal{V}_{\edits{\rm o}}^0 = \{j \in \mathcal{V}_{\edits{\rm o}} \ \ s.t. \ \ (\mathbf{v}_{\edits{\rm o}})_j = 0 \} \subset \mathcal{V}_{\edits{\rm a}}$. Also let $\mathcal{V}_{\edits{\rm o}1},\mathcal{V}_{\edits{\rm o}2}$, $\mathcal{V}_{\edits{\rm o}1} \cup \mathcal{V}_{\edits{\rm o}2}=\mathcal{V}_{\edits{\rm o}}$, $\mathcal{V}_{\edits{\rm o}1} \cap \mathcal{V}_{\edits{\rm o}2} = \emptyset$ be the bipartition of options induced by the switching function $\theta_{\edits{\rm o}}$ such that $j\in \mathcal{V}_{\edits{\rm o}1}$ whenever $\theta_{\edits{\rm o}}(j) = 1$ and $j\in \mathcal{V}_{\edits{\rm o}2}$ whenever $\theta_{\edits{\rm o}}(j) = -1$. Suppose $\|\Zz_i^*(t) \| > \varepsilon$ for some $t \in \mathbb{R}$, $i \in \mathcal{V}_{\edits{\rm a}} \setminus \mathcal{V}_{\edits{\rm a}}^0$, and some small $\varepsilon$. If $j,l\in \mathcal{V}_p$, $p \in \{o1,o2\}$, then $\operatorname{sign}(z_{ij}^*(t)) = \operatorname{sign}(z_{il}^*(t))$. Conversely, if $j \in \mathcal{V}_p$, $l \in \mathcal{V}_s$, $p,s \in \{\edits{\rm a}_1,\edits{\rm a}_2\}$, $p \neq s$, then $\operatorname{sign}(z_{ij}^*(t)) \neq \operatorname{sign}(z_{il}^*(t))$. 
\end{corollary}

As a consequence of Corollary \ref{cor:balanced-oscillations}, social dissonance-driven oscillations on structurally balanced communication graphs reflect the partition of a graph into two polarized subgroups because the beliefs of agents in the two subgroups oscillate exactly out-of-phase with one another, i.e., the agents in the two subgroups will always disagree with each other despite the oscillatory trajectories of their beliefs. Analogously, social dissonance-driven oscillations on structurally balanced belief system graphs will reflect the induced bipartition of options in the internal ordering of each agent's beliefs. In other words, despite their oscillatory trajectories, each agent's beliefs remain coherent at all times and do not violate the logical relationships encoded in the structure of belief system $\mathcal{G}_{\edits{\rm o}}$.

\begin{figure}
    \centering
    \includegraphics[width=\columnwidth]{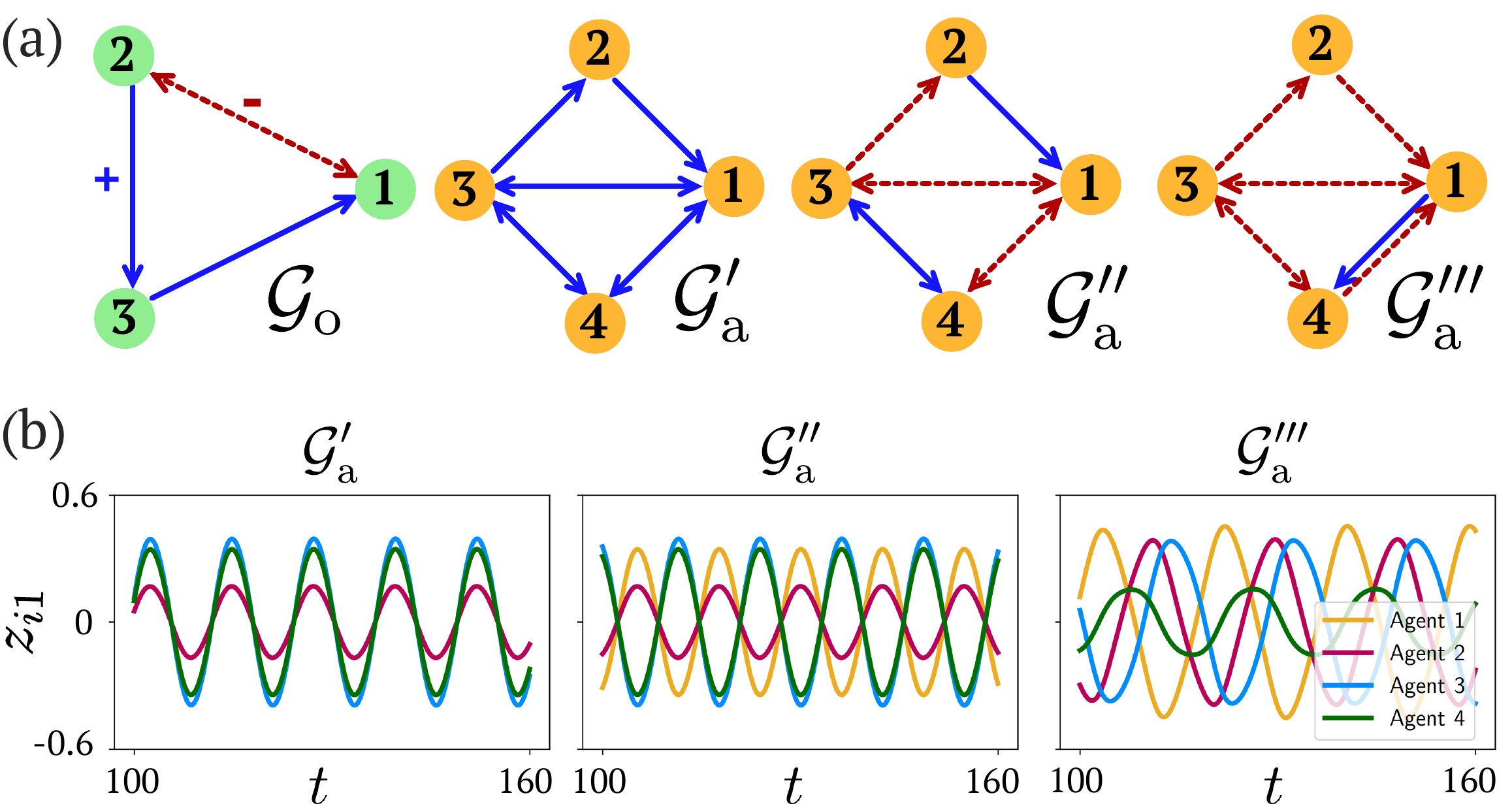}
    \caption{Figure adapted from \cite{bizyaeva2022sustained}. (a) Belief system graph $\mathcal{G}_{\edits{\rm o}}$ and three communication graphs $\mathcal{G}_{\edits{\rm a}}'$,$\mathcal{G}_{\edits{\rm a}}''$,$\mathcal{G}_{\edits{\rm a}}'''$. (b) Trajectories of belief dynamics for all four agents on option 1 with belief system graph $\mathcal{G}_{\edits{\rm o}}$ and each of the three communication graphs shown in (a). Parameters:  $d = 1$, $\alpha = \gamma = 0.1$, $\beta = \delta = 0.25$, $b_{ij} = 0$ for all $i \in \mathcal{V}_{\edits{\rm a}}$, $j \in \mathcal{V}_{\edits{\rm o}}$; $u = 1.25$ for $\mathcal{G}_{\edits{\rm a}}'$ and $\mathcal{G}_{\edits{\rm a}}''$; $u = 1.7$ for $\mathcal{G}_{\edits{\rm a}}'''$; $S_1(\cdot) = \tanh(\cdot)$, $S_2(\cdot) = \frac{1}{2}\tanh(2 \ \cdot)$. 
    }
    \label{fig:3oscillations}
\end{figure}
\begin{example} \label{ex:3oscillations}
    Consider the belief dynamics \eqref{EQ:value_dynamics} with belief system graph $\mathcal{G}_{\edits{\rm o}}$ and three different communication graphs, pictured in Fig. \ref{fig:3oscillations}(a) with parameter values indicated in the figure caption. The leading eigenvalues of $A_{\edits{\rm o}}$ are a complex-conjugate pair. The graphs $\mathcal{G}_{\edits{\rm a}}'$ and $\mathcal{G}_{\edits{\rm a}}''$ are both Class II and switching equivalent, with $A_{\edits{\rm a}}'$ having the Perron-Frobenius property while $A_{\edits{\rm a}}''$ is structurally balanced with a bipartition of nodes $\mathcal{V}_{\edits{\rm a 1}} = \{1,2\}$, $\mathcal{V}_{\edits{\rm a 2}} = \{3,4\}$. Following Corollary \ref{cor:balanced-oscillations}.i, in a social dissonance-driven Hopf bifurcation, beliefs of agents on $\mathcal{G}_{\edits{\rm a}}'$ synchronize in phase along every option while beliefs of agents 1 and 2 on $\mathcal{G}_{\edits{\rm a}}''$ oscillate out of phase from agents 3 and 4, reflecting the bipartition. Sample simulated trajectories are shown in Fig.~\ref{fig:3oscillations}(b) for option 1. Meanwhile, $\mathcal{G}_{\edits{\rm a}}''$ is neither class I or II, and has an adjacency matrix with a complex conjugate pair of leading eigenvalues. A social dissonance-driven bifurcation on this graph is still a Hopf bifurcation; however, due to the fact that both $\mathbf{v}_{\edits{\rm a}}$ and $\mathbf{v}_{\edits{\rm o}}$ of Theorem \ref{thm:hopf} are complex, beliefs do not synchronize in phase internally or between agents along any option. A sample trajectory for option 1 is shown in Fig.~\ref{fig:3oscillations}(b).
\end{example}

We next derive sufficient conditions for  graphs $\mathcal{G}_{\edits{\rm a}}$, $\mathcal{G}_{\edits{\rm o}}$ that either rule out oscillations altogether or guarantee oscillations when social dissonance is dominant in the belief dynamics \eqref{EQ:value_dynamics}.

\begin{corollary} \label{cor:graph-cond-Hopf}
1) \cite[Proposition V.1]{bizyaeva2022sustained} Suppose $\mathcal{G}_{\edits{\rm a}}$ and $\mathcal{G}_{\edits{\rm o}}$ are both undirected. Then the indecision-breaking bifurcation of Theorem \ref{thm:bif} \textit{cannot} be a Hopf bifurcation. 

2) Suppose $\mathcal{G}_{\edits{\rm a}}$ ($\mathcal{G}_{\edits{\rm o}}$) belongs to Class I and/or Class II of Definition \ref{def:graph-classes}. Suppose $\mathcal{G}_{\edits{\rm o}}$ ($\mathcal{G}_{\edits{\rm a}}$) has an odd number of nodes and is switching isomorphic to a cycle graph $\mathcal{G}$ with all-negative edge signatures, i.e.,  $\mathcal{G}$ has circulant adjacency matrix $A$ with the first row $(0, 0,\dots, 0, -1)$, and every subsequent row generated by shifting the preceding row to the right by one element. Any social dissonance-driven indecision-breaking bifurcation in \eqref{EQ:value_dynamics} on these graphs is a Hopf bifurcation.
\end{corollary}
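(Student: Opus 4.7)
For Part 1, the plan is short. An undirected signed graph has symmetric adjacency ($A = A^T$), so $A_a$ and $A_o$ are real symmetric and $\sigma(A_a),\sigma(A_o)\subset\mathbb{R}$. By Lemma \ref{lem:eigs}, every Jacobian eigenvalue has the real form $\eta(u,\lambda,\mu) = -d + u(\alpha + \gamma\lambda + \beta\mu + \delta\lambda\mu)$. A Hopf bifurcation at $u = u^*$ requires a complex conjugate pair on the imaginary axis, which is what each of the three cases of Assumption \ref{ass1} demands; none can be realized here, so no Hopf bifurcation can arise at the indecision-breaking point of Theorem \ref{thm:bif}.

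For Part 2, my plan is to compute the spectrum of the signed cycle, extract a complex conjugate leading pair, and then apply Theorem \ref{thm:hopf}. First I would use the standard circulant eigenvalue formula: the cycle adjacency $A$ has a single nonzero first-row entry $c_{n-1}=-1$, giving $\sigma(A)=\{-e^{-2\pi i k/n}:k=0,\ldots,n-1\}$. The real parts $-\cos(2\pi k/n)$ are maximized at the integers closest to $n/2$; for $n$ odd these are uniquely $k=(n-1)/2$ and $k=(n+1)/2$, producing exactly two leading eigenvalues
\begin{equation*}
\nu_\pm = e^{\pm i\pi/n}
\end{equation*}
with common real part $\cos(\pi/n)>0$ and nonzero imaginary parts $\pm\sin(\pi/n)$. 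Because switching isomorphism is realized by the similarity transformation \eqref{eq:sim_isomorphism}, the adjacency matrix of the relevant graph ($A_o$ or $A_a$) is cospectral with $A$ and inherits $\nu_\pm$ as its leading eigenvalues.

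Next, the Class I or II hypothesis on the remaining graph gives, through Lemma \ref{lem:dominant-eig}, a simple real positive dominant eigenvalue $\lambda_{max}$. Since the bifurcation is social dissonance-driven by hypothesis, $\Lambda=\Lambda_1=\{(\lambda_{max},\nu_+),(\lambda_{max},\nu_-)\}$ (or the swapped analogue when the cycle is $\mathcal{G}_a$), which realizes Case 1 (resp.\ Case 2) of Assumption \ref{ass1}. The positivity condition required by Theorem \ref{thm:hopf}, $\alpha+\gamma\lambda_{max}+\beta\cos(\pi/n)+\delta\lambda_{max}\cos(\pi/n)>0$, is automatic from nonnegativity of the gains together with $\lambda_{max},\cos(\pi/n)>0$. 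An invocation of Theorem \ref{thm:hopf} then yields the Hopf bifurcation. The main obstacle is the cycle spectrum calculation, whose parity-sensitive conclusion is essential: for $n$ even, the integer $k=n/2$ would produce the simple real dominant eigenvalue $\nu=1$, giving a pitchfork rather than a Hopf bifurcation. A small auxiliary check, using the strict Perron--Frobenius gap of Lemma \ref{lem:dominant-eig} and the strict ordering $\cos(\pi/n)>-\cos(2\pi k/n)$ for $k\notin\{(n\pm 1)/2\}$, confirms that $\Lambda_1$ contains only the two expected pairs and no extraneous contributions.
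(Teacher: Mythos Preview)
Your proposal is correct and follows essentially the same approach as the paper: Part 1 uses symmetry of the adjacency matrices to force real spectra and hence violate Assumption \ref{ass1}, and Part 2 computes the circulant spectrum $\{-e^{2\pi i k/n}\}$, identifies the complex-conjugate leading pair at $k=(n\pm1)/2$, combines with the simple real dominant eigenvalue from Lemma \ref{lem:dominant-eig}, and checks Assumption \ref{ass1}. You are slightly more explicit than the paper in writing out $\nu_\pm=e^{\pm i\pi/n}$, in noting cospectrality under switching isomorphism, and in verifying the positivity hypothesis of Theorem \ref{thm:hopf}, but the argument is the same.
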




\section{Discussion and final remarks \label{sec:final}}

We have proposed a nonlinear, multi-dimensional belief formation model and shown how agents \edits{become opinionated} as a consequence of an indecision-breaking bifurcation determined by the communication network and belief system network. We have described the multi-stable belief equilibria and belief oscillations that form through a pitchfork bifurcation and a Hopf bifurcation, respectively. 

While these two outcomes of belief formation are most common across all graphs, other more complex bifurcations can occur in the model. 
For example, the set $\Lambda_2$ described in parts iii and iv of Lemma \ref{lem:graph-cond} corresponds to two equal real-valued eigenvalues in the Jacobian $J(0,\mathbf{u})$. External dissonance-driven bifurcations on these graphs will lead to a different bifurcation than those described in Sections \ref{sec:pitchfork} and \ref{sec:Hopf}, and belief trajectories will settle on a 2-dimensional attractor. The  dissertation \cite{AnastasiaThesis} contains an example of graphs $\mathcal{G}_{\edits{\rm a}}$,$\mathcal{G}_{\edits{\rm o}}$ on which the social dissonance-driven bifurcation results in belief trajectories settling on an ergodic torus in phase space. The analysis in \cite{franci2022breaking} considers bifurcations with a belief system graph $\mathcal{G}_{\edits{\rm o}}$ that is all-to-all connected and same-sign. In this case, $\Lambda$ can consist of $N_{\edits{\rm o}}$ eigenvalue pairs, and many new symmetric equilibria can emerge at the corresponding equivariant bifurcation. When symmetry is present in one of the graphs $\mathcal{G}_{\edits{\rm a}}$,$\mathcal{G}_{\edits{\rm o}}$, tools from equivariant bifurcation theory \cite{GolubitskySymmetryPerspective} can be used to classify the indecision-breaking bifurcation.

The proposed model 
can thus be understood to parametrize a set of belief-forming behaviors that are extremely rich but nevertheless interpretable with a small number of 
parameters that capture realistic effects. Our analyses provide novel insights about the dynamics of social systems. They also serve as building blocks, as in \cite{HaiminCDC2023,bizyaeva2022switching,cathcart2022opinion,paine2023model},  for new classes of decentralized algorithms for decision-making, task allocation, and other types of control for engineered teams and networks. \edits{Future work will also aim to study the interplay between the dynamics of beliefs and time-varying networks, with particular emphasis on balanced networks, as preliminarily explored in~\cite{bizyaeva2022switching}.}

\section*{Appendix} 

\subsection*{Alternative model formulations}

The model \eqref{eq:dynamics_with_observation} can be formulated alternatively as
\begin{gather}
   \dot{z}_{ij}=-d_i \ z_{ij} + 
	u_iS\bigg( \alpha_i z_{ij} +b_{ij} + \gamma_i  \textstyle\sum_{\substack{k=1 \\ k \neq i}}^\Na(A_{\edits{\rm a}})_{ik} y_{ikj} \nonumber  \\
	 + \beta_i  \textstyle\sum_{\substack{l\neq j\\l=1}}^\No (A_{\edits{\rm o}})_{jl} z_{il} + \delta_i \textstyle\sum_{\substack{l\neq j\\l=1}}^\No \textstyle\sum_{\substack{k = 1\\k \neq i}}^\Na (A_{\edits{\rm o}})_{jl} (A_{\edits{\rm a}})_{ik} y_{ikl} \bigg)  \nonumber 
\end{gather}
where $S:\R{}\to[-k_1,k_2]$ with $k_1,k_2 > 0$ is a saturating function $S(y) = \hat{S}(y) + g(y)$ satisfying $S(0) = 0$, $\hat{S}(-y) = - \hat{S}(y)$, $\hat{S}'(0) = 1$,  $g(0) = 0$, $g'(0) = 0$, $g(-y) \neq -g_m(y)$. In the alternative model above, agents process all dissonance terms together, whereas in the model \eqref{eq:dynamics_with_observation} saturation is applied to social information along each option separately. 

Another reasonable assumption may be that belief states are saturated individually, limiting the influence of any single strong belief. This is captured in a third  model formulation:
\begin{gather}
   \dot{z}_{ij}=-d_i \ z_{ij} + b_{ij} +
	u_i \bigg( \alpha_i S(z_{ij}) + \gamma_i  \textstyle\sum_{\substack{k=1 \\ k \neq i}}^\Na(A_{\edits{\rm a}})_{ik} S(y_{ikj}) \label{eq:model-alternative2} \nonumber \\
	 + \beta_i  \textstyle\sum_{\substack{l\neq j\\l=1}}^\No (A_{\edits{\rm o}})_{jl} S(z_{il}) + \delta_i \textstyle\sum_{\substack{l\neq j\\l=1}}^\No \textstyle\sum_{\substack{k = 1\\k \neq i}}^\Na (A_{\edits{\rm o}})_{jl} (A_{\edits{\rm a}})_{ik} S(y_{ikl})\bigg). \nonumber 
\end{gather}

Due to the local and topological nature of the bifurcation analysis in this paper, all of the results proved for \eqref{eq:dynamics_with_observation} apply directly to the two alternative models above. The three models can be used interchangeably, and selected based on the interpretation of nonlinear processing that best fits an application context.  
The magnitudes of steady-state opinions, given the same set of model parameters, will differ between \eqref{eq:dynamics_with_observation}, \eqref{eq:model-alternative}, and \eqref{eq:model-alternative2} by some nonlinear scaling; however, the local topological features of the flow near the indecision-breaking bifurcation point will be shared by the three models.

\subsection*{Specialization to one topic}

All of the results in this paper apply to the special case $N_{\edits{\rm o}} = 1$, in which agents form scalar beliefs $\Zz_i = z_{i1} \in \mathds{R}$ and the belief system graph $\mathcal{G}_{\edits{\rm o}}$ is trivial. We denote  $z_{i1}$ by $z_i$,  $y_{ik1}$ by $y_{ik}$, and $b_{i1}$ by $b_i$; the dynamics \eqref{eq:dynamics_with_observation} are 
\begin{equation}
    \dot{z}_i = - d \ z_i + u \  S_{1}\left( \alpha z_{i} + \gamma \textstyle \sum_{\substack{k=1 \\ k \neq i}}^\Na(A_{\edits{\rm a}})_{ik} y_{ik}\right) + b_{i}. \label{eq:scalar-dyn}
\end{equation}
The results in Theorems \ref{thm:bif}, \ref{thm:pitchfork-No}, \ref{thm:hopf}, and their various corollaries specialize to \eqref{eq:scalar-dyn} with perfect observation $y_{ik} = z_k$ with the substitutions $A_{\edits{\rm o}} =\mathbf{v}_{\edits{\rm o}}=\mathbf{w}_{\edits{\rm o}}=\mu=1$, $\beta = \delta = 0$. 

\subsection*{Proof of Lemma \ref{lem:dominant-eig}}

i) If $\mathcal{G}$ is Class I, by Definition \ref{def:graph-classes} and Proposition \ref{prop:PerFr}.i,  $A$ is switching equivalent to a graph $\mathcal{G}'$  with adjacency matrix $A'$ that has the strong Perron-Frobenius property. If $\mathcal{G}$ is Class II, $A'$ is nonnegative and irreducible.  
Then by Proposition~\ref{prop:PerFr}.ii, $A'$ has a simple dominant eigenvalue, and since $A$ and $A'$ are similar by the transformation \eqref{eq:sim_isomorphism}, they are are co-spectral. ii)  The similarity transformation between $A$ and $A'$ \eqref{eq:sim_isomorphism} also relates their eigenvectors.

\subsection*{Proof of Theorem \ref{thm:stab}}
Following Lemma \ref{lem:eigs}, for any leading eigenvalue $ \eta(u,\lambda,\mu)$ of \eqref{eq:jac}, $ \operatorname{Re}(\eta(u,\lambda,\mu)) =  -d + u \xi $. When $u<u^*$, $\operatorname{Re}(\eta) < 0$ and therefore $\operatorname{Re}(\tilde{\eta}) < 0$ for all $\tilde{\eta} \in \sigma(J(\mathbf{0},u))$; for $u > u^*$, $\operatorname{Re}(\eta) > 0$. Stability conclusions follow by Lyapunov's indirect method \cite[Theorem 4.7]{Khalil2002}. The existence of an attracting center manifold follows by the Center Manifold Theorem \cite[Theorem 3.2.1]{guckenheimer2013nonlinear}. 
\subsection*{Proof of Lemma \ref{lem:graph-cond} }

 i)  By Lemma \ref{lem:dominant-eig},  $\lambda_{max} = \rho(A_{\edits{\rm a}})$ ($\mu_{max} = \rho(A_{\edits{\rm o}})$) is a simple dominant eigenvalue of $A_{\edits{\rm a}}$ ($A_{\edits{\rm o}}$), where $\operatorname{Re}(\lambda) < \lambda_{max}$ for any $\lambda \in \sigma(A_{\edits{\rm a}}) \setminus \{\lambda_{max}\}$ and $\operatorname{Re}(\mu) < \mu_{max}$ for any $\mu \in \sigma(A_{\edits{\rm o}}) \setminus \{ \mu_{max} \}$.  The rest follows by  definition of $\Lambda_1$. 

     ii) By Lemma \ref{lem:dominant-eig}, $|\lambda_i| < \lambda_{max}$ for any $\lambda_i \in \sigma(A_{\edits{\rm a}}) \setminus \{\lambda_{max} \}$ and $|\mu_j| < \mu_{max}$ for any $\mu_j \in \sigma(A_{\edits{\rm o}}) \setminus \{\mu_{max} \}$. Then $|\lambda_i \mu_j | = |\lambda_i| |\mu_j| < \lambda_{max} \mu_{max}$ unless $\lambda_i = \lambda_{max}$ and $\mu_{j} = \mu_{max}$. The proposition follows by the definition of $\Lambda_2$.
     
     iii) For undirected $\mathcal{G}_{\edits{\rm a}}$, $\mathcal{G}_{\edits{\rm o}}$, the matrices $A_{\edits{\rm a}}$, $A_{\edits{\rm o}}$ are symmetric with real-valued spectra. By Lemma \ref{lem:dominant-eig},  $\lambda_{max} = \rho(A_{\edits{\rm a}})$ ($\mu_{max} = \rho(A_{\edits{\rm o}})$) is a simple dominant eigenvalue of $A_{\edits{\rm a}}$ ($A_{\edits{\rm o}}$). Sign-symmetric graphs have symmetric spectra by \cite[Theorem 3.2]{belardo2019open}, which implies $\lambda_{min} = -\lambda_{max}$ is a simple eigenvalue of $A_{\edits{\rm a}}$ and $\mu_{min} = - \mu_{max}$ is a simple eigenvalue of $A_{\edits{\rm o}}$. Suppose $\lambda_i \in \sigma(A_{\edits{\rm a}})$, $\mu_j \in \sigma(A_{\edits{\rm o}})$. Then $|\lambda_i \mu_j| = |\lambda_i| |\mu_j| < \rho(A_{\edits{\rm a}}) \rho(A_{\edits{\rm o}})$ unless $\lambda_i = \pm \lambda_{max}$ and $\mu_j = \pm \mu_{max}$. The proposition follows by the definition of $\Lambda_2$.
     
     iv) For undirected $\mathcal{G}_{\edits{\rm a}}$, $\mathcal{G}_{\edits{\rm o}}$, the matrices $A_{\edits{\rm a}}$ and $A_{\edits{\rm o}}$ are symmetric and their spectra are real. By Lemma \ref{lem:dominant-eig},  $\lambda_{max} = \rho(A_{\edits{\rm a}})$ ($\mu_{max} = \rho(A_{\edits{\rm o}})$) is a simple dominant eigenvalue of $A_{\edits{\rm a}}$ ($A_{\edits{\rm o}}$). The spectra of $\sigma(A_{\edits{\rm a}})$, $\sigma(A_{\edits{\rm o}})$ are symmetric if and only if both matrices have characteristic polynomial with no even-degree terms \cite[Theorem 3.1]{li2023signed}. When this is satisfied, by the same argument as presented for statement iii), $\Lambda_2 = \{ (\lambda_{max}, \mu_{max}), (-\lambda_{max},-\mu_{max}) \}$. Suppose $\mathcal{G}_{\edits{\rm a}}$ does not satisfy the stated algebraic condition. Then $|\lambda_i| < \lambda_{max}$ for any $\lambda_{i} \in \sigma(A_{\edits{\rm a}}) \setminus \{\lambda_{max}\}$ and $|\lambda_i \mu_j | = |\lambda_i| |\mu_j| < \rho(A_{\edits{\rm a}}) \rho(A_{\edits{\rm o}})$ unless $\lambda_i = \lambda_{max}$ and $\mu_{j} = \mu_{max}$. An analogous argument holds if $\mathcal{G}_{\edits{\rm o}}$ does not satisfy the algebraic condition.

\subsection*{Proof of Proposition \ref{prop:param-effects-1}}

Let $Q(\lambda,\mu) = \gamma \operatorname{Re}(\lambda) + \beta \operatorname{Re}(\mu) + \delta \operatorname{Re}(\lambda \mu)$. 1) When $\delta = 0$, $\operatorname{max}_{(\lambda,\mu) \in \sigma(A_{\edits{\rm a}}) \times \sigma(A_{\edits{\rm o}})} Q(\lambda,\mu) = \gamma \max_{\lambda \in \sigma(A_{\edits{\rm a}})} \operatorname{Re}(\lambda) + \beta \max_{\mu \in \sigma(A_{\edits{\rm o}})} \operatorname{Re}(\mu)$. Any pair $(\lambda, \mu)$ satisfying this condition must be in the set $\Lambda_1$ by the definition of the set. By continuity of eigenvalues, this will hold for $\delta < \varepsilon$. 2) When $\beta = \gamma = 0$, $\operatorname{max}_{(\lambda,\mu) \in \sigma(A_{\edits{\rm a}}) \times \sigma(A_{\edits{\rm o}})} Q(\lambda,\mu) = \delta \operatorname{max}_{(\lambda,\mu) \in \sigma(A_{\edits{\rm a}}) \times \sigma(A_{\edits{\rm o}})} \operatorname{Re}(\lambda \mu)$ and the statement follows trivially by the definition of the set $\Lambda_2$. By continuity of eigenvalues, this will hold for $\gamma < \varepsilon_1$, $\beta < \varepsilon_2$.

\subsection*{Proof of Proposition \ref{prop:param-effects-2}}

1)
Suppose $\lambda,\Tilde{\lambda} \in \sigma(A_{\edits{\rm a}})$, $\mu,\Tilde{\mu} \in \sigma(A_{\edits{\rm o}})$,  $\operatorname{Re}(\lambda) = \lambda_{max}$, $\operatorname{Re}(\mu) = \mu_{max}$ and at least one of the following holds: $\operatorname{Re}(\Tilde{\lambda}) < \lambda_{max}$, $\operatorname{Re}(\Tilde{\mu}) < \mu_{max}$, i.e. $(\lambda,\mu) \in \Lambda_1$ and $(\Tilde{\lambda}, \Tilde{\mu}) \not \in \Lambda_1$. 
Then $\operatorname{Re}(\eta(u,\lambda,\mu)) - \operatorname{Re}(\eta(u,\tilde{\lambda},\tilde{\mu})) = u \gamma \big(\lambda_{max} - \operatorname{Re}(\tilde{\lambda})\big) + u \beta \big( \mu_{max} - \operatorname{Re}(\tilde{\mu}) \big) + u \delta \big( \operatorname{Re}(\lambda \mu) - \operatorname{Re}(\tilde{\lambda} \tilde{\mu})\big) > 0$ whenever $\gamma \big(\lambda_{max} - \operatorname{Re}(\tilde{\lambda}) \big) + \beta\big(\mu_{max} - \operatorname{Re}(\Tilde{\mu})\big) > - \delta \big( \operatorname{Re}(\lambda \mu) - \operatorname{Re}(\tilde{\lambda} \tilde{\mu})\big)$. 
Let $Q(\gamma,\beta,\Tilde{\lambda},\Tilde{\mu})$ be the left-hand side of this inequality and define $K := \min_{\lambda, \Tilde{\lambda}\in \sigma(A_{\edits{\rm a}}), \mu, \Tilde{\mu} \in \sigma(A_{\edits{\rm o}})}\big( \operatorname{Re}(\lambda \mu) - \operatorname{Re}(\tilde{\lambda} \tilde{\mu})\big) < 0$. 
Observe that $\frac{\partial Q }{ \partial \gamma} = \lambda_{max} - \operatorname{Re}(\tilde{\lambda}) \geq 0$ and $\frac{\partial Q }{ \partial \beta} = \mu_{max} - \operatorname{Re}(\Tilde{\mu}) \geq 0$, with at least one of the two inequalities being strict. 
Therefore increasing the magnitude of $\gamma$ and/or of $\beta$ will eventually make the inequality $Q(\gamma,\beta,\Tilde{\lambda},\Tilde{\mu}) > - \delta K := K_c > 0$ valid for any choice of $(\Tilde{\lambda}, \Tilde{\mu}) \in \sigma(A_{\edits{\rm a}})\times \sigma(A_{\edits{\rm o}})\setminus \Lambda_1$. Furthermore, suppose $(\hat{\lambda},\hat{\mu})\in \sigma(A_{\edits{\rm a}})\times \sigma(A_{\edits{\rm o}})$, $\operatorname{Re}(\hat{\lambda}) = \lambda_2$, and $\operatorname{Re}(\hat{\mu}) = \mu_2$. For any $(\Tilde{\lambda}, \Tilde{\mu}) \in \sigma(A_{\edits{\rm a}})\times \sigma(A_{\edits{\rm o}})\setminus \Lambda_1$, $Q(\gamma, \beta, \Tilde{\lambda}, \Tilde{\mu}) \geq Q(\gamma, \beta, \hat{\lambda}, \hat{\mu})$. Then the inequality $Q(\gamma, \beta, \hat{\lambda}, \hat{\mu}) > K_c$ implies that  that $\operatorname{Re}(\eta(u,\lambda,\mu)) >  \operatorname{Re}(\eta(u,\tilde{\lambda},\tilde{\mu}))$ for any $(\Tilde{\lambda}, \Tilde{\mu}) \in \sigma(A_{\edits{\rm a}})\times \sigma(A_{\edits{\rm o}})\setminus \Lambda_1$. This shows $(\lambda,\mu) \in \Lambda_1 \implies (\lambda,\mu) \in \Lambda$. 

Now suppose $(\lambda,\mu) \in \Lambda$ and $Q(\gamma,\beta,\hat{\lambda},\hat{\mu}) > K_c > 0$ for all $(\hat{\lambda},\hat{\mu}) \in \sigma(A_{\edits{\rm a}}) \times \sigma(A_{\edits{\rm o}})\setminus \Lambda_1$. 
Then by the definition of the set $\Lambda$,  $\operatorname{Re}(\eta(\lambda,\mu,u) ) - \operatorname{Re}(\eta(\Tilde{\lambda},\Tilde{\mu},u)) \geq 0$ for any $(\tilde{\lambda},\Tilde{\mu}) \in \sigma(A_{\edits{\rm a}}) \times \sigma(A_{\edits{\rm o}})$, with equality holding if and only if $(\tilde{\lambda},\Tilde{\mu}) \in \Lambda$. 
Suppose $\operatorname{Re}(\tilde{\lambda}) = \lambda_{max}$, $\operatorname{Re}(\Tilde{\mu}) = \mu_{max}$. 
Then
$\operatorname{Re}(\eta(u,\lambda,\mu)) - \operatorname{Re}(\eta(u,\tilde{\lambda},\tilde{\mu})) 
= u \gamma \big(\operatorname{Re}(\lambda) - \lambda_{max}\big) + u \beta \big( \operatorname{Re}(\mu) - \mu_{max} \big) + u \delta \big( \operatorname{Re}(\lambda \mu) - \operatorname{Re}(\tilde{\lambda} \tilde{\mu})\big) \geq 0$ which means $-Q(\gamma,\beta,\lambda,\mu) \geq -  \delta \big( \operatorname{Re}(\lambda \mu) - \operatorname{Re}(\tilde{\lambda} \tilde{\mu})\big)$. 
If $(\lambda, \mu) \not \in \Lambda_1$, then $-K_c > - Q(\gamma,\beta,\lambda,\mu) > -\delta \big( \operatorname{Re}(\lambda \mu) - \operatorname{Re}(\tilde{\lambda} \tilde{\mu})\big) \implies K_c < -\delta \big(\operatorname{Re}(\tilde{\lambda} \tilde{\mu}) - \operatorname{Re}(\lambda \mu) \big)$. This is a contradiction of the definition $K_c = \operatorname{max}_{\lambda, \Tilde{\lambda}\in \sigma(A_{\edits{\rm a}}), \mu, \Tilde{\mu} \in \sigma(A_{\edits{\rm o}})} -\delta \big(\operatorname{Re}(\lambda \mu) - \operatorname{Re}(\Tilde{\lambda} \Tilde{\mu}) \big)$. Therefore $(\lambda,\mu) \in \Lambda_1$ and the proposition follows. 
}

2) Suppose $\lambda,\Tilde{\lambda} \in \sigma(A_{\edits{\rm a}})$, $\mu,\Tilde{\mu} \in \sigma(A_{\edits{\rm o}})$,$\operatorname{Re}(\lambda \mu) = (\lambda \mu)_{max}$, $\operatorname{Re}(\Tilde{\lambda} \Tilde{\mu}) < (\lambda \mu)_{max}$, i.e. $(\lambda,\mu) \in \Lambda_2$ and $(\Tilde{\lambda},\Tilde{\mu}) \not \in \Lambda_2$. 
Then $\operatorname{Re}(\eta(u,\lambda,\mu)) - \operatorname{Re}(\eta(u,\tilde{\lambda},\tilde{\mu})) 
= u \gamma \big(\operatorname{Re}(\lambda) - \operatorname{Re}(\tilde{\lambda})\big) + u \beta \big( \operatorname{Re}(\mu) - \operatorname{Re}(\tilde{\mu}) \big) + u \delta \big( (\lambda \mu)_{max} - \operatorname{Re}(\tilde{\lambda} \tilde{\mu})\big) > 0$
whenever 
\begin{equation}
    \delta > \frac{ - \beta \big( \operatorname{Re}(\mu) - \operatorname{Re}(\tilde{\mu}) \big) - \gamma \big( \operatorname{Re}(\lambda) - \operatorname{Re}(\tilde{\lambda})\big) }{(\lambda \mu)_{max} - \operatorname{Re}(\Tilde{\lambda} \Tilde{\mu})}. \label{eq:lambd_cond}
\end{equation}
Define the right-hand side of \eqref{eq:lambd_cond} to be $h(\lambda,\mu,\Tilde{\lambda},\Tilde{\mu})$ and let $\delta_c := \max_{\lambda, \Tilde{\lambda}\in \sigma(A_{\edits{\rm a}}), \mu, \Tilde{\mu} \in \sigma(A_{\edits{\rm o}}), (\Tilde{\lambda},\tilde{\mu}) \not \in \Lambda_2} h(\lambda,\mu,\Tilde{\lambda},\Tilde{\mu})$. 
Then assuming $\delta > \delta_c$, $\operatorname{Re}(\eta(u,\lambda,\mu)) > \operatorname{Re}(\eta(u,\Tilde{\lambda},\Tilde{\mu}))$ for any $(\Tilde{\lambda},\Tilde{\mu}) \not \in \Lambda_2$, which means $(\lambda,\mu) \in \Lambda_1$ implies $(\lambda,\mu) \in \Lambda$.

$\operatorname{Re}(\lambda) = \lambda_{max}$ implies that $\operatorname{Re}(\eta(u,\lambda,\mu)) >  \operatorname{Re}(\eta(u,\tilde{\lambda},\tilde{\mu}))$ for any $\Tilde{\lambda} \in \sigma(A_{\edits{\rm a}})$ with $\operatorname{Re}(\Tilde{\lambda}) < \lambda_{max}$ and any choice of $\mu, \Tilde{\mu} \in \sigma(A_{\edits{\rm o}})$. 

Next, suppose $\delta > \delta_c$, $(\lambda,\mu) \in \Lambda$, and $\operatorname{Re}(\lambda \mu) < (\lambda \mu)_{max}$. 
Then since $\delta > \delta_c$, for any eigenvalue pair $\hat{\lambda}\in \sigma(A_{\edits{\rm a}})$, $\hat{\mu} \in \sigma(A_{\edits{\rm o}})$ with $\operatorname{Re}(\hat{\lambda} \hat{\mu}) = (\lambda \mu)_{max}$ it must hold that $ \operatorname{Re}(\eta(\hat{\lambda},\hat{\mu},u)) > \operatorname{Re}(\eta,\lambda, u)$. 
However, by  definition of  set $\Lambda$ \eqref{eq:Lambda}, it must hold that $\operatorname{Re}(\eta(\lambda,\mu,u)) \geq \operatorname{Re}(\eta(\hat{\lambda},\hat{\mu},u))$, which is a contradiction.
Therefore, whenever $\gamma > \gamma_c$ and $(\lambda,\mu) \in \Lambda$, it must hold that $\operatorname{Re}(\lambda \mu) = (\lambda \mu)_{max}$, i.e. $(\lambda,\mu) \in \Lambda_2$.



\subsection*{Proof of Theorem \ref{thm:pitchfork-No}}

To identify the pitchfork bifurcation in the model \eqref{EQ:value_dynamics} we rely on the singularity theory of bifurcations \cite{Golubitsky1985}. 
For a concise overview of the main ideas of this approach we refer the reader to Chapter 2.3 of the doctoral dissertation \cite{AnastasiaThesis}.

1) From Theorem \ref{thm:stab} we know $\Zz = \mathbf{0}$ is stable for $u < u^*$ and unstable for $u > u^*$; when $u = u^*$, a bifurcation happens along a manifold tangent to $\operatorname{span}\{\mathbf{v}_{\edits{\rm a}} \otimes \mathbf{v}_{\edits{\rm o}}\}$. To identify this bifurcation we compute a polynomial expansion of a Lyapunov-Schmidt (LS) reduction $f(y,u,\mathbf{b})$ of \eqref{EQ:value_dynamics} following the procedure outlined in \cite[Chapter I]{Golubitsky1985}. We derive the coefficients through third order. With the odd symmetry assumption on $S_m$, $S_m''(0) = 0$ and a LS reduction approximation reads
$f(y,u,\mathbf{0}) = K_0 (u - u^*) y  + d K_1 / K_0  \ y^3$. 
To establish a pitchfork bifurcation, we verify that this LS reduction satisfies the necessary and sufficient conditions outlined in \cite[Chapter II, Proposition 9.2]{Golubitsky1985}: $f(0,u^*,\mathbf{0}) = \frac{\partial f}{\partial y}(0,u^*,\mathbf{0}) = \frac{\partial^2 f}{\partial y^2}(0,u^*,\mathbf{0}) = \frac{\partial f}{\partial u}(0,u^*,\mathbf{0}) = 0$, $\frac{\partial^2 f}{ \partial y \partial u}(0,u^*,\mathbf{0}) \neq 0$, $\frac{\partial^3 f}{ \partial y^3}(0,u^*,\mathbf{0}) \neq 0$. Linearization of the system about each of the two bifurcating fixed points shares the $N-1$ stable eigenvalues with the origin, and the bifurcating eigenvalue is negative (or positive) under the conditions in the theorem statement by \cite[Chapter I, Theorem 4.1]{Golubitsky1985}.

2) and 3) With nonzero $\mathbf{b}$, $g_1$, $g_2$, the coefficients of the polynomial expansion of $f(y,u,\mathbf{b})$ through third order take the form $ f(y,u,\mathbf{b}) = K_0 (u - u^*) y  +  K_b + K_3 y^3 + d K_2/K_0 y^2$
where $K_b = \langle \mathbf{w}_{\edits{\rm a}} \otimes \mathbf{w}_{\edits{\rm o}}, \mathbf{b} \rangle $ and $K_3 = d K_1/K_0 - 3 \langle \mathbf{w}_{\edits{\rm a}} \otimes \mathbf{w}_{\edits{\rm o}}, (u^*)^2 (g_1''(0) (\alpha +\gamma \lambda)^2 +  g_2''(0) \mu^{2} (\beta + \delta \lambda)^2 )
  (g_1''(0) (\alpha +\gamma \lambda) \mathbf{v}_{\edits{\rm a}} \otimes \mathbf{v}_{\edits{\rm o}} \odot ( (\alpha \mathcal{I}_{N_{\edits{\rm a}}} + \gamma A_{\edits{\rm a}}) \otimes \mathcal{I}_{N_{\edits{\rm o}}} ) (J^{-1} P (\mathbf{v}_{\edits{\rm a}} \otimes \mathbf{v}_{\edits{\rm o}})^2) + g_2''(0)\mu (\beta +\delta \lambda) \mathbf{v}_{\edits{\rm a}} \otimes \mathbf{v}_{\edits{\rm o}} \odot \left( (\beta \mathcal{I}_{N_{\edits{\rm a}}} + \delta A_{\edits{\rm a}}) \otimes A_{\edits{\rm o}} \right) (J^{-1} P (\mathbf{v}_{\edits{\rm a}} \otimes \mathbf{v}_{\edits{\rm o}})^2) ) \rangle$  where $J^{-1}$ is the inverse of the restriction of $J(\mathbf{0},u^*)$ to its range, and $P = \mathcal{I}_{N_{\edits{\rm a}} N_{\edits{\rm o}}} - \frac{1}{\| \mathbf{v}_{\edits{\rm a}} \otimes \mathbf{v}_{\edits{\rm o}} \|^2} (\mathbf{v}_{\edits{\rm a}} \otimes \mathbf{v}_{\edits{\rm o}})(\mathbf{v}_{\edits{\rm a}} \otimes \mathbf{v}_{\edits{\rm o}})^T$ is a projection onto the range of $J(\mathbf{0},u^*)$. Observe that whenever $g_1''(0) = 0$ and $g_2''(0) = 0$, the cubic coefficient $K_3 = d K_1/K_0$. Therefore by continuity there exist small positive $\varepsilon_{g1}$, $\varepsilon_{g2}$ such that $\operatorname{sign}(K_3) = \operatorname{sign}(K_1)$ whenever $|g_1''(0)| < \varepsilon_{g1}$ and $|g_2''(0)| < \varepsilon_{g2}$. Observe that $K_2,K_b$ are small. The bifurcation diagram $f(y,u,\mathbf{b}) = 0$ is an unfolding (i.e. a perturbation) of a symmetric pitchfork bifurcation diagram, and the characterization of equilibria as a function of $K_b, K_2$ follow from standard analysis of a pitchfork bifurcation universal unfolding \cite[Chapter I \S 1]{Golubitsky1985}.

\subsection*{Proof of Corollary \ref{cor:pitchfork_conds}}
 i) By Lemma~\ref{lem:graph-cond}.i, $\Lambda_1 = \{(\lambda_{max},\mu_{max})\}$. Since $\lambda_{max},\mu_{max} > 0$, $\alpha + \gamma \lambda_{max} + \beta \mu_{max} + \delta \lambda_{max} \mu_{max} > 0$ and the conditions of Theorem \ref{thm:pitchfork-No} are satisfied whenever $\Lambda = \Lambda_1$. 
    ii) By Lemma \ref{lem:graph-cond}.ii, $\Lambda = \Lambda_1$ for any choice of model parameters. By the argument in part i), the conditions of Theorem \ref{thm:pitchfork-No} are satisfied and the statement follows.     
    iii) By Theorem \ref{thm:bif}, the equilibria lie on a manifold that is approximated arbitrarily closely by the span of $\mathbf{v}_{\edits{\rm a}} \otimes \mathbf{v}_{\edits{\rm o}}$ when $|u - u^*|$ is small. By Lemma \ref{lem:dominant-eig}.ii, there exists a vector $\mathbf{v}' \succ \mathbf{0}$ such that $\mathbf{v}_{\edits{\rm a}} = \Theta P_n \mathbf{v}'$. Vector $P_n \mathbf{v}' \succ \mathbf{0}$. Thus, $(\mathbf{v}_{\edits{\rm a}})_i > 0 (< 0)$ if and only if $\theta(i) = 1 (-1)$. Belief $z_{ij}^* \approx (\mathbf{v}_{\edits{\rm a}})_i (\mathbf{v}_{\edits{\rm o}})_j$; as long as $(\mathbf{v}_{\edits{\rm o}})_j \neq 0$, $\operatorname{sign}(z_{kj}^*) = \operatorname{sign}((\mathbf{v}_{\edits{\rm a}})_k (\mathbf{v}_{\edits{\rm o}})_j) = \operatorname{sign}((\mathbf{v}_{\edits{\rm a}})_i (\mathbf{v}_{\edits{\rm o}})_j) = \operatorname{sign}(z_{ij}^*)$ if and only if $\theta(i) = \theta(k)$.
    iv) The statement follows by analogy to part iii).

\subsection*{Proof of Theorem \ref{thm:hopf}}


This proof is mostly verbatim from \cite[Theorem IV.3]{bizyaeva2022sustained}, with the argument generalized slightly to accommodate asymmetric $S_1,S_2$. Let $C = \alpha + \gamma \operatorname{Re}(\lambda^\dagger) + \beta \operatorname{Re}(\mu^\dagger) + \delta \operatorname{Re}(\lambda^\dagger \mu^\dagger)$.

1) To establish existence of periodic orbits we check that the system \eqref{EQ:value_dynamics} under the stated assumptions satisfies the conditions of the Hopf bifurcation theorem \cite[Theorem 3.4.2]{guckenheimer2013nonlinear}. When $u = u^* = d/C$, the leading eigenvalues of \eqref{eq:jac} are a simple purely imaginary pair
$
\eta_{\pm}(u^*) = \pm i u^* \big| \gamma \operatorname{Im}(\lambda^\dagger) + \beta \operatorname{Im}(\mu^{\dagger}) + \delta \operatorname{Im}(\lambda^{\dagger}\mu^{\dagger}) \big| \neq 0 
$, 
which satisfies the eigenvalue condition (H1) of the Hopf theorem. Next, we check that the leading eigenvalues cross the imaginary axis with nonzero speed as $u$ is varied, i.e. 
$
    \frac{d}{du}\operatorname{Re}(\eta_{\pm}(u)) =  C > 0 
$, 
which satisfies the nonzero crossing speed condition (H2) of the Hopf theorem. Existence of periodic orbits directly follows by the Hopf theorem. By this theorem and by the definition of a center manifold \cite[Theorem 3.2.1]{guckenheimer2013nonlinear}, the solutions appear along a unique $W^s$ which is tangent at $u = u^*$ to $\operatorname{Ker}\big(J(\mathbf{0},u^*)\big) = \operatorname{span}\{ \operatorname{Re}(\mathbf{v}_{\edits{\rm a}} \otimes \mathbf{v}_{\edits{\rm o}}), \operatorname{Im}(\mathbf{v}_{\edits{\rm a}} \otimes \mathbf{v}_{\edits{\rm o}}) \} $.

To show 2) and 3) we first compute 
a third-order approximation of \eqref{EQ:value_dynamics} following the Lyapunov-Schmidt reduction for a Hopf bifurcation  \cite[Chapter VIII, Proposition 3.3]{Golubitsky1985}:  
$
    f(y,u) = C y (u - u^*)  + \frac{1}{16} u^* K_3 y^3
$, where $K_3 = K + K_g(g_1''(0), g_2''(0))$ and $K_g(0,0) = 0$.
As long as $K_3 \neq 0$, by \cite[Chapter VIII, Theorems 2.1 and 3.2]{Golubitsky1985} the reduced bifurcation equation $f(y,u)$ possesses a pitchfork bifurcation which is supercritical for $K_3<0$ and subcritical for $K_3>0$. By continuity there exist small positive $\varepsilon_{g1},\varepsilon_{g2}$ such that $\operatorname{sign}(K_3) = \operatorname{sign}(K)$ whenever $|g_1''(0)| < \varepsilon_{g1}$ and $|g_2''(0)| < \varepsilon_{g2}$ from which the theorem statement follows.

When
$\lvert u - u^*\rvert$ is small, solutions to $f(y,u) = 0$ are in one-to-one correspondence with orbits of small amplitude periodic solutions to the system \eqref{EQ:value_dynamics} with period near $ 2 \pi/(u^*|\gamma \operatorname{Im}(\lambda^\dagger) + \beta \operatorname{Im}(\mu^{\dagger}) + \delta \operatorname{Im}(\lambda^{\dagger}\mu^{\dagger})|):=1/\omega$. For $u$ near $u^*$, the small amplitude oscillations can be approximated to first order as scalar multiples of $e^{i \omega t} \mathbf{v}_{\edits{\rm a}} \otimes \mathbf{v}_{\edits{\rm o}}$ from which the conclusions on phase and amplitude difference between agents follow. When $K_3 < 0$ ($>0$), and therefore by the same continuity argument as above when $K < 0$ ($>0$), the bifurcating periodic solutions are stable (unstable) as established in \cite[Chapter VIII, Theorem 4.1]{Golubitsky1985}. 

\subsection*{Proof of Corollary \ref{cor:balanced-oscillations}}

i) The set $\Lambda_1 = \{(\lambda_{max}, \mu), (\lambda_{max},\overline{\mu}) \}$ corresponding to $\mathcal{G}_{\edits{\rm a}}$, $\mathcal{G}_{\edits{\rm o}}$ satisfies the first case for the necessary condition in Assumption \ref{ass1} and any social dissonance-driven bifurcation is a Hopf bifurcation. By part 3) of Theorem \ref{ex:hopf}, phase difference between $z_{ij}^*(t)$ and $z_{kj}^*(t)$ is near $\varphi_{ik}^{jj} = \operatorname{arg}((\mathbf{v}_{\edits{\rm a}})_i (\mathbf{v}_{\edits{\rm o}})_j) - \operatorname{arg}((\mathbf{v}_{\edits{\rm a}})_k (\mathbf{v}_{\edits{\rm o}})_j) = \operatorname{arg}((\mathbf{v}_{\edits{\rm a}})_i) + \operatorname{arg}(\mathbf{v}_{\edits{\rm o}})_j) - \operatorname{arg}((\mathbf{v}_{\edits{\rm a}})_k) - \operatorname{arg}(\mathbf{v}_{\edits{\rm o}})_j) = \operatorname{arg}\left( \frac{(\mathbf{v}_{\edits{\rm a}})_i}{(\mathbf{v}_{\edits{\rm a}})_k} \right)$ which means, since all entries of $\mathbf{v}_{\edits{\rm a}}$ are real-valued, $\varphi_{ik}^{jj} = 0$ if $\operatorname{sign}((\mathbf{v}_{\edits{\rm a}})_{i}) = \operatorname{sign}((\mathbf{v}_{\edits{\rm a}})_{k})$ and $\varphi_{ik}^{jj} = \pi$ if $\operatorname{sign}((\mathbf{v}_{\edits{\rm a}})_{i}) \neq \operatorname{sign}((\mathbf{v}_{\edits{\rm a}})_{k})$. By Lemma \ref{lem:dominant-eig}.ii, there exists a vector $\mathbf{v}' \succ \mathbf{0}$ such that $\mathbf{v}_{\edits{\rm a}} = \Theta P_n \mathbf{v}'$. 
$P_n \mathbf{v}' \succ \mathbf{0}$, and it follows that $(\mathbf{v}_{\edits{\rm a}})_i > 0 (< 0)$ if and only if $\theta(i) = 1 (-1)$. So $\varphi_{ik}^{jj} = 0$ whenever $i,k$ both belong to either $\mathcal{V}_{\edits{\rm a 1}}$ or $\mathcal{V}_{\edits{\rm a 2}}$; otherwise $\varphi_{ik}^{jj} = \pi$.  Part ii) follows by an analogous argument.

\subsection*{Proof of Corollary \ref{cor:graph-cond-Hopf}}

1) Due to the symmetry of the adjacency matrices, eigenvalues $\lambda \in \sigma(A_{\edits{\rm a}})$ and $\mu \in \sigma(A_{\edits{\rm o}})$ must be real-valued, which violates the necessary conditions in Assumption \ref{ass1}. 

    2) Without loss of generality, let $\mathcal{G}_{\edits{\rm o}}$ be the circulant graph. 
    By Lemma \ref{lem:dominant-eig}, $A_{\edits{\rm a}}$ has a simple dominant eigenvalue $\lambda$. $A$ has $N = N_{\edits{\rm o}}$ eigenvalues $\mu_m =  - \operatorname{exp}(2 \pi m i/N)$, $m = 0, \dots, N-1$ \cite[2.2.P10]{horn2012matrix}. A leading eigenvalue of $A$, and thus of $A_{\edits{\rm o}}$, since the two are co-spectral, has real part $\lambda_{max} = \max_{m = 0, \dots, N-1} \operatorname{Re}(\mu_m) =\max_{m = 0, \dots, N-1}  - \cos(2 \pi m/N)$, i.e., at $m = (N-1)/2$ and $m = (N+1)/2$ since $N$ is odd. Since $\mu_{(N-1)/2} = \overline{\mu}_{(N+1)/2}$, the set of leading eigenvalue pairs is  $\Lambda_1 = \{ (\lambda, \mu_{(N+1)/2}),(\lambda,\overline{\mu}_{(N+1)/2})\}$ which satisfies the necessary condition in Assumption \ref{ass1}.

\bibliographystyle{IEEEtran}
\bibliography{references}

\begin{thebibliography}{10}
\providecommand{\url}[1]{#1}
\csname url@samestyle\endcsname
\providecommand{\newblock}{\relax}
\providecommand{\bibinfo}[2]{#2}
\providecommand{\BIBentrySTDinterwordspacing}{\spaceskip=0pt\relax}
\providecommand{\BIBentryALTinterwordstretchfactor}{4}
\providecommand{\BIBentryALTinterwordspacing}{\spaceskip=\fontdimen2\font plus
\BIBentryALTinterwordstretchfactor\fontdimen3\font minus \fontdimen4\font\relax}
\providecommand{\BIBforeignlanguage}[2]{{%
\expandafter\ifx\csname l@#1\endcsname\relax
\typeout{** WARNING: IEEEtran.bst: No hyphenation pattern has been}%
\typeout{** loaded for the language `#1'. Using the pattern for}%
\typeout{** the default language instead.}%
\else
\language=\csname l@#1\endcsname
\fi
#2}}
\providecommand{\BIBdecl}{\relax}
\BIBdecl

\bibitem{olfati2004consensus}
R.~Olfati-Saber and R.~M. Murray, ``Consensus problems in networks of agents with switching topology and time-delays,'' \emph{IEEE Trans. Autom. Control}, vol.~49, no.~9, pp. 1520--1533, 2004.

\bibitem{montes2010opinion}
M.~A. Montes~de Oca, E.~Ferrante, N.~Mathews, M.~Birattari, and M.~Dorigo, ``Opinion dynamics for decentralized decision-making in a robot swarm,'' in \emph{Swarm Intelligence. ANTS 2010.}, vol. 6234.\hskip 1em plus 0.5em minus 0.4em\relax Springer, 2010, pp. 251--262.

\bibitem{montes2011majority}
M.~A. Montes~de Oca, E.~Ferrante, A.~Scheidler, C.~Pinciroli, M.~Birattari, and M.~Dorigo, ``Majority-rule opinion dynamics with differential latency: a mechanism for self-organized collective decision-making,'' \emph{Swarm Intelligence}, vol.~5, pp. 305--327, 2011.

\bibitem{altafini2012consensus}
C.~Altafini, ``Consensus problems on networks with antagonistic interactions,'' \emph{IEEE Trans. Autom. Control}, vol.~58, no.~4, pp. 935--946, 2012.

\bibitem{HaiminCDC2023}
H.~Hu, K.~Nakamura, K.-C. Hsu, N.~E. Leonard, and J.~F. Fisac, ``Emergent coordination through game-induced nonlinear opinion dynamics,'' \emph{Proc. IEEE CDC}, pp. 8122--8129, 2023.

\bibitem{cathcart2022opinion}
C.~Cathcart, M.~Santos, S.~Park, and N.~E. Leonard, ``Proactive opinion-driven robot navigation around human movers,'' \emph{Proc. IROS}, pp. 4052--4058, 2023.

\bibitem{leung2023leveraging}
H.~C. Leung, Z.~Li, B.~She, and P.~E. Par{\'e}, ``Leveraging opinions and vaccination to eradicate networked epidemics,'' in \emph{2023 European Control Conference}, 2023, pp. 1--6.

\bibitem{paine2023model}
T.~M. Paine and M.~R. Benjamin, ``\edits{A Model for Multi-Agent Autonomy That Uses Opinion Dynamics and Multi-Objective Behavior Optimization},'' \emph{Proc. IEEE ICRA}, 2024.

\bibitem{galesic2021human}
M.~Galesic, W.~Bruine~de Bruin, J.~Dalege, S.~L. Feld, F.~Kreuter, H.~Olsson, D.~Prelec, D.~L. Stein, and T.~van Der~Does, ``Human social sensing is an untapped resource for computational social science,'' \emph{Nature}, vol. 595, no. 7866, pp. 214--222, 2021.

\bibitem{dalege2023networks}
J.~Dalege, M.~Galesic, and H.~Olsson, ``Networks of beliefs: An integrative theory of individual-and social-level belief dynamics,'' \emph{OSF Preprints}, 2023.

\bibitem{vlasceanu2022network}
M.~Vlasceanu, A.~M. Dyckovsky, and A.~Coman, ``A network approach to investigate the dynamics of individual and collective beliefs: Advances and applications of the bending model,'' \emph{Perspect. Psychol. Sci.}, 2023.

\bibitem{suzuki2015neural}
S.~Suzuki, R.~Adachi, S.~Dunne, P.~Bossaerts, and J.~P. O’Doherty, ``Neural mechanisms underlying human consensus decision-making,'' \emph{Neuron}, vol.~86, no.~2, pp. 591--602, 2015.

\bibitem{strandburg2013visual}
A.~Strandburg-Peshkin, C.~R. Twomey, N.~W. Bode, A.~B. Kao, Y.~Katz, C.~C. Ioannou, S.~B. Rosenthal, C.~J. Torney, H.~S. Wu, S.~A. Levin \emph{et~al.}, ``Visual sensory networks and effective information transfer in animal groups,'' \emph{Current Biology}, vol.~23, no.~17, pp. R709--R711, 2013.

\bibitem{rosenthal2015revealing}
S.~B. Rosenthal, C.~R. Twomey, A.~T. Hartnett, H.~S. Wu, and I.~D. Couzin, ``Revealing the hidden networks of interaction in mobile animal groups allows prediction of complex behavioral contagion,'' \emph{Proc. Natl. Acad. Sci. U.S.A.}, vol. 112, no.~15, pp. 4690--4695, 2015.

\bibitem{sridhar2021geometry}
V.~H. Sridhar, L.~Li, D.~Gorbonos, M.~Nagy, B.~R. Schell, T.~Sorochkin, N.~S. Gov, and I.~D. Couzin, ``The geometry of decision-making in individuals and collectives,'' \emph{Proc. Natl. Acad. Sci. U.S.A.}, vol. 118, no.~50, 2021.

\bibitem{oscar2023simple}
L.~Oscar, L.~Li, D.~Gorbonos, I.~Couzin, and N.~Gov, ``A simple cognitive model explains movement decisions during schooling in zebrafish,'' \emph{bioRxiv}, pp. 2023--02, 2023.

\bibitem{bizyaeva2022}
A.~Bizyaeva, A.~Franci, and N.~E. Leonard, ``Nonlinear opinion dynamics with tunable sensitivity,'' \emph{IEEE Trans. Autom. Control}, vol.~68, no.~3, pp. 1415--1430, 2022.

\bibitem{franci2022breaking}
A.~Franci, M.~Golubitsky, I.~Stewart, A.~Bizyaeva, and N.~E. Leonard, ``Breaking indecision in multiagent, multioption dynamics,'' \emph{SIAM J. Appl. Dynamical Syst.}, vol.~22, no.~3, pp. 1780--1817, 2023.

\bibitem{degroot1974reaching}
M.~H. DeGroot, ``Reaching a consensus,'' \emph{Journal of the American Statistical Association}, vol.~69, no. 345, pp. 118--121, 1974.

\bibitem{friedkin2016network}
N.~E. Friedkin, A.~V. Proskurnikov, R.~Tempo, and S.~E. Parsegov, ``Network science on belief system dynamics under logic constraints,'' \emph{Science}, vol. 354, no. 6310, pp. 321--326, 2016.

\bibitem{parsegov2016novel}
S.~E. Parsegov, A.~V. Proskurnikov, R.~Tempo, and N.~E. Friedkin, ``Novel multidimensional models of opinion dynamics in social networks,'' \emph{IEEE Trans. Autom. Control}, vol.~62, no.~5, pp. 2270--2285, 2016.

\bibitem{ye2019consensus}
M.~Ye, J.~Liu, L.~Wang, B.~D. Anderson, and M.~Cao, ``Consensus and disagreement of heterogeneous belief systems in influence networks,'' \emph{IEEE Trans. Autom. Control}, vol.~65, no.~11, pp. 4679--4694, 2019.

\bibitem{he2022opinion}
G.~He, Z.~Ci, X.~Wu, and M.~Hu, ``Opinion dynamics with antagonistic relationship and multiple interdependent topics,'' \emph{IEEE Access}, vol.~10, pp. 31\,595--31\,606, 2022.

\bibitem{pan2018bipartite}
L.~Pan, H.~Shao, M.~Mesbahi, Y.~Xi, and D.~Li, ``Bipartite consensus on matrix-valued weighted networks,'' \emph{IEEE Trans. Circuits Syst. II}, vol.~66, no.~8, pp. 1441--1445, 2018.

\bibitem{nedic2019graph}
A.~Nedi{\'c}, A.~Olshevsky, and C.~A. Uribe, ``Graph-theoretic analysis of belief system dynamics under logic constraints,'' \emph{Sci. Rep.}, vol.~9, no.~1, p. 8843, 2019.

\bibitem{ye2020continuous}
M.~Ye, M.~H. Trinh, Y.-H. Lim, B.~D. Anderson, and H.-S. Ahn, ``Continuous-time opinion dynamics on multiple interdependent topics,'' \emph{Automatica}, vol. 115, p. 108884, 2020.

\bibitem{ahn2020opinion}
H.-S. Ahn, Q.~Van~Tran, M.~H. Trinh, M.~Ye, J.~Liu, and K.~L. Moore, ``Opinion dynamics with cross-coupling topics: Modeling and analysis,'' \emph{IEEE Trans. Comput. Soc. Syst.}, vol.~7, no.~3, pp. 632--647, 2020.

\bibitem{Liu2021interplay}
F.~Liu, S.~Cui, W.~Mei, F.~Dörfler, and M.~Buss, ``Interplay between homophily-based appraisal dynamics and influence-based opinion dynamics: Modeling and analysis,'' \emph{IEEE Control Systems Letters}, vol.~5, no.~1, pp. 181--186, 2021.

\bibitem{wang2022characterizing}
C.~Wang, L.~Pan, H.~Shao, D.~Li, and Y.~Xi, ``Characterizing bipartite consensus on signed matrix-weighted networks via balancing set,'' \emph{Automatica}, vol. 141, p. 110237, 2022.

\bibitem{he2023opinion}
G.~He, Z.~Shen, T.~Huang, W.~Zhang, and X.~Wu, ``Opinion dynamics with heterogeneous multiple interdependent topics on the signed social networks,'' \emph{IEEE Trans. Syst. Man. Cybern}, 2023.

\bibitem{lorenz2007continuous}
J.~Lorenz, ``Continuous opinion dynamics under bounded confidence: A survey,'' \emph{Int. J. Mod. Phys. C}, vol.~18, no.~12, pp. 1819--1838, 2007.

\bibitem{dandekar2013biased}
P.~Dandekar, A.~Goel, and D.~T. Lee, ``Biased assimilation, homophily, and the dynamics of polarization,'' \emph{Proc. Natl. Acad. Sci. U.S.A.}, vol. 110, no.~15, pp. 5791--5796, 2013.

\bibitem{Gray2018}
R.~{Gray}, A.~{Franci}, V.~{Srivastava}, and N.~E. {Leonard}, ``Multiagent decision-making dynamics inspired by honeybees,'' \emph{IEEE Trans. Control Netw. Syst.,}, vol.~5, no.~2, pp. 793--806, 2018.

\bibitem{xia2020analysis}
W.~Xia, M.~Ye, J.~Liu, M.~Cao, and X.-M. Sun, ``Analysis of a nonlinear opinion dynamics model with biased assimilation,'' \emph{Automatica}, vol. 120, p. 109113, 2020.

\bibitem{Wang2022}
L.~Wang, Y.~Hong, G.~Shi, and C.~Altafini, ``Signed social networks with biased assimilation,'' \emph{IEEE Trans. Autom. Control}, vol.~67, no.~10, pp. 5134--5149, 2022.

\bibitem{mei2022micro}
W.~Mei, F.~Bullo, G.~Chen, J.~M. Hendrickx, and F.~D{\"o}rfler, ``Micro-foundation of opinion dynamics: Rich consequences of the weighted-median mechanism,'' \emph{Phys. Rev. Res.}, vol.~4, no.~2, 2022.

\bibitem{shrinate2022nonlinear}
A.~Shrinate, T.~Tripathy, and L.~Behera, ``Nonlinear opinion dynamics using disagreement laplacian flows in antagonistic networks,'' in \emph{2022 Eighth Indian Control Conference (ICC)}.\hskip 1em plus 0.5em minus 0.4em\relax IEEE, 2022, pp. 278--283.

\bibitem{UM2001}
M.~Usher and J.~L. McClelland, ``The time course of perceptual choice: The leaky, competing accumulator model,'' \emph{Psychol. Rev.}, vol. 108, no.~3, pp. 550--592, 2001.

\bibitem{bogacz2007extending}
R.~Bogacz, M.~Usher, J.~Zhang, and J.~L. McClelland, ``Extending a biologically inspired model of choice: multi-alternatives, nonlinearity and value-based multidimensional choice,'' \emph{Philos. Trans. R. Soc. Lond., B, Biol. Sci.}, vol. 362, no. 1485, pp. 1655--1670, 2007.

\bibitem{introne2023measuring}
J.~Introne, ``Measuring belief dynamics on twitter,'' in \emph{Proc. of the Int. AAAI Conf. on Web and Social Media}, vol.~17, 2023, pp. 387--398.

\bibitem{rodriguez2016collective}
N.~Rodriguez, J.~Bollen, and Y.-Y. Ahn, ``Collective dynamics of belief evolution under cognitive coherence and social conformity,'' \emph{PLoS one}, vol.~11, no.~11, 2016.

\bibitem{baumann2020modeling}
F.~Baumann, P.~Lorenz-Spreen, I.~M. Sokolov, and M.~Starnini, ``Modeling echo chambers and polarization dynamics in social networks,'' \emph{Phys. Rev. Lett.}, vol. 124, no.~4, 2020.

\bibitem{siegenfeld2020negative}
A.~F. Siegenfeld and Y.~Bar-Yam, ``Negative representation and instability in democratic elections,'' \emph{Nat. Phys.}, vol.~16, no.~2, pp. 186--190, 2020.

\bibitem{baumann2021emergence}
F.~Baumann, P.~Lorenz-Spreen, I.~M. Sokolov, and M.~Starnini, ``Emergence of polarized ideological opinions in multidimensional topic spaces,'' \emph{Physical Review X}, vol.~11, no.~1, p. 011012, 2021.

\bibitem{brandt2021evaluating}
M.~J. Brandt and W.~W. Sleegers, ``Evaluating belief system networks as a theory of political belief system dynamics,'' \emph{Personality and Social Psychology Review}, vol.~25, no.~2, pp. 159--185, 2021.

\bibitem{gajewski2022transitions}
{\L}.~G. Gajewski, J.~Sienkiewicz, and J.~A. Ho{\l}yst, ``Transitions between polarization and radicalization in a temporal bilayer echo-chamber model,'' \emph{Phy. Rev. E}, vol. 105, no.~2, p. 024125, 2022.

\bibitem{li2022modeling}
L.~Li, A.~Zeng, Y.~Fan, and Z.~Di, ``Modeling multi-opinion propagation in complex systems with heterogeneous relationships via {P}otts model on signed networks,'' \emph{Chaos: An Interdisciplinary Journal of Nonlinear Science}, vol.~32, no.~8, 2022.

\bibitem{ojer2023modeling}
J.~Ojer, M.~Starnini, and R.~Pastor-Satorras, ``Modeling explosive opinion depolarization in interdependent topics,'' \emph{Phys. Rev. Lett.}, vol. 130, no.~20, p. 207401, 2023.

\bibitem{noutsos2006perron}
D.~Noutsos, ``On {P}erron--{F}robenius property of matrices having some negative entries,'' \emph{Linear Algebra Appl.}, vol. 412, no. 2-3, pp. 132--153, 2006.

\bibitem{horn2012matrix}
R.~A. Horn and C.~R. Johnson, \emph{Matrix Analysis}.\hskip 1em plus 0.5em minus 0.4em\relax Cambridge University Press, 2012.

\bibitem{zaslavsky2013matrices}
T.~Zaslavsky, ``Matrices in the theory of signed simple graphs,'' \emph{Proc. ICDM 2008, RMS-Lecture Notes Series}, no.~13, pp. 207--229, 2010.

\bibitem{belardo2019open}
F.~Belardo, S.~M. Cioab{\u{a}}, J.~H. Koolen, and J.~Wang, ``Open problems in the spectral theory of signed graphs,'' \emph{Art Discrete Appl. Math.}, vol.~1, no. P2.10, pp. 1--23, 2018.

\bibitem{zaslavsky1982signed}
T.~Zaslavsky, ``Signed graphs,'' \emph{Discrete Appl. Math.}, vol.~4, no.~1, pp. 47--74, 1982.

\bibitem{ghorbani2020sign}
E.~Ghorbani, W.~H. Haemers, H.~R. Maimani, and L.~P. Majd, ``On sign-symmetric signed graphs,'' \emph{arXiv preprint arXiv:2003.09981}, 2020.

\bibitem{altafini2014predictable}
C.~Altafini and G.~Lini, ``Predictable dynamics of opinion forming for networks with antagonistic interactions,'' \emph{IEEE Trans. Autom. Control}, vol.~60, no.~2, pp. 342--357, 2014.

\bibitem{jiang2016sign}
Y.~Jiang, H.~Zhang, and J.~Chen, ``Sign-consensus of linear multi-agent systems over signed directed graphs,'' \emph{IEEE Trans. Ind. Electron.}, vol.~64, no.~6, pp. 5075--5083, 2016.

\bibitem{fontan2021role}
A.~Fontan and C.~Altafini, ``The role of frustration in collective decision-making dynamical processes on multiagent signed networks,'' \emph{IEEE Trans. Autom. Control}, vol.~67, no.~10, pp. 5191--5206, 2021.

\bibitem{bizyaeva2022switching}
A.~Bizyaeva, G.~Amorim, M.~Santos, A.~Franci, and N.~E. Leonard, ``Switching transformations for decentralized control of opinion patterns in signed networks: application to dynamic task allocation,'' \emph{IEEE Control Syst. Lett.}, vol.~6, pp. 3463--3468, 2022.

\bibitem{bizyaeva2022sustained}
A.~Bizyaeva, A.~Franci, and N.~E. Leonard, ``Sustained oscillations in multi-topic belief dynamics over signed networks,'' in \emph{Proc. ACC}.\hskip 1em plus 0.5em minus 0.4em\relax IEEE, 2023, pp. 4296--4301.

\bibitem{wang2020biased}
L.~Wang, Y.~Hong, G.~Shi, and C.~Altafini, ``A biased assimilation model on signed graphs,'' in \emph{Proc. IEEE CDC}.\hskip 1em plus 0.5em minus 0.4em\relax IEEE, 2020, pp. 494--499.

\bibitem{converse2006nature}
P.~E. Converse, ``The nature of belief systems in mass publics (1964),'' \emph{Crit. Rev.}, vol.~18, no. 1-3, pp. 1--74, 2006.

\bibitem{leung2008psychological}
K.~Leung and M.~H. Bond, \emph{Psychological aspects of social axioms: Understanding global belief systems}.\hskip 1em plus 0.5em minus 0.4em\relax Springer Science \& Business Media, 2008.

\bibitem{golubitsky2009bifurcations}
M.~Golubitsky and R.~Lauterbach, ``Bifurcations from synchrony in homogeneous networks: linear theory,'' \emph{SIAM J. Appl. Dynamical Syst.}, vol.~8, no.~1, pp. 40--75, 2009.

\bibitem{stewart2011synchrony}
I.~Stewart and M.~Golubitsky, ``Synchrony-breaking bifurcation at a simple real eigenvalue for regular networks 1: 1-dimensional cells,'' \emph{SIAM J. Appl. Dynamical Syst.}, vol.~10, no.~4, pp. 1404--1442, 2011.

\bibitem{stewart2014synchrony}
I.~Stewart, ``Synchrony-breaking bifurcation at a simple real eigenvalue for regular networks 2: Higher-dimensional cells,'' \emph{SIAM J. Appl. Dynamical Syst.}, vol.~13, no.~1, pp. 129--156, 2014.

\bibitem{nijholt2019center}
E.~Nijholt, B.~Rink, and J.~Sanders, ``Center manifolds of coupled cell networks,'' \emph{SIAM Rev.}, vol.~61, no.~1, pp. 121--155, 2019.

\bibitem{golubitsky2023dynamics}
M.~Golubitsky and I.~Stewart, \emph{Dynamics and Bifurcation in Networks}.\hskip 1em plus 0.5em minus 0.4em\relax SIAM, 2023.

\bibitem{matcont}
A.~Dhooge, W.~Govaerts, Y.~Kuznetsov, H.~Meijer, and B.~Sautois, ``New features of the software {MatCont} for bifurcation analysis of dynamical systems,'' \emph{Math. Comp. Model. Dyn.}, vol.~14, no.~2, pp. 147--175, 2008.

\bibitem{Golubitsky1985}
M.~Golubitsky and D.~G. Schaeffer, \emph{Singularities and Groups in Bifurcation Theory}, ser. Applied Mathematical Sciences.\hskip 1em plus 0.5em minus 0.4em\relax New York, NY: Springer-Verlag, 1985, vol.~51.

\bibitem{fink1993oscillation}
E.~L. Fink, S.~Kaplowitz, G.~Barnett, and W.~Richards, ``Oscillation in beliefs and cognitive networks,'' \emph{Progress in Communication Sciences}, vol.~12, pp. 247--272, 1993.

\bibitem{fink2002oscillation}
E.~L. Fink, S.~A. Kaplowitz, and S.~M. Hubbard, ``Oscillation in beliefs and decisions,'' \emph{The Persuasion Handbook: Developments in Theory and Practice. Thousand Oaks, CA: Sage Publications}, pp. 17--38, 2002.

\bibitem{chung2012sequential}
S.~Chung, E.~L. Fink, L.~Waks, M.~F. Meffert, and X.~Xie, ``\edits{Sequential information integration and belief trajectories: An experimental study using candidate evaluations},'' \emph{Communication Monographs}, vol.~79, no.~2, pp. 160--180, 2012.

\bibitem{belleau1987cyclical}
B.~D. Belleau, ``Cyclical fashion movement: Women's day dresses: 1860-1980,'' \emph{Cloth. Text. Res. J.}, vol.~5, no.~2, pp. 15--20, 1987.

\bibitem{stimson2018public}
J.~Stimson, \emph{Public Opinion in America: Moods, Cycles, and Swings}.\hskip 1em plus 0.5em minus 0.4em\relax Routledge, 2018.

\bibitem{guckenheimer2013nonlinear}
J.~Guckenheimer and P.~Holmes, \emph{Nonlinear Oscillations, Dynamical Systems, and Bifurcations of Vector Fields}.\hskip 1em plus 0.5em minus 0.4em\relax New York, NY: Springer-Verlag, 2013, vol.~42.

\bibitem{AnastasiaThesis}
A.~S. Bizyaeva, ``Nonlinear dynamics of multi-agent multi-option belief and opinion formation,'' Ph.D. dissertation, Princeton University, 2022.

\bibitem{GolubitskySymmetryPerspective}
M.~Golubitsky and I.~Stewart, \emph{The Symmetry Perspective}, 1st~ed., ser. Progress in Mathematics.\hskip 1em plus 0.5em minus 0.4em\relax New-York: Birkh{\"a}user Basel, 2002, vol. 200.

\bibitem{Khalil2002}
H.~Khalil, \emph{Nonlinear Systems}, 3rd~ed.\hskip 1em plus 0.5em minus 0.4em\relax Prentice Hall, 2000.

\bibitem{li2023signed}
D.~Li and Q.~Huang, ``The signed graphs with symmetric spectra,'' \emph{arXiv:2304.06864 [math.CO]}, 2023.

\end{thebibliography}

\end{document}